\definecolor{dark-blue}{rgb}{0,0,0.6}
\definecolor{Purple}{rgb}{0.2,0,0.25}
\newtheorem{thm}{Theorem}[section]
\newtheorem{lem}[thm]{Lemma}
\newtheorem{prop}[thm]{Proposition}
\newtheorem{defin}[thm]{Definition}
\theoremstyle{definition}
\newtheorem{expl}[thm]{Example}
\newtheorem{remark}[thm]{Remark}
\newtheorem{method}[thm]{Method}
\newcommand{\R}{\mathbb{R}}
\newcommand{\N}{\mathbb{N}}
\newcommand{\M}{\mathscr{M}}
\newcommand{\wt}{\widetilde}
\numberwithin{equation}{section}
\newcommand{\bref}[1]{\textbf{\ref{#1}}} 
\newcommand{\beqref}[1]{\textbf{(\ref{#1})}} 
\title[The projector algorithm]{The projector algorithm: a simple parallel algorithm for computing Voronoi diagrams and Delaunay graphs}
\author{Daniel Reem}
\address{The Center for Mathematics and Scientific Computation (CMSC), University of Haifa, Mt. Carmel, Haifa, 3498838,  Israel.} 
\email{dream@math.haifa.ac.il}
\keywords{Algorithm, combinatorial representation, Delaunay graph, 
parallel computing, projector, ray, subedge, subwedge, vertex, Voronoi cell, Voronoi diagram, wedge.}
\subjclass[2020]{68U05, 68W10, 65D18, 68W40, 52B05}
\date{July 10, 2023}
\begin{document}
\maketitle

\begin{abstract}
The Voronoi diagram is a certain geometric data structure which has found numerous applications in various scientific and technological fields. The theory of algorithms for computing 2D Euclidean 
Voronoi diagrams of point sites is rich and useful, with several different and important algorithms. However, this theory has been quite steady during the last few decades in the sense that no essentially new algorithms have entered the game. In addition, most of the known algorithms are serial in 
nature and hence cast inherent difficulties on  the possibility to compute the diagram in parallel. In this paper we present the projector algorithm: a new and simple algorithm which enables the (combinatorial) computation of 2D Voronoi diagrams. The algorithm is significantly different from 
previous ones and some of the involved concepts in it are in the spirit of linear programming and optics. Parallel implementation is naturally supported since each Voronoi cell (actually, even just portions of one cell) can be computed independently of the other cells. A new combinatorial structure for  representing the cells (and any convex polytope) is described along the way and the computation of the 
induced Delaunay graph is obtained almost automatically.  
\end{abstract}

\tableofcontents
\section{Introduction}\label{sec:Introduction}
\subsection{Background} 
 The Voronoi diagram is a certain geometric data structure which appears in many fields in  science and technology and has found numerous applications, among them in computer graphics, geographic information  systems (GIS), molecular biology, data analysis, astrophysics, signal processing, mesh generation, coding, computational geometry, material engineering, pure mathematics and many more areas. See, for instance, \cite{Aurenhammer,CSKM2013,ConwaySloane,VoronoiCVD_Review,VoronoiWeb,GruberLek,OBSC} for some reviews and illustrations. In its simplest and widespread form, this diagram is a certain decomposition of the Euclidean plane, or a region $X$ in the plane, into cells induced by the Euclidean distance and by a collection of $n\in\N$ distinct points $p_1,\ldots,p_n$ (called ``sites'' or ``generators'' or ``particles''). More precisely, the Voronoi cell $R_k$ associated with the site $p_k$ is the set of all the points in $X$ whose distance to $p_k$ is not greater than their distance to the other sites $p_j$, $j\neq k$.  

Because it is widely used, the Voronoi diagram, which has other names as well (e.g., the Voronoi tessellation, the Voronoi decomposition,  Dirichlet tessellation), has attracted a lot of attention during the last four decades and actually much before, especially in the case of 2-dimensional Euclidean setting  of point sites. In particular, many algorithms  for computing these diagrams in the above-mentioned setting have been published. Among them we mention the naive method \cite[pp. 230-233]{OBSC}, the divide-and-conquer method \cite{AAAHJPR2009}, \cite[pp. 251-257]{OBSC}, \cite{ShamosHoey1975}, the incremental method \cite{GreenSibson1977},  \cite{GuibasKnuthSharir}, \cite{OhyaIriMurota1984}, \cite[pp. 242-251]{OBSC},    methods based on sweep \cite{Fortune1987}, \cite[pp. 257-264]{OBSC}, \cite{XinWangXiaMueller-WittigWangHe2013jour}, methods based on geometric transforms such as convex hulls \cite{Qhull, Brown1979,Brown1980,Edelsbrunner-book-1987} (see also  \cite{Aurenhammer,ChazelleMatousek1995,ClarksonShor1989} and the references therein) or Delaunay  triangulations  \cite[Chapter 3]{ChengDeyShewchuk2012}, \cite{GuibasStolfi1985}, \cite[pp. 275-80]{OBSC}, methods based on lower envelopes \cite{SetterSharirHalperin2010},\cite[p. 241]{SharirAgarwal}  and methods for very specific  configurations   \cite{AggarwalGuibasSaxeShor}.

It can be seen that the theory of algorithms for computing 2D Euclidean Voronoi diagrams of point sites is rich and useful, with many different and important algorithms and analyses. However, this theory has been quite steady during the last decades in the sense that no essentially new algorithms have entered the game (though several valuable improvements or variations of known algorithms have appeared).  

Another property of this theory is that most of the known algorithms are sequential (serial) in nature and they cannot compute each of the Voronoi cells independently of the other ones. Instead, they consider the diagram as a combinatorial structure and compute it as a whole in a sequential way. This fact casts inherent difficulties on any attempt to implement these algorithms in a parallel computing environment. It is therefore not surprising to see claims such as ``Parallelizing algorithms in computational geometry usually is a complicated task since many of the techniques used (incremental insertion or plane sweep, for instance) seem inherently sequential'' \cite[p. 367]{Aurenhammer}, or ``It  is  seldom  obvious  how  to  generate  parallel  algorithms  in  this  area [computational geometry] since popular  techniques  such  as  contour  tracing,  plane  sweeping,  or  gift  wrapping involve  an  explicitly  sequential   (iterative)  approach''  \cite[p. 293]{ACGOY1988}, or ``our technique, like all previous deterministic parallel algorithms, is based on the serial algorithm due to Shamos and Hoey''  \cite[p. 570]{ColeGoodrichODunlaing1996}. 

Nevertheless, starting from Chow \cite{Chow1980} a corresponding theory for parallelizing the computation of Voronoi diagrams has been developed  \cite{ACGOY1988, BermanLingas1997, ColeGoodrichODunlaing1996, EvansStojmenovic1989, GoodrichODunlaingYap1993, HagerapKatajainen1993, 
HALH2005, LeeJou1995, LevcopoulosKatajainenLingas1988, MacKenzieStout1998,   
RajasekaranRamaswami2002, Roos1994conf} and has been extended to related geometric 
structures such as the Delaunay triangulation and convex hulls \cite{AmatoGoodrichRamos1994, BHMT1999, DadounKirkpatrick, DFR-C1996, FragakisOnate2008, Goodrich1987jour, Meyerhenke2005, ReifSen1992, Schwarzkopf1989, SpielmanTengUngor2007, TrefftzSzakas2003, VemuriVaradarajanMayya1992}. See \cite{AtallahChen2000, Goodrich2004, Ramaswami1998, Sen1989} and \cite[pp. 367-369]{Aurenhammer} for a few surveys.  

In the works mentioned above the idea is to somehow share the work between the many processing units (namely processors, cores, etc.), under certain assumptions on the computational model.  Unfortunately, the above-mentioned sequential nature of the involved algorithms complicates the implementation of many of these parallel algorithms. In addition, a common assumption in the above-mentioned works is that there are many processing units, e.g., $O(n)$ or $O(n/\log(n))$, where $n$ is the number of sites. This assumption seems to cast difficulties on a practical implementation when the number of sites is much larger than the number of processing units, as frequently happens in real world applications. Additional assumptions which are often imposed are that the sites do no form degenerate configurations (e.g., no four sites are located on the same circle,  no two sites have the same first or second coordinate, etc.), that $n$ is a power of two, and so on. Any such an assumption complicates the implementation of the corresponding algorithms and/or limits their usage. The case of parallel algorithms for geometric structures related to Voronoi diagrams is somewhat similar: these (often theoretical) algorithms are either based on the former works or vice versa, or they use somewhat similar techniques and impose somewhat similar assumptions. 

In addition to the algorithms mentioned above, it is possible to mention other parallel algorithms of a somewhat different nature of either Voronoi diagrams or closely related geometric structures, for instance  \cite{Gonzalez2016jour,  StarinshakOwenJohnson2014jour, WangCuiRuiChengYingxiaWuYuan2014jour, WuRuiSuChengWang2014jour, XinWangXiaMueller-WittigWangHe2013jour}. However, these algorithms  also contain an inherent serial  component and they cannot compute each Voronoi cell independently of the other cells. Indeed, essentially  the parallelization is done by first performing a preprocessing stage in which the world $X$ is divided into sub-domains, then applying a known serial algorithm (such as divide-and-conquer or sweep) to each sub-domain  simultaneously, and then somehow sharing the information obtained in each sub-domain (possibly after a  merging step between neighbor sub-domains) so that the global Voronoi cells will be obtained correctly from the Voronoi cells which were obtained in the sub-domains. These algorithms are mostly suitable for certain distribution of sites (e.g., uniform distribution or distributions which are not far from being uniform in the sense that, e.g., after the original domain is divided into sub-domains, the distribution of sites in each sub-domain is roughly uniform), but they are less suitable and more complicated (and slower) for other types of distributions, for instance distributions in which the sites form highly degenerate configurations.

The motivation for developing parallel-in-nature algorithms for computational tasks stems from several natural reasons. One important reason is the ability to compute in a fast manner much larger inputs than the ones computed nowadays,  in numerous fields, or to perform in a fast way computations which require many iterations, such as Centroidal Voronoi Diagrams (CVD) \cite{DuEmelianenkoJu2006,VoronoiCVD_Review}. Another reason is that in recent years most of the computing devices (various types of computers, cell phones, graphics processors, etc.) arrive with several (sometimes with hundreds or even thousands) processing units which are just waiting to be used. Large networks of such computing devices can also be  used for parallel computing tasks.

By taking into account all of the above-mentioned considerations, it is natural to ask whether there exists an algorithm which can compute each of the Voronoi cells independently of the other ones, and hence can provide a simple way to compute the Voronoi diagram in parallel. To the best of our knowledge, only one such algorithm has been discussed in the literature, namely the naive one which computes each of the cells by intersecting corresponding halfplanes \cite[pp. 230-233]{OBSC} (see also the recent paper \cite{LuLazarRycroft2022prep} which describes an implementation of this method based on an adaptation to the 2D case of the 3D Voro++ package \cite{Rycroft2009jour}). This (very) veteran algorithm is rather simple, at least from a high-level perspective, but its time complexity is relatively slow: $O(n^2\log(n))$ for the whole diagram of $n$ sites in the worst case, assuming one processing unit is involved.  As claimed in \cite{BentleyWeideYao1980}, on the average (under the assumption of uniform distribution) its time complexity should behave as $O(n)$, but we have not seen any mathematical proof of this assertion. Similarly, no theoretical justification is given to the $O(n)$ average case time complexity of the  variation of the naive algorithm for the Delaunay triangulation which appears in \cite{ChenGotsman2013jour}.

In 2009, a new algorithm which allows the approximate computation of Voronoi diagrams in a general setting (general sites, general norms, general dimension) was published in \cite{ReemISVD2009proc}. This algorithm is based on the possibility to represent each Voronoi  cell as a union of rays (line segments), and it approximates the cells by considering a plurality of approximating rays which are shot in various directions. See Figures \bref{fig:Voronoi}--\bref{fig:VoronoiRays} for an illustration, \cite{Reem2018jour} for an application, and \cite{Vdream2017web} for an online implementation. This ``ray-shooting algorithm'' allows the approximate computation of each cell independently of the other ones. However, although in principle this algorithm may compute the combinatorial structure of the cell, to the best of our knowledge, no details regarding how to do so have been published so far; besides, it seems that even if somehow the algorithm mentioned in \cite{ReemISVD2009proc} or its output can be used to compute the combinatorial structure of the cell, then this will probably be done in a non-immediate and non-efficient way. The reason for this expected inefficiency is because one needs to detect somehow the corresponding combinatorial components, and for achieving this task many rays should be considered, and the information obtained from them should be analyzed correctly. Unfortunately, it is not clear in advance in which direction to shoot a ray such that it will hit a vertex exactly, and another issue which complicates the situation is the fact that the endpoints of the rays are computed only approximately, and hence accumulating errors can lead to a wrong conclusion regarding the  combinatorial structure. It is therefore natural 
to ask whether one can modify somehow the algorithm of \cite{ReemISVD2009proc}  in such a way that the modified algorithm will allow a simple and efficient computation of the combinatorial structure of the Voronoi cells.

\begin{figure}[t]
\begin{minipage}[t]{0.43\textwidth}
\begin{center}{\includegraphics[trim=2 2 2 2, clip=true, scale=1.1]{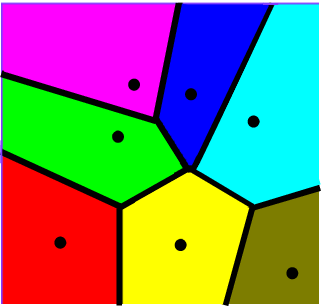}}
\end{center}
 \caption{A Voronoi diagram of 7 point sites in a square in the Euclidean plane.}
\label{fig:Voronoi}
\end{minipage}
\hfill
\begin{minipage}[t]{0.43\textwidth}
\begin{center}
{\includegraphics[trim=1 1 1 1, clip=true, scale=1.11]{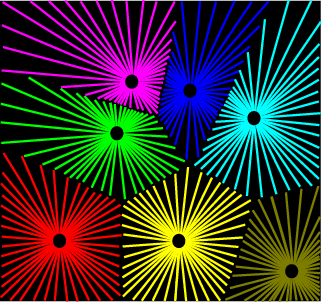}}
\end{center}
 \caption{Each of the cells of Figure  \bref{fig:Voronoi} is approximated using 44 rays.}
\label{fig:VoronoiRays}
\end{minipage}
\end{figure}

\subsection{Contribution of this work}
This paper presents and analyzes a new algorithm  which enables the combinatorial computation of 2D Euclidean Voronoi diagrams of point sites, where each cell is computed independently of the other ones. This algorithm, which is called ``the projector algorithm'', can in fact compute  even portions of the same cell independently of other portions. Parallel implementation is therefore naturally supported. The algorithm is significantly different from previous ones and some of the involved concepts in it are in the spirit of linear programming and optics. In contrast to many known algorithms in the theory of Voronoi diagrams, the sites given as an input to our algorithm do not need to be in a ``general position'' in order to avoid degenerate cases. A new combinatorial structure for representing the cells - and actually any convex polytope -  is described along the way, and the computation of the corresponding Delaunay graph (Delaunay triangulation) is obtained almost automatically. The time complexity of the algorithm, as a serial one (one processing unit) for the whole diagram, is $O(n^2)$. This upper bound on the time complexity is better than the one of the naive algorithm and it is not proved to be tight, i.e., to be $\Theta(n^2)$ (see Subsections  \bref{subsec:TimeComplexityComparison}--\bref{subsec:CommonScenarious} for further discussion on this issue). The actual behavior is in fact more or less linear when the sites are distributed uniformly (see also Theorem \bref{thm:CorrectnessOfTheAlg} below). It should be emphasized that this paper is theoretical. Issues related to implementation and experimental results are planned to be discussed elsewhere. 

\subsection{Paper layout} 
In Section  \bref{sec:Preliminaries} the notation, terminology and several tools are introduced. In Section  \bref{sec:Schematic} a schematic description of the algorithm is given. A detailed description of the algorithm is given in Section  \bref{sec:DetailedDescription}. A method for finding endpoints in an exact way is described in Section  \bref{sec:Endpoint}, and we also discuss briefly in this section an improvement of this method. The method for storing the data, as  well as a discussion on some combinatorial issues related to it, are described in Section   \bref{sec:CombinatorialInformation}. In Section  \bref{sec:Delaunay} we briefly discuss how the Delaunay graph can be extracted almost automatically from the stored data. In Section \bref{sec:HigherDimensions} we present by-products of possible independent interest, mainly a new combinatorial representation for the cells (in any dimension). The main theoretical result (Theorem \bref{thm:CorrectnessOfTheAlg}) is presented in Section \bref{sec:TimeComplextiyTheorem} and some ideas behind its proof are briefly discussed. We conclude the paper in Section  \bref{sec:ConcludeRemarks} with a  discussion on implementation issues and possible extensions of the algorithm and related issues to other settings. In order to increase the readability of the main body of the text, many relevant but rather technical issues were moved to three appendices: the first appendix (Section \bref{app:ImprovedEndpoint}) elaborates further on the methods presented in Section \bref{sec:Endpoint}, the second appendix (Section \bref{app:TheoryPractice}) discusses various  theoretical and practical issues related to Theorem \bref{thm:CorrectnessOfTheAlg}, and the third appendix (Section \bref{app:Proofs}) presents the proofs of Theorem \bref{thm:CorrectnessOfTheAlg} and related claims.

We note that large parts of the discussion below are rather detailed and full proofs are provided. We decided to do so in order to avoid the possibility of missing certain delicate points and also  in order to make the discussion as self-contained as possible. In addition, sometimes many details appear simply because several statements and methods  require a lot of case analysis. 

\section{Preliminaries}\label{sec:Preliminaries}
In this section we present the notation and basic definitions used later, as well as some helpful tools. Our world $X$ is a convex and compact polygon 
in the Euclidean plane $(\R^2,|\cdot|)$ with a nonempty interior, say a rectangle. Of course, $X$ is obtained from the intersection of finitely many half-planes. We  denote by $d(x,y)$ or $|x-y|$ the distance between the points $x\in \R^2$ and $y\in \R^2$. We denote by $[p,x]$ and $[p,x)$ the closed and half-open line segments connecting $p$ and $x$, respectively, i.e., the sets $\{p+t(x-p): t\in [0,1]\}$ and $\{p+t(x-p): t\in [0,1)\}$, respectively. The inner product between the points (vectors) $x=(x_1,x_2)$ and $y=(y_1,y_2)$ is $\langle x,y\rangle :=x_1y_1+x_2y_2$. A nonnegative linear wedge emanating from a point $p\in \R^2$ and generated by the vectors $v\in \R^2$ and $w\in\R^2$ is the set $\{p+\lambda v+\mu w: \lambda\geq 0, \mu\geq 0\}$. Of course, such a wedge is two-dimensional whenever there does not exists $\lambda\geq 0$ such that $v=\lambda w$ or $w=\lambda v$, namely whenever $v$  and $w$ are not located on the same ray emanating from the origin. We denote lines by $L$, $M$, etc. An edge (namely, a side) of a  convex polygon located on a corresponding  line $L$ is denoted by $\wt{L}$. In the context of Voronoi diagrams (Definition \bref{def:Voronoi} below and elsewhere), we refer to the inducing points as sites or generators, and denote them by $p_k$, $k\in K:=\{1,\ldots,n\}$, where $n\in\N$ is given as an input. These sites are assumed to be contained in the interior of $X$ and also to be distinct, namely $p_j\neq p_k$ for all $j,k\in K$ satisfying $j\neq k$. 

Here is the definition of the Voronoi diagram. This definition can easily be generalized to other settings, e.g., to spaces of higher dimension, to sites located anywhere in $X$ (possibly on the boundary of $X$), to the case of sites having a more general form than just point sites, to sites which have non-trivial intersection between themselves (possibly even identical sites), to various distance functions and so on, but in this paper we focus on the 2D Euclidean setting with distinct point sites located in the interior of the world $X$.  
\begin{defin}\label{def:Voronoi}
Given $n\in\N$ and a tuple of distinct point sites $(p_k)_{k=1}^n$ in our world $X$, the Voronoi diagram induced by these sites is the tuple $(R_k)_{k=1}^n$, where $R_k\subseteq X$ and for each $k\in K:=\{1,\ldots,n\}$,
\begin{equation*}
R_k:=\{x\in X: d(x,p_k)\leq d(x,p_j)\,\,\,\,\,\forall j\in K,\, j\neq k \}.
\end{equation*}
 In other words,  the Voronoi cell $R_k$ associated with the site $p_k$ is the set of all points 
 $x\in X$ whose distance to $p_k$ is not greater than their distance to the other sites $p_j$, $j\neq k$, $j,k\in K$.
\end{defin} 
The definition of the Voronoi diagram is analytic. However, it can be easily seen that each cell $R_k$ is 
the intersection of the world $X$ with halfplanes: the halfplanes  $\{x\in \R^2: d(x,p_k)\leq d(x,p_j\},\,j\neq k$, $j,k\in K$. 
Thus each cell is a closed and convex set 
which can be represented using its combinatorial structure, namely its vertices and edges (sides). 
Because of this property, the traditional approach to Voronoi diagrams is combinatorial.

In \cite{ReemISVD2009proc}, a different representation of the cells was introduced, suggesting 
to consider each of the cells as a union of rays (lines segments). An illustration of this representation is given in Figure \bref{fig:VoronoiRays} and is formulated mathematically in Theorem \bref{thm:domInterval} below. This representation is related 
to, but different from, the fact that the Voronoi cells are star-shaped. The theory of 
Voronoi diagrams in general and of algorithms for computing Voronoi diagrams in particular, 
is very diverse, with plenty of interesting and important facts and ideas. In particular, the star-shaped 
property of the cells is a well-known fact. However, to the best of our knowledge, and this is said after an extensive search that we have made in the literature (for many years) and after conversations with, or in front of, many experts, in various scientific and technological  fields, there has been no attempt to use any kind of 
ray-shooting techniques to compute (possibly approximately) the Voronoi cells. 
\begin{thm}\label{thm:domInterval}
The Voronoi cell $R_k$ of a site $p=p_k$ is a union of rays emanating from $p$ in 
various directions. More precisely, denote $A:=\bigcup_{j\neq k}\{p_j\}$. Given a unit vector $\theta$, let 
\begin{equation}\label{eq:Tdef}
T(p,\theta):=\sup\{t\in [0,\infty): p+t\theta\in X\,\,\mathrm{and}\,\ 
 d(p+t\theta,p)\leq d(p+t\theta,A)\},
\end{equation}
where $d(x,A):=\inf \{d(x,a): a\in A\}$. We refer to the point $p+T(p,\theta)\theta$ as the endpoint corresponding to the ray 
emanating from $p$ in the direction of $\theta$.  Then 
\begin{equation*}\label{eq:dom}
R_k=\bigcup_{|\theta|=1}[p,p+T(p,\theta)\theta].
\end{equation*}
(Note: in \cite{ReemISVD2009proc} a slightly different notation was used for the endpoint corresponding to the ray emanating from $p$ in the direction of $\theta$: instead of $p+T(p,\theta)\theta$, as in \beqref{eq:Tdef}, the notation was $p+T(\theta,p)\theta$.)
\end{thm}

 This representation actually holds (after simple modifications) in a more general setting 
 (any norm, any dimension, sites  of a general form, etc.) and it shows that 
by ``shooting'' enough rays in plurality of directions, one can obtain a fairly good approximation of the cells, as is illustrated in  Figures \bref{fig:Voronoi}--\bref{fig:VoronoiRays}. It will be shown in later sections how the idea of shooting rays can be used for obtaining the combinatorial structure of the cells.

\section{A schematic description of the projector algorithm}\label{sec:Schematic}
The projector algorithm is rather simple, at least from a high level point of view, as we show below (Method \bref{method:projectorEuclidean}). Of course, a detailed description of the algorithm (later in this section,  Sections \bref{sec:DetailedDescription}--\bref{sec:Endpoint}, Sections \bref{app:ImprovedEndpoint}--\bref{app:TheoryPractice}) is more involved, but one can say something in this spirit regarding many other algorithms as well.

Anyway, the method is based on the fact that the cell of some point site $p=p_k$ is a convex polygon whose boundary consists of vertices and edges. Some of the involved concepts in this algorithm are in the spirit of optics and linear programming. Recall again that for a unit vector $\theta$, the point $p+T(p,\theta)\theta$ is the endpoint corresponding to the ray emanating from $p$ in the direction of $\theta$ (see Figure  \bref{fig:VoronoiRays}). Recall also that we assume that all the sites are 
distinct points which are given as an input, and no site is located on the boundary of the world $X$. 

\begin{method} {\bf \it The Projector Algorithm: a high level description} \label{method:projectorEuclidean}
\begin{itemize}
\item {\bf  Input: } A site $p$;
\item {\bf  Output: } The (combinatorial) Voronoi cell of $p$; 
\end{itemize}
\begin{enumerate}
\item Think of $p$ as being a light source;
\item emanate a (linear wedge-like) beam of light from $p$ using a projector;
\item \label{item:StepVertex} detect iteratively (by possibly dividing the wedge into subwedges) 
all possible vertices (and additional related combinatorial information) inside this 
beam using corresponding endpoints and an associated system of equations; 
\item continue the process with other beams until the entire world around $p$ is covered; 
\end{enumerate}
\end{method}

\begin{figure}[t]
\begin{minipage}[t]{1\textwidth}
\begin{center}
{\includegraphics[scale=0.7]{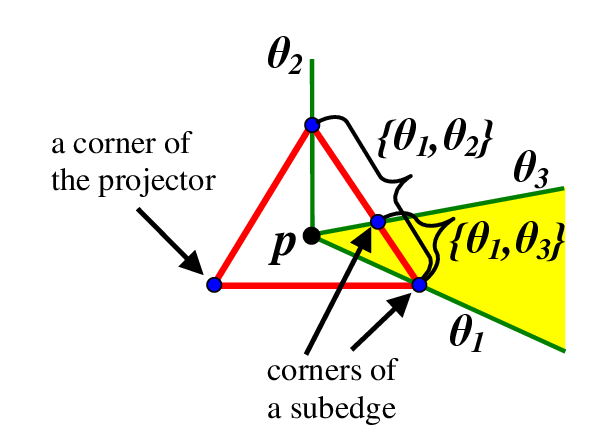}}
\end{center}
 \caption{The projector, some of its subedges, and the beam emanating from $p$ and corresponding to 
 the subfedge $\{\theta_1,\theta_3\}$.}
\label{fig:SubedgeSubwedge}
\end{minipage}
\end{figure}

The actual generation and handling of the (sub)wedges is done using a projector (triangle) 
located around $p$. See Figure \bref{fig:SubedgeSubwedge}. The boundary of this 
projector is initially composed of edges, and later these edges are composed of subedges,  
when we narrow the search to subwedges (sub-beams). 
Each such a subedge induces a wedge:  the wedge generated 
by the rays which pass via the corners of the subedge. 
Each such a corner induces a unit vector (denoted by $\theta$) which points in 
its direction (from $p$). Once the corresponding unit vector is known, then so is 
the ray in its direction. We use a simple data structure 
called $EdgeQueue$ which maintains a dynamic list of subedges so that the 
whole implementation is based on loops instead of recursive programming.
Initially there are three subedges (the ones of the projector) and later subedges 
are added or omitted. We represent a subwedge using its corners (vertices); for instance, in Figure \bref{fig:SubedgeSubwedge} the right side of the projector is represented by $\{\theta_1,\theta_2\}$, and later it is subdivided into the subwedges $\{\theta_1,\theta_3\}$ and $\{\theta_3,\theta_2\}$.

The system of equations mentioned above (Step \beqref{item:StepVertex}) is 
\begin{equation}\label{eq:B_lambda}
B \lambda = H.
\end{equation}
Here the vector of unknowns is $\lambda=(\lambda_1,\lambda_2)$;   
$B$ is the $2$ by $2$ matrix with entries  $B_{ij}=\langle N_i,T_j\rangle$ and       
 $H$ is a 2D vector with entries $H_i=\langle N_i,T_i\rangle$,  $i,j \in \{1,2\}$; here 
 $T_i:=T(p,\theta_i)\theta_i$, namely, $T_i$ is the vector in the direction of $\theta_i$ whose length is  the distance between $p$ and the endpoint $p+T_i$, $i\in \{1,2\}$; the $2$-dimensional vector $N_i$ is a normal to the 
line  $L_i:=\{x\in\R^2: \,\langle N_i,x\rangle=\langle N_i,p+T_i\rangle\}$ on which 
the endpoint $p+T_i$ is located. Of course, solving the linear 2 by 2 system of 
equations \beqref{eq:B_lambda} is a simple task, either 
in an exact way (exact arithmetic) or using floating point arithmetic. 

Equation \beqref{eq:B_lambda} has a simple geometric meaning: the point $u:=p+\sum_{i=1}^2\lambda_i T_i$ is in the intersection of the lines  $L_1$ and $L_2$ if and only if $\lambda$ solves \beqref{eq:B_lambda}. If we want to  restrict ourselves to the wedge generated by the corresponding rays, then we consider only the nonnegative solutions of \beqref{eq:B_lambda}, i.e., $\lambda_i\geq 0$ for $i=1, 2$. If equation  \beqref{eq:B_lambda} has a unique nonnegative solution $\lambda$, then this means that $u$ is a point in the wedge which is a candidate  to be a vertex of  the cell, since it may be (but is not necessary) in the intersection of the corresponding two different edges located on the lines $L_i$. If, in addition, $u$ is known to be in the cell, then it is indeed a vertex.

\begin{remark}
To the best of our knowledge, Method  \bref{method:projectorEuclidean} above, as well as Algorithm 1 below, are new. There are many differences between 
it and existing algorithms, e.g., its ability to compute each cell or parts of a cell independently of the other cells, its use of rays and wedges, the fact that it 
first detects edges of a cell and later vertices, etc. 

In this connection, we want to say a few words regarding a technique called  ``geometric  probing'', which, although it does not  consider Voronoi diagrams, it still uses rays in order to detect boundary of geometric objects  
 (see, e.g.,  \cite{ABY1987,ColeYap1987}). At first glance one may think that the papers which discuss this technique do have some relation to our method. However, it can be verified quickly  that the settings and methods described 
there are significantly different from the method described in our paper. Moreover, the actual relation of these papers to our paper is quite weak and besides, our paper has not been inspired by them (we have become aware of them years after developing the ideas described in our paper).

For instance, the probing done in \cite[Fig. 1, page 162]{ABY1987} 
is performed by a robot, which goes, from the outside, 
around the boundary of the given 2D object, and use its arms (or an optical 
device) in order to touch (probe) the boundary of the object. By this way it 
obtains a collection of points from the boundary and denotes this collection by $P$. 
Denote by $L$ the collection of rays emanating from the 
various locations of the robot when it goes around the object and 
probes the geometric object. 
Then, as is written in \cite[p. 163]{ABY1987}: ``the aim is to join the points of $P$ without intersecting 
the rays of $L$, in order to find a polygonal approximation of the object boundary''. 
Similar things can be said regarding \cite{ColeYap1987}. 
In comparison, in our method all the rays emanate from a unique point (the site), 
this point is located inside the geometric object (the Voronoi cell), 
the computation of the endpoints is not trivial (the boundary of the 
object is not known, in contrast to the geometric probing case), and the 
goal is not to connect the endpoints in order to obtain a 
polygonal approximation of the boundary of the Voronoi cell, but rather 
to use these endpoints in order to detect edges and later vertices of the cell. 
\end{remark}

\section{A detailed description of the projector algorithm}\label{sec:DetailedDescription}

\begin{figure}[t]
\begin{minipage}[t]{0.48\textwidth}
\begin{center}
{\includegraphics[scale=0.55]{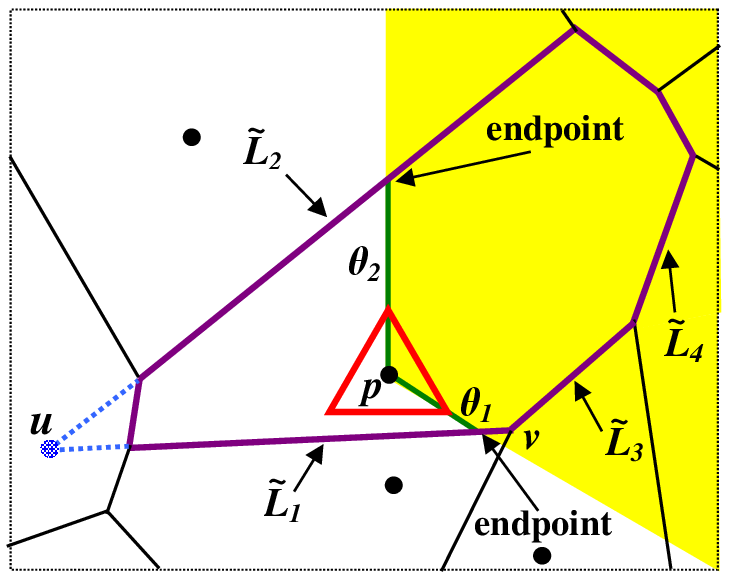}}
\end{center}
 \caption{Illustration of the algorithm. The wedge generated by  the subedge $\{\theta_1,\theta_2\}$ is shown.  The intersection between $L_1$ and $L_2$ (the lines on which the edges $\tilde{L}_1$ and $\tilde{L}_2$ are located) is the point $u$ located outside the wedge; hence the wedge is divided. The next two subedges are $\{\theta_1,\theta_3\},\,\{\theta_2,\theta_3\}$. }
\label{fig:projectorEuclidWedgePhase1}
\end{minipage}
\hfill
\begin{minipage}[t]{0.48\textwidth}
\begin{center}
{\includegraphics[scale=0.54]{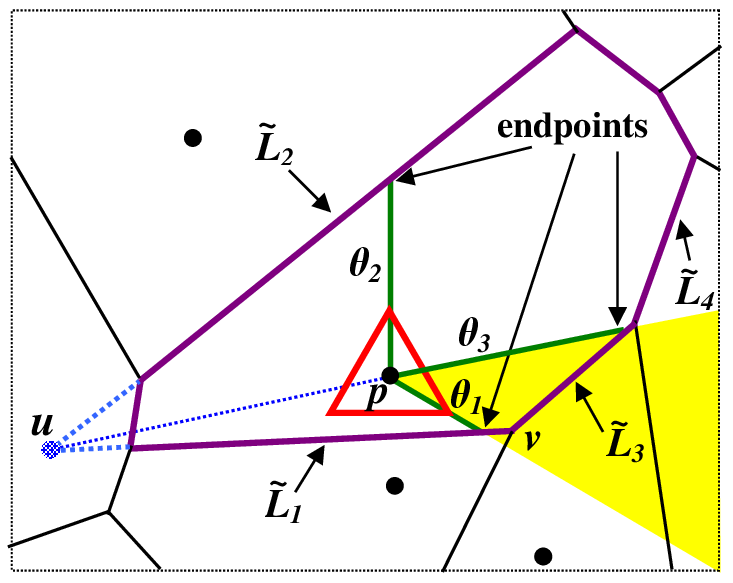}}
\end{center}
 \caption{Now the wedge generated by the subedge $\{\theta_1,\theta_3\}$ is shown. 
 Since $v=\tilde{L}_1\cap \tilde{L}_3$ and $v$ is located in the wedge and in the cell, $v$ is a vertex 
 and we do not need to further divide this wedge.  }
\label{fig:projectorEuclidWedgePhase2}
\end{minipage}

\end{figure}

This section presents a more detailed description of the projector algorithm, namely of Method \bref{method:projectorEuclidean}. A pseudocode is included too. 
See Figures \bref{fig:projectorEuclidWedgePhase1}--\bref{fig:projectorEuclidWedgePhase2} for an illustration. Additional relevant illustration is given in Figure \bref{fig:AlgorithmTree}. 
The description below and the one given in Algorithm 1 use the same notation. 

\footnotesize
\begin{algorithm}\label{alg:projectorEuclidean}

\caption{The Projector Algorithm: a detailed pseudocode}

\SetKwData{Left}{left}\SetKwData{This}{this}\SetKwData{Up}{up}
\SetKwFunction{Union}{Union}\SetKwFunction{FindCompress}{FindCompress}
\SetKwInOut{Input}{input}\SetKwInOut{Output}{output}
\Input{A site $p$ whose Voronoi cell is to be computed, the other sites}
\Output{The vertices and edges of the cell, other information}
\BlankLine 

Create the projector unit vectors\;
Create the projector edges and enter them into $EdgeQueue$\;

\While {EdgeQueue is nonempty}{
Consider the highest (first) subedge in $EdgeQueue$\; 
Denote it by $\{\theta_1,\theta_2\}$\; 
Compute the endpoints $p+T_i$\label{line:Endpoint} (see Methods \bref{method:Endpoint} and \bref{method:EndpointImproved}), where $T_i:=T(p,\theta_i)\theta_i$, $i=1,2$\;   
Find their neighbor sites $a_i,\, i=1,2$ (see Methods \bref{method:Endpoint} and  \bref{method:EndpointImproved})\; 
Compute the bisector line $L_i$ between $p$ and $a_i,\, i=1,2$ (see Methods \bref{method:Endpoint} and \bref{method:EndpointImproved})\;
If no such a site $a_i$ exists, then $p+T_i$ is on the boundary of the world. Call the corresponding boundary line  $L_i$\;

Consider the system of equations \beqref{eq:B_lambda} \\

\If(\tcp*[f]{no solution or infinitely many}){$\det({B})=0$} 
{
\If (\tcp*[h]{no vertices here}\label{line:SameLines}) {$L_1=L_2$} {continue$;$}
\Else (\tcp*[h]{parallel lines}\label{line:ParallelLines})
{Let $\theta_3:=\phi$, where $\phi$ is the direction vector of the lines\;
 If the ray from $p$ in the direction of $\phi$ is not in the wedge, then let $\theta_3:=-\phi$\;
 Insert the subedges $\{\theta_1, \theta_3\},\, 
\{\theta_2, \theta_3\}$ into $EdgeQueue$\;\label{line:DivideParallelLines}
}
}

\Else (\tcp*[h]{$\det(B)\neq 0$, unique solution $\lambda=(\lambda_1,\lambda_2)$})
{Let $u:=p+\lambda_1 T_1+\lambda_2 T_2$\;
 \If (\tcp*[h]{we're in the wedge}) {$\lambda$ is nonnegative} 
 {\If {$u$ is inside the cell\label{line:Vertex}}
  { Store $u$, $L_1,L_2$ (and/or neighbor sites); \label{line:Store}\tcp*[h]{$u$ is a vertex}
  }
  \Else (\tcp*[h]{$u$ is outside the cell}\label{line:OutsideCell})
  { Let $\theta_3:=(u-p)/|u-p|$\;\label{line:Theta3OutsideCell}
   Insert $\{\theta_1, \theta_3\},\,\{\theta_2, \theta_3\}$ 
   into $EdgeQueue$\; \label{line:DivideOutsideCell}       
  }
 }
 \Else (\tcp*[h]{$u$ isn't in the wedge}\label{line:OutsideWedge})
 {Let $\theta_3:=(p-u)/|p-u|$\; \label{line:Theta3OutsideWedge} 
  Insert $\{\theta_1, \theta_3\},\,\{\theta_2, \theta_3\}$ 
  into $EdgeQueue$\;\label{line:DivideOutsideWedge}
 }
}
Remove $\{\theta_1,\theta_2\}$ from $EdgeQueue$\;
}

\end{algorithm}
\normalsize


First, we create the three unit vectors $\theta_i$, $i\in \{1,2,3\}$ corresponding to a projector  around the point $p$. 
After choosing a projector subedge, shooting the two rays in the direction of 
$\theta_i,\,i=1,2$, finding the endpoints $p+T_i$  (using, e.g., Method \bref{method:Endpoint} 
in Section \bref{sec:Endpoint}; here $T_i:=T(p,\theta_i)\theta_i$) and finding the corresponding 
bisecting lines $L_i$, we want to use this information for finding all of the possible vertices 
in the wedge generated by the rays. By using equation \beqref{eq:B_lambda} we find the type of 
intersection between the lines $L_1,L_2$. This intersection is either the empty set, a point, or a line.

If \beqref{eq:B_lambda}  has no solution of any kind (including solutions which are not non-negative), then 
the lines $L_1$ and $L_2$ are parallel. This is a rare event but it must be taken into account.  
In this case $L_1$ and $L_2$ have the same direction vector $\phi$, 
i.e., $L_i=\{q_i+\phi t: t\in \R\}$ for some $q_i\in \R^2$, $i=1,2$ and some unit vector $\phi$. 
We check if the ray emanating from $p$  in the direction of $\phi$ is inside the wedge 
(this happens if and only if the solution $(\alpha_1,\alpha_2)$ to the linear equation 
$\phi=\alpha_1\theta_1+\alpha_2\theta_2$ is nonnegative). If yes, then we denote $\theta_3:=\phi$ and shoot a ray in the direction of $\theta_3$. Otherwise, we denote $\theta_3:=-\phi$ and shoot the ray in the direction of $\theta_3$. In both cases this ray will be contained in the wedge and will hit an edge of the cell 
not located on the lines $L_1$ and $L_2$ (the edge may be located on the boundary of the bounded world $X$). 
We divide the current projector subedge using $\theta_3$ and 
continue the process.  

If \beqref{eq:B_lambda}  has infinitely many solutions, then $L_1=L_2$ (and vice versa). Hence both endpoints are located on the same line. In this case there is no vertex  in the corresponding wedge (perhaps one of the endpoints $p+T_i$ is a vertex,  but this vertex will be found later using the neighbor subedge: see Lemma  \bref{lem:FoundVerticesEndpoint}). Hence we can finish with the current subedge and go to the other ones. Such a case is implicit in Figure  \bref{fig:projectorEuclidWedgePhase1} when the rays are shot in the directions of the first and third corners of the projector and hit $L_1$. 

If \beqref{eq:B_lambda}  has a unique solution  $\lambda=(\lambda_1,\lambda_2)$, then either it is not nonnegative, that is, the point $u:=p+\sum_{i=1}^2\lambda_i T_i$ is not in the wedge, or $\lambda$ is nonnegative, i.e., $u$ is in the wedge. In the first case the ray emanating from $p$ in the direction of $\theta_3:=(p-u)/|p-u|$ will hit an edge of the cell contained in the wedge but not located on $L_1$ or $L_2$.  Such a case is described  in Figure \bref{fig:projectorEuclidWedgePhase1} when considering the subedge $\{\theta_1,\theta_2\}$.  We divide the current projector subedge using $\theta_3$ and continue the process.  In the second case $u$ is in the wedge, but we should check whether $u$ is in the cell (can be checked, for instance, by distance comparisons). If $u$ is in the cell (a case corresponding to the case of the subedge $\{\theta_1,\theta_3\}$ in Figure \bref{fig:projectorEuclidWedgePhase2}, where $u=v$ there), then it is a vertex and we store it (together with other data: see Section \bref{sec:CombinatorialInformation}). We have finished with the subedge and can go to the other ones. Otherwise $u$ is not a vertex, and we actually found a new edge of the cell corresponding to the ray in the direction of $\theta_3:=(u-p)/|u-p|$ (this case is implicit in Figure  \bref{fig:projectorEuclidWedgePhase1} when one  considers the subedge $\{\hat{\theta}_2,\hat{\theta}_3\}$, i.e., the subedge induced by the rays which are shot in the directions of the second and third corners of the projector, respectively; in this case $u$ is the intersection of $L_1$ and $L_2$). We divide the subedge using $\theta_3$ and continue the process. We also note that $u\neq p$ (and hence $\theta_3$ is well defined)  since if, to the contrary, $u=p$, then, from the fact that $u$ is located on the lines $L_i$, $i\in\{1,2\}$ (see the discussion after \beqref{eq:B_lambda}),   it follows that $p$ is located on these lines too. This  is impossible since $L_1$ (and also $L_2$) is either the bisector between $p$ and some other site $a_1\neq p$,  and in this case obviously $p$ cannot be located on it, or $L_1$ is a line on which part of the boundary of $X$ is located and thus again, $p$ cannot be located on $L_1$ (since $p$ is in the interior of $X$).

It is possible to avoid a recursive implementation of the algorithm and base it on loops using a simple data structure called  $EdgeQueue$. This is a list which stores temporarily the projector subedges that are handled during the process (each subedge is represented by a set of  two unit vectors, which correspond to its corners). The algorithm runs until $EdgeQueue$ is empty. The initial list contains the sides (edges)  of the projector $\{\psi_1,\psi_2\},\,\{\psi_2,\psi_3\},\,\{\psi_1,\psi_3\}$, where we can take $\psi_1:=(\sqrt{3}/2,-1/2)$, $\psi_2:=(0,1)$ and $\psi_3:=(-\sqrt{3}/2,-1/2)$. \label{page:phi_i}
In this connection, it should be emphasized that the projector is used for handling the progress of the algorithm (using $EdgeQueue$), but the corresponding unit vectors in the direction of the corners of the subedges are not necessarily on the same line as the one on which the projector subedge is located. Despite this, it is convenient to represent a projector subedge by its associated unit vectors. 

\section{Finding the endpoints exactly}\label{sec:Endpoint}
In order to apply Method \bref{method:projectorEuclidean}, we should be able to find the endpoint $p+T(p,\theta_i)\theta_i$ emanating from the site $p$ in the direction of $\theta_i$ (see \beqref{eq:Tdef} and line \bref{line:Endpoint} in Algorithm 1). One possible method is to use the method described in \cite{ReemISVD2009proc}, but the problem is that the endpoint found by this method is  given up to some  error parameter, 
and unless this parameter is very small (which, in this specific case, implies slower computations), this may cause an accumulating error later when finding the vertices, due to numerical errors in the expressions in \beqref{eq:B_lambda}. 

In what follows we will describe a new method for finding the endpoint in a given  direction $\theta$ exactly. Of course, when using floating point arithmetic errors appear, but they are much smaller than the ones described above. See Figure \bref{fig:projectorEuclideanRay} for an illustration. After discussing this new method in Subsection \bref{subsec:Endpoint}, we discuss briefly an improvement of it in Subsection \bref{subsec:UniformDistribution}. Full details related to both methods, as well as many technical aspects, can be found in the first appendix (Section \bref{app:ImprovedEndpoint}). 

\subsection{The first method}\label{subsec:Endpoint}
\begin{method}$\,$\label{method:Endpoint}
\begin{itemize}
\item {\bf  Input: } A site $p$ and a unit vector $\theta$;
\item {\bf  Output: } the endpoint $p+T(p,\theta)\theta$. 
\end{itemize}
\begin{enumerate}
\item	Shoot a ray from $p$ in the direction of $\theta$ and stop it at a point $y$ which is either in the region $X$ but outside the cell of $p$ (see Method \bref{method:EndpointImproved} below), or it is the intersection of the ray with the boundary of the region (see Remark  \bref{rem:tCompute} below). If $y$ is chosen to be outside the cell, then go to Step \beqref{item:CloseNeighbor}. Otherwise, let  $L$ be the boundary line of $X$ on which $y$ is located; 
\item\label{item:InCell}	check whether $y$ is in the cell, e.g., by comparing $d(y,p)$ to $d(y,a)$ for any 
other site $a$, possibly with enhancements which allow to reduce the number of distance comparisons;
\item\label{item:Output}	if $y$ is in the cell, then $y$ is the endpoint (and $L$ is a bisector line, unless $L$ is a boundary line); the calculation along the ray is complete; 
\item\label{item:CloseNeighbor}	otherwise, $d(y,a)<d(y,p)$ for some site $a$. Let $CloseNeighbor:=a$;
\item\label{item:u}	find the point of intersection (call it $w$) between the given  ray and the bisector  line $L$ between $p$ and $CloseNeighbor$.  This intersection is always nonempty.  The line $L$ is easily found because it is vertical to the vector      $p-CloseNeighbor$ and passes via the point $(p+CloseNeighbor)/2$;  
\item	let $y:=w$; go to Step \beqref{item:InCell}.
\end{enumerate}
\end{method}

\begin{figure}[t]
\begin{minipage}[t]{1\textwidth}
\begin{center}
{\includegraphics[scale=0.55]{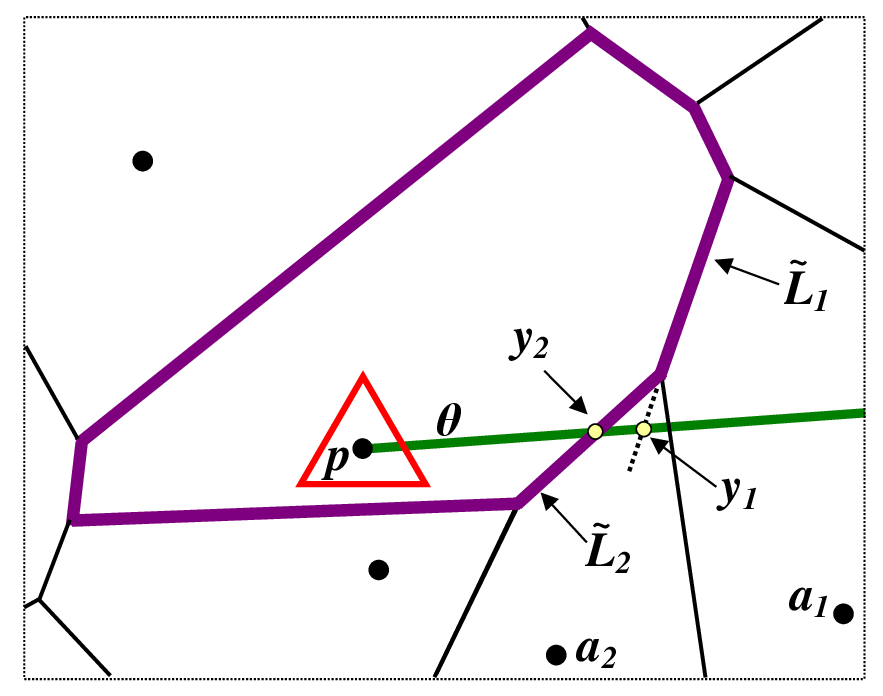}}
\end{center}
 \caption{Illustration of Method \bref{method:Endpoint} for some ray.  
 At the first displayed iteration  $CloseNeighbor$ is $a_1$. The intersection between $L_1$ (the line on which the edge $\tilde{L}_1$ is located) and the ray is the point $y_1$. At the next stage $CloseNeighbor$ is $a_2$ and hence $y=y_2$. The process terminates since $y_2$ is in the cell  of $p$, i.e., it is the endpoint.}
\label{fig:projectorEuclideanRay}
\end{minipage}
\end{figure}

 \subsection{An improvement of  Method \bref{method:Endpoint} for the case of certain distributions of sites}\label{subsec:UniformDistribution}

Now we discuss briefly an improvement of Method \bref{method:Endpoint} for computing the endpoint of a given ray. Full details of this method can be found in the first appendix (Method \bref{method:EndpointImproved} in Section \bref{app:ImprovedEndpoint}). This method is general  in the sense that it can be used for any type of distribution of the sites. However, it is most efficient for the case where the sites are distributed independently according to the  uniform distribution, i.e., the probability of each site to be in some region is proportional to the area of the region, and there is no dependence between the sites (when viewed as random vectors). For the sake of simplicity, we also assume that $X$ is a rectangle whose side lengths are integer multiplications of some real number $s>0$. 

A rough description of the method is as follows. We first perform a preprocessing stage,  whose time complexity is $O(n)$, in which the sites are inserted into  a corresponding data structure (buckets) mentioned in \cite{BentleyWeideYao1980}. This data structure is nothing but a collection of buckets which decompose $X$: each bucket is just a square having sides of length $s$. Then, when we compute the Voronoi cell of some site $p$, the buckets allow us to restrict the distance comparisons to sites located in buckets which are close to the temporary endpoint $y$. A more precise meaning of the word ``close'' appears in Lemma \bref{lem:SitesInLargeSquare} below, and this lemma ensures that each site $a$ which is  not close to $y$ will automatically satisfy $d(y,a)>d(y,p)$; as a result, there is no need to perform the distance comparison stage (in which we compare $d(y,a)$ to $d(y,p)$) when we try to see whether $y$ is in the cell of $p$. The buckets also help us to find a candidate for the first temporary endpoint which is usually close to $p$, since unless some square around $y$ is empty of sites (low probability: see Remark \bref{rem:MethodEndpointImproved}\beqref{item:2beta_s} below), Lemma \bref{lem:SitesInBetaSquare} below ensures that farther sites $a$ will automatically satisfy $d(y,a)>d(y,p)$. The rest of the method is roughly as Method \bref{method:Endpoint}.

\section{Storing the data}\label{sec:CombinatorialInformation} 

Given a point site $p=p_k$, when a vertex $u$ belonging to the cell of $p$ is found, one stores  the following parameters: its coordinates, the lines from which it was obtained (namely, $u$ belongs to the corresponding cell's edges located on  these lines), and the index $k$. For storing a line $L$ it is convenient to store the index of  its  associated neighbor site, 
namely the index (simply a number or a label) of the site which induces it (denoted by $CloseNeighbor$ in Method \bref{method:Endpoint}). If it is a boundary line, then it has a unique index number which is stored 
and from this index one can retrieve the parameters (the normal and the constant) defining the line.  Alternatively, these parameters can be stored directly. For some purposes it may be useful to store also some endpoints. 

A convenient data structure for the storage is a one dimensional array, indexed by $k$, in which the vertices (represented, as explained above,  by  coordinates and associated neighbor sites) and any additional information, such as endpoints,  are stored. Although the vertices are not stored 
according to a certain order, the method of search (Algorithm 1) implies that it is  quite easy to sort them later in clockwise or counterclockwise order, e.g., by labeling the generated rays with suitable values.

\section {Computing the Delaunay graph (Delaunay triangulation)}\label{sec:Delaunay}

As is widely known, the Delaunay graph is an important geometric structure  which is closely related  to the Voronoi diagram and by itself has many applications \cite{Aurenhammer, ChengDeyShewchuk2012, Edelsbrunner-book-1987, OBSC}. By definition, it consists of vertices and edges. The vertices  are the  sites. There is an edge between two sites if their Voronoi cells are neighbors (via an edge). See Figure  \bref{fig:DelaunayVoronoi} for an illustration. 

The computation of the restriction of the Delaunay graph to the given bounded world $X$, from the stored data structure of the Voronoi diagram, is simple and done almost automatically: one chooses a given site, goes over the  data structure and finds all the different neighbor sites of a given site. The procedure is repeated for each site and can easily be  implemented in parallel.

Note that here everything is restricted to the given bounded world $X$. Rarely it may happen that two sites whose cells are neighbors 
in the whole plane are not neighbors in $X$. This can happen only with cells which intersect the boundary of $X$. For overcoming this problem (if one considers this as a problem), one can simply take $X$ to be large enough or one can perform a separate check for the above-mentioned boundary cells.

\begin{figure}
\begin{minipage}[t]{1\textwidth}
\begin{center}
{\includegraphics[scale=0.65]{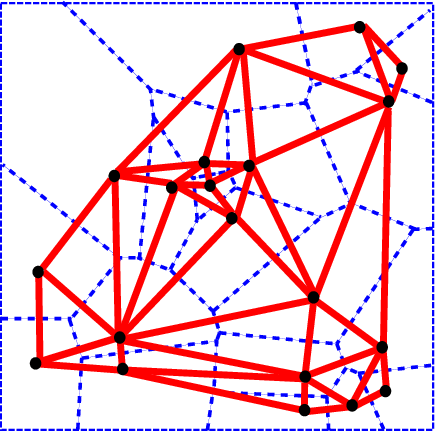}}
\end{center}
 \caption{The Delaunay graph of 20 sites, restricted to a square (thick lines),  
 together with the Voronoi cells of the sites (thin lines).}
\label{fig:DelaunayVoronoi}
\end{minipage}
\end{figure}

\section{A by-product: a new combinatorial representation for the cells in higher dimensions}\label{sec:HigherDimensions}
As explained in Section \bref{sec:CombinatorialInformation}, 
each vertex is stored by saving its coordinates and the edges (actually the corresponding neighbor sites which induce these edges) which intersect at the vertex. This section extends this idea to higher dimensions and presents a new  combinatorial representation for the Voronoi cells (and 
actually any convex polytope) in this setting. 

In dimension $m=2$, a vertex $u$ is always obtained from the intersection of two lines. In dimension $m\geq 3$ a vertex is usually obtained from exactly  $m$ hyperplanes, but in principle it can be obtained from $S$ hyperplanes, $S>m$. We call the set  $\{L_{i_1},\ldots, L_{i_{S}}\}$ of all the hyperplanes from which $u$ was obtained ``the combinatorial representation of $u$''.

As the examples below show (see also Section \bref{sec:Delaunay}), once the above-mentioned combinatorial representation is known and stored,  we can obtain other combinatorial information related to the cell, say  the neighbors of a given vertex, combinatorial information related to the $\ell$-dimensional faces of the cell, $\ell=0,1,\ldots,m-1$, and so on, and hence we do not need to store these types of information separately. 
This is in contrast to familiar methods for representing the combinatorial information in which one has to find and store all the $\ell$-dimensional faces, $\ell=0,1,\ldots,m-1$ of the cell. Since the combinatorial complexity of the cell (the number of multi-dimensional faces) can grow exponentially  with the dimension \cite{Aurenhammer, Klee}, our method may reduce the required space needed for the storage (at least by improving some constants). The price  is, however, that for retrieving some data, certain search operations will have to be done on the stored information. There is another difference between our method and other ones. In other methods one starts with the vertices as the initial (atomic) ingredients, and from them one constructs higher dimensional faces. For instance, an edge is represented by the two vertices which form its corners. However, in our method we start with the highest dimensional faces (located on hyperplanes) and from them we construct the vertices and the other multi-dimensional faces. As a matter of fact, our method of storage can be used to represent any multi-dimensional convex polytope, and even a class of nonconvex or abstract polytopes.

\begin{expl}
Information regarding all the  various $\ell$-dimensional faces of the cell, $\ell=0,1,\ldots,m-1$, can be found by observing that any such a face is located on the intersection of  hyperplanes. Indeed,  the $(m-1)$-dimensional faces are located on the hyperplanes $L_i$ which appear in the representation of the vertices. Thus, in order to find all the vertices which span a given $(m-1)$-dimensional face, say a one located on the hyperplane $L_{i_1}$, we simply need to search for all the vertices having $L_{i_1}$ in ther combinatorial representation. Consider now the $(m-2)$-dimensional faces; we fix an hyperplane $L_i$ and look at all the vertices $u$ having $L_i$ in their combinatorial representation; in the representation of 
any such a vertex $u$, there are other hyperplanes $L_j$, and an $(m-2)$-dimensional 
face is located on $L_i\cap L_j$. By going over all the possible hyperplanes of $u$, all the possible $u$, and all the possible $L_i$, we can find and represent all the possible linear subspaces on which the $(m-2)$-dimensional faces are located (note: $L_i\cap L_j$ can be represented by an array containing the parameters defining both $L_i$ and $L_j$). We can also find all the vertices which span a given $(m-2)$-dimensional face: such a face is located on the intersection of, say, the hyperplane $L_{i_1}$ and the hyperplane $L_{i_2}$; hence we simply need to find all the vertices having $L_{i_1}$ and $L_{i_2}$ in their combinatorial representation. Similar things can be said regarding the other $\ell$-dimensional faces, $\ell=0,1,\ldots,m-1$.
\end{expl}

\begin{expl} Given a vertex  $u$ with a given combinatorial representation, its neighbor 
vertices can be found by observing that if $u$ and $v$ are neighbors, then they are located 
on the same 1-dimensional face. This face is located on the linear subspace which is the intersection of $m-1$ hyperplanes. Hence, for finding the neighbor vertices of $u$ we simply need to go over the list of vertices $v$ and choose the ones whose combinatorial representation contains $m-1$ hyperplanes which also appear in the representation of $u$.
\end{expl} 

\section{A time complexity theorem}\label{sec:TimeComplextiyTheorem} 
This section presents the main theoretical result of the paper. Here we only present the theorem and discuss briefly the ideas behind its proof. Many aspects and clarifications related to the theorem can be found in the second appendix (Section \bref{app:TheoryPractice}). Its (long and technical) proof can be found in the third appendix (Section \bref{app:Proofs}).

\begin{thm}\label{thm:CorrectnessOfTheAlg}
Suppose that the world $X\subset \R^2$ is a compact and convex subset whose boundary is polygonal. Assume also that the $n\in\N$ distinct point sites $p_1,\ldots,p_n$ are contained in its interior. Then:
\begin{enumerate}[(a)]
\item Algorithm 1 is correct. More specifically, given any site $p=p_k$ for some $k\in K=\{1,\ldots,n\}$, the computation of the Voronoi cell of $p$  by Algorithm 1 terminates  after a finite number of steps and  the corresponding entries in  the output of the algorithm include all the vertices and edges of the cell;
\item the time complexity, for computing the cell of $p_k$, is $O(r_k e_k)$, where $r_k$ is the maximum number of distance comparisons done along each shot ray (compared between all the rays shot during the computation of the cell of $p_k$), and $e_k$ is the number of edges of the cell;
\item the time complexity, for the whole diagram, assuming one processing unit is involved, 
is $O(n^2)$ (this upper bound is not necessarily tight);
\item the time complexity, for the whole diagram, assuming $Q\in\N$ processing units are involved (independently) and processor $Q_i$ computes a set $A_i$ of cells, is 
\begin{equation*}
\max\left\{\sum_{k\in A_i}O(r_k e_k): i\in \{1,\ldots, Q\}\right\}.
\end{equation*}
\item\label{item:TimeComplexityUniformDistribution} 
Let $\epsilon\in (0,1)$ be arbitrary. Suppose that $n\in\N$ is large enough so that both $n>100$ and \beqref{eq:ln(n/epsilon)}--\beqref{eq:alpha2m} below hold (in what follows $\lfloor t\rfloor$ is the floor function applied to the real number $t$, that is, the greatest integer not exceeding $t$; in addition, $e$ is the base of the natural logarithm, namely $e\approx 2.71828$):  
\begin{equation}\label{eq:ln(n/epsilon)}
\alpha_1<\frac{1}{3}\left(\lfloor 0.2m\rfloor+2\right), 
\end{equation}
\begin{equation}\label{eq:alpha2m}
\alpha_2<e^{-2}(\lfloor 0.2m\rfloor+1)
\end{equation} 
where  
\begin{equation}\label{eq:m}
m:=(\lfloor\sqrt{n}\rfloor)^2,
\end{equation}
\begin{equation}\label{eq:alpha1}
\alpha_1:=\left(\left\lfloor\sqrt{\ln\left(\frac{n(1+2e^{-400})}{\epsilon}\right)}\right\rfloor+\textnormal{OneTwo}\right)^2, 
\end{equation}
\begin{equation}\label{eq:OneTwo}
\textnormal{OneTwo}=\left\{\begin{array}{lll}
1 & \textnormal{if}\,\,\left\lfloor\sqrt{\ln\left(\displaystyle{\frac{n(1+2e^{-400})}{\epsilon}}\right)}\right\rfloor\,\,\textnormal{is even},\\
2 & \textnormal{otherwise}
\end{array}
\right.
\end{equation}
\begin{equation}\label{eq:alpha2}
\alpha_2:=\left(1+\left\lfloor 8\sqrt{2}(\beta+1.01)\right\rfloor\right)^2. 
\end{equation}
\begin{equation}\label{eq:beta}
\beta:=\frac{\sqrt{\alpha_1}-1}{2}, 
\end{equation}
Let $m_1$ and $m_2$ be any natural numbers satisfying 
\begin{equation}\label{eq:m1m2}
m_1m_2=m,\quad \min\{m_1,m_2\}>2\sqrt{\alpha_2},
\end{equation}
e.g., $m_1:=m_2:=\sqrt{m}$.  Suppose further that the world $X$ is a rectangle of the form $X=[0,m_1s]\times [0,m_2s]$ for some positive number $s$, and assume that there are $n$ sites in $X$ which are distributed independently of each other and according  to the uniform distribution. Assume that a preprocessing  stage is done in which $X$ is decomposed into $m_1\times m_2$ buckets (that is, Method  \bref{method:Preprocessing} which is described in Section \bref{app:ImprovedEndpoint}). Then with probability which is at least $1-\epsilon$, for all $k\in K$ and all rays shot during the computation of the cell of $p_k$ using  Method \bref{method:EndpointImproved} (Section \bref{app:ImprovedEndpoint}), one has $r_k\leq 3\alpha_2$. Moreover, with probability which is at least $1-\epsilon$ , the total number of distance comparisons done using Algorithm 1 (with one processing unit) for the computation of all of the Voronoi cells, is at most 
\begin{equation*}
(3\alpha_2-1)\cdot 22n=O\left(n\log\left(\frac{n}{\epsilon}\right)\right).
\end{equation*}
\end{enumerate}
\end{thm}

The proof of Theorem \bref{thm:CorrectnessOfTheAlg} is quite technical, but the main idea regarding the bound on the time complexity is simple. Indeed, suppose that we consider the cell of the site $p_k$. Each time a ray is shot during the running of Algorithm 1, there can be three cases: either a new edge of the cell is detected, or a vertex is found, or the subwedge induced by the shot ray and another ray does not contain vertices. This shows (after a careful counting) that the number of rays used for each cell is bounded by a universal constant times the number of edges in the cell. The operations done along a given ray for detecting its endpoint are mainly distance comparisons and some $O(1)$ operations such as simple arithmetic and operations related to the memory. The maximum number of the distance comparisons, compared between all the shot rays, is $r_k$, and the upper bound follows.  

As for the upper bound regarding the time complexity for the whole diagram, one observes that 
the total number of edges is of the order of the size of the diagram and recalls the 
well known fact that this size is $O(n)$ (see also the proof of Lemma \bref{lem:NodesO(n)}). 
Since $r_k$ is obviously bounded by the number of sites, namely by $O(n)$ (for each $k$), the bound $O(n^2)$ follows. When the sites are distributed independently of each other and according to the uniform distribution, then with high probability all the endpoints 
will be not far from their site (indeed, when a point on a ray which is shot from the site $p_k$ is located too far from the site, then, because the sites are distributed uniformly, there is a high probability that this point will be closer to other sites, and hence be located outside the cell of $p_k$). More careful estimates show that  $r_k=O(\ln(n/\epsilon))$ and the bound $O(n\ln(n/\epsilon))$ follows. We believe that a  better (perhaps even linear) upper bound can be given: see Section \bref{sec:ConcludeRemarks}.

\section{Concluding remarks}\label{sec:ConcludeRemarks}
This paper was devoted to the (parallel) computation of the Voronoi diagram, a geometric data structure which has numerous applications in science and technology. More precisely, we presented and thoroughly analyzed the projector algorithm: a new and fairly simple algorithm for computing  Voronoi diagrams of point sites in the Euclidean plane. The algorithm is significantly different from other algorithms in the literature, and some of the concepts involved in its description are in the spirit of linear programming and optics. The main advantage of our algorithm is that it naturally  supports parallel computing, since it allows the computation of each Voronoi cell independently of the other cells, and even allows the computation of portions of a given cell independently of other portions of the same cell. This is in contrast to most other Voronoi diagram algorithms which are inherently serial in their nature and, in particular, cannot compute each cell independently of the other cells, a fact which has cast serious difficulties on attempts to implement these algorithms in a parallel computing environment. Another advantage of the new algorithm is that it can handle in a rather native way degenerate configurations of sites, again, in contrast to most of the other known algorithms. 

Here are additional relevant remarks. First, it should be emphasized again (see also Section  \bref{sec:Introduction}) that this paper is theoretical. Hence no experimental data is given. Our paper is not unique regarding this. Indeed, the literature in general and the literature related to Voronoi diagrams in particular, contains many papers discussing algorithms theoretically and without any experimental data: see e.g.,  \cite{ACGOY1988, ABY1987,  AmatoGoodrichRamos1994, BentleyWeideYao1980, ChanChen2010, ChazelleMatousek1995, Clarkson1999, ClarksonShor1989, ColeGoodrichODunlaing1996, Fortune1987, MacKenzieStout1998, Meyerhenke2005, RajasekaranRamaswami2002, ReifSen1992, ShamosHoey1975, SpielmanTengUngor2007} for a rather partial list of such papers. Despite this, we want to say something about some practical issues, or, more precisely, about the current implementation that we have. 

Its actual behavior is somewhat strange: for a reason which is not currently well understood, the running time sometimes grows in a way which is a little bit greater than linear with respect to the input (number of sites) when one processing unit is involved. However, when several processing units are involved, then the running time $t(n)$ is better than linear with respect to the input (i.e., $t(\alpha n)<\alpha t(n)$ for all tested $\alpha\in \{1,2,3,\ldots\}$; this does not contradict the obvious lower bound of $\Theta(n)$). Perhaps (in both cases) this may be related to something in the memory management or some influence of the operating system. 
  
Comparing to well-known implementations, our preliminary implementation performs quite good.  For example, it runs faster than Qhull 2011.1 \cite{QhullWeb} when a Voronoi diagram of $10^6$ sites is computed (a few seconds on an old computer). However, it does not run faster than Triangle \cite{Shewchuk1996,Shewchuk2002}. Both Qhull and Triangle are veteran  serial implementations which have adopted many enhancements over the years. Despite this, their behavior is worse than linear and their output is not always correct. In contrast, in our implementation only two people  have been involved, for a not so long period, and only one of them has done the programming work. Many enhancements are  waiting to be implemented. For instance, each vertex usually  belongs to three cells and hence it is computed three times, while in Qhull/Triangle it is computed only once. We also note that in our implementation, so far no incorrect output has been observed when double precision arithmetic was used. 

A detailed description of experimental results, including comparisons to other known implementations  such as CGAL \cite{CGALweb} and Boost \cite{Boost-web}, and more details about implementation issues in various environments, are planned to be discussed elsewhere.

The second remark is regarding Theorem  \bref{thm:CorrectnessOfTheAlg}\beqref{item:TimeComplexityUniformDistribution}. Although an $O(n\log(n/\epsilon))$ upper bound on the number of distance comparisons done in Method  \bref{method:EndpointImproved} for the computation of the whole diagram was given with probability $1-\epsilon$, we believe that by a better analysis, an  $O(n)$ upper bound can be given (with the constant $c$ inside the big $O$ symbol which may depend on $\epsilon$, say $c=1/\epsilon$). This conjecture is supported by experimental tests. It seems that in order  to obtain a linear upper bound one would need to improve the proof of Lemmas  \bref{lem:TooMuchInBox}--\bref{lem:EmptySquare}, since currently the upper bound on the  union events mentioned there (the ones dealing with the set of all the squares composed of $\alpha$ buckets) is very coarse.

Third, we believe that it is possible to extend the methods described here to other settings,  e.g., to compute Voronoi diagrams in higher dimensions, to compute  Voronoi diagrams in the flat torus (namely, in a setting with free boundary conditions) and also to compute other structures such as arrangements, certain manifolds, certain simplicial complexes, etc. In this connection, we note that if the world $X$ in which the sites are located is a bounded domain in $\R^2$ whose boundary is not necessarily polygonal (such as a circle), then we can embed $X$ and the sites in it inside a larger polygonal domain $\wt{X}$, such as a larger rectangle, and then we can apply in $\wt{X}$  the methods described here; the result will be the Voronoi cells of the given sites in $\wt{X}$, and in order to obtain the Voronoi cells of the given sites in $X$, we  simply need to intersect $X$ with the cells in $\wt{X}$ that we have already found (because from $X\subseteq \wt{X}$ we have $\{x\in X: d(x,p_k)\leq d(x,p_j), j\neq k, j\in K\}=X\cap \{x\in \wt{X}: d(x,p_k)\leq d(x,p_j), j\neq k\}$ for every index $k\in K$). 

Fourth, we believe that it is possible to improve and extend Methods \bref{method:Preprocessing} and \bref{method:EndpointImproved}, e.g., by modifying the bucketing structure so that it will handle, in a better manner, various types of site  distributions (a possible candidate in this direction: the data structure suggested in \cite{AMNSW1998}, or modifications of it).

Fifth, it will be interesting to provide a satisfactory analysis of the time complexity in the parallel case (Subsection \bref{subsec:Parallel}) for some of the models of  computation mentioned there without an accompanied analysis, to present a rigorous formulation and proof (in terms of probabilistic estimates depending on $n$ and $\epsilon$) regarding the claim that ``only in very rare  events one has $e_k>20$'', and to further develop the issues raised in Subsection \bref{subsec:ParallelAdditional} regarding the parallel case. It will also be interesting to extend Theorem  \bref{thm:CorrectnessOfTheAlg}\beqref{item:TimeComplexityUniformDistribution} to a world $X$ which is non-rectangular, possibly following the analysis presented in Remark  \bref{rem:TimeComplexityUniformDistribution}\beqref{item:NonRectangularBucket}.

Finally, it is worth saying something about non-point sites. Algorithm 1 assumes that the sites are points. However, it is possible to consider sites of any form and to use the same algorithm, up to slight modifications. Indeed, suppose that the sites $P_k$ are compact sets located inside the world $X$. Each such a set can be approximated to any required precision by a finite subset of points contained in it. We then consider these points as $P_k$ and compute the  Voronoi cell of each of the points $p\in P_k$, namely $\{x\in X: d(x,p)\leq d(x,\cup_{j\neq k} P_j)\}$. The Voronoi cell of $P_k$ is  nothing but $\cup_{p\in P_k}\{x\in X: d(x,p)\leq d(x,\cup_{j\neq k} P_j)\}$. Now the array used to store the diagram becomes a 2-dimensional array, where the index $p$ is added to it. 


 \section{Appendix 1: issues related to Method \bref{method:Endpoint} and its improvement}\label{app:ImprovedEndpoint}
 This section discusses various issues related to Method \bref{method:Endpoint} and its improvement. See Section \bref{sec:Endpoint} above for the formulation of Method \bref{method:Endpoint}.
 
\subsection{Remarks related to Method \bref{method:Endpoint}}\label{subsec_app:Endpoint} 
We start with several remarks related to Method \bref{method:Endpoint}.
 
\begin{remark}\label{rem:tCompute}
Below we discuss two simple methods for computing the intersection between the ray  $\Gamma_{\theta}$   emanating from $p$ in the direction of $\theta$ and the boundary of $X$. This ray can be represented as  $\Gamma_{\theta}:=\{p+t\theta: t\geq 0\}$. 

One method is an approximate one, in the spirit of \cite{ReemISVD2009proc}. We fix a small positive parameter $\epsilon$, select a point $y\in\Gamma_{\theta}$ which is known to be outside $X$ (just a point very far from $p$), denote $x:=p$, select the midpoint $z:=0.5(x+y)$ between $x$ and $y$, and check if $z$ is in $X$. If yes, then we let $x:=z$, otherwise we let $y:=z$. This bisecting process continues until the length of the segment $[x,y]$ is smaller than $\epsilon$. The advantage of this method is that it is simple and general (works for any type of boundary of $X$, not necessarily polygonal), but it can be somewhat slow and it does not lead to an exact output. 

The second method is exact and is restricted to the  case where we assume  that $X$ is a compact  polygon obtained from the intersection of finitely many (say $m$) half-planes $H_j$ (the basic  assumption in this paper). In this case the boundary of $X$ is a finite union $\cup_{j=1}^m I_j$ of line segments (edges) $I_j$. Given an index $j\in \{1,\ldots,m\}$, we can write $H_j=\{x=(x_1,x_2)\in\R^2: \langle N_j,x\rangle\leq c_j\}$ where $N_j\in\R^2$ is the normal to the boundary line $L_j=\{x=(x_1,x_2)\in\R^2: \langle N_j,x\rangle=c_j\}$ of $H_j$, and $c_j\in\R$ is a  constant. Since $p\in X$, the inequalities  $\langle N_j,p\rangle\leq c_j$ should hold for all $j\in \{1,\ldots,m\}$. 

The considered ray intersects $L_j$ if and only if there exists $t_j>0$ such that $p+t_j\theta\in L_j$, i.e., $\langle N_j,p\rangle +t_j\langle N_j,\theta\rangle=c_j$ (the case  $t_j=0$ is  impossible since $p$ is assumed to be in the interior of $X$). Therefore, either  $\langle N_j,\theta\rangle=0$ and then the ray is parallel to $L_j$ and the intersection is empty, or  $\langle N_j,\theta\rangle\neq 0$ and then 
\begin{equation}\label{eq:tBoundary}
t_j=\frac{c_j-\langle N_j,p\rangle}{\langle N_j,\theta\rangle}, 
\end{equation}
where it should be verified that $t_j>0$. 

Summarizing the above-mentioned discussion, we go over all the boundary edges $I_j$ and check whether $t_j$ defined by \beqref{eq:tBoundary} is well defined (i.e.,  $\langle N_j,\theta\rangle\neq 0$) and is positive. Among those $t_j$ which satisfy these conditions we still need to verify that $p+t_j\theta\in X$, i.e., that  $p+t_j\theta\in H_k$ for all $k\in \{1,\ldots,m\}$. Analytically, we should verify that for each such $j$ we have
\begin{equation}\label{eq:tInequality}
t_j\langle N_k,\theta\rangle\leq c_k-\langle N_k,p\rangle,\quad\forall k\in\{1,\ldots,m\}.
\end{equation}
Any $t_j$ which satisfies \beqref{eq:tInequality} leads to the point of intersection between the ray $\Gamma_{\theta}$ and the boundary of $X$ (usually there will be a unique $t_j$, unless the ray passes through a vertex of $X$; the existence of at least one $t_j$ is a consequence of the intermediate value theorem: see the proof of Lemma \bref{lem:DifferentEdge}). As a final remark, we note that in \beqref{eq:tInequality} we do not really need to go over all $k\in \{1,\ldots,m\}$ but rather on those $k\neq j$ for which $t_k$ defined by \beqref{eq:tBoundary} is well defined and positive. 
\end{remark}

\begin{remark}\label{rem:Array}
Whenever a distance comparison is made with some site $a=p_j$, $j\neq k$ (or even with $p$ itself) and it is found that $d(y,p)\leq d(y,a)$, then there is no need to consider this site in later distance comparisons. Indeed, the previous inequality means that $y$ is in the halfspace of $p$ (with respect to $a$). Because $y$ always remains on the same ray and gets closer to $p$, it remains in this halfspace, i.e., also later temporary endpoints $y$ will satisfy $d(y,p)\leq d(y,a)$. However, even if $d(a,p)<d(a,y)$, then the site $a$ should not be considered anymore, since this case implies that $y$ will be moved to the intersection between its ray and the bisector line between $p$ and $a$, and so in the next iteration it will satisfy $d(y,p)=d(y,a)$, and in later iterations it will satisfy $d(y,p)\leq d(y,a)$ (in particular, the total number of distance comparisons is not greater than $n$). 

A simple way to perform the above-mentioned operation of not considering a given site anymore is to go over the array of sites by incrementing an index starting from the first site. By doing this each site will be accessed exactly one time  and when the index arrives at the last site, the process will end.  
\end{remark}

\begin{figure}[t]
\begin{minipage}[t]{1\textwidth}
\begin{center}
{\includegraphics[scale=0.62]{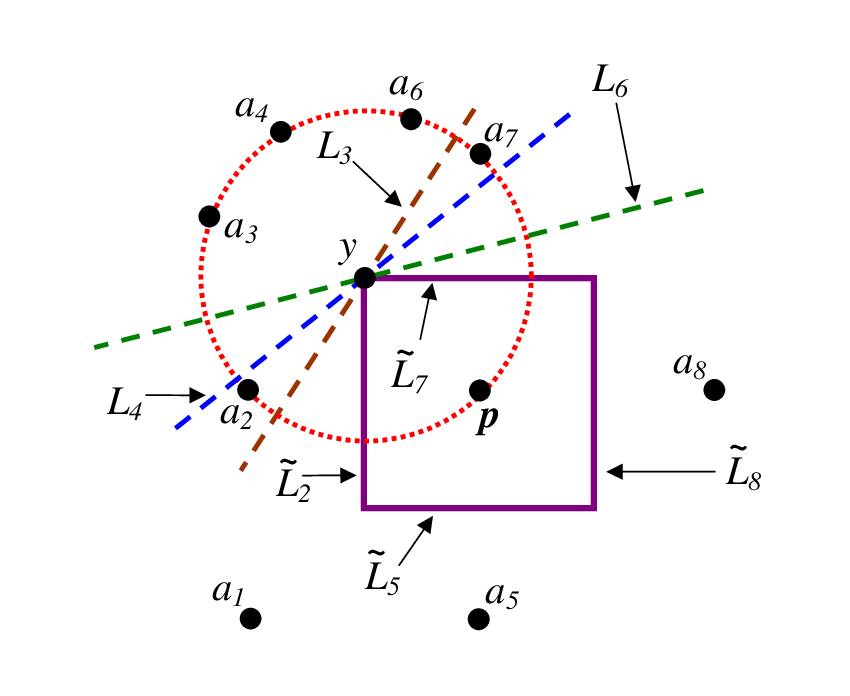}}
\end{center}
 \caption{Illustration of Remark  \bref{rem:bisector}.  Here $y$ is a vertex of the cell of $p$  (the square having $\tilde{L}_2$, $\tilde{L}_5$, $\tilde{L}_7$, and $\tilde{L}_8$ as its edges),  $EquiDistList=\{a_2,a_3,a_4,a_6,a_7\}$, and $L_i$ is the bisector between $p$ and $a_i$ for each $i\in \{3,4,6\}$. The smallest and largest angles (counterclockwise) $\angle pya_i\in [0,2\pi)$, $a_i\in EquiDistList$ are $\angle pya_7$ and $pya_2$, respectively, and indeed $L_7$ and $L_2$ contain edges of the cell of $p$. }
\label{fig:DegenerateCase}
\end{minipage}
\end{figure}

\begin{remark}\label{rem:bisector}
When finding the endpoint $y$, one can also find all of its neighbor sites since in the last time Step \beqref{item:InCell} is performed, one can easily find all the sites $a$ satisfying $d(a,y)=d(p,y)$ simply by storing any site $a$ satisfying this equality. Call the corresponding list $EquiDistList$.  Each $a\in EquiDistList$ induces a corresponding  bisector line $L$ between $p$ and $a$. In the rare  event where $y$ coincides with a vertex of the cell, there may be $a\in EquiDistList$ 
whose bisector $L$ may not contain an edge of the cell but rather it intersects the cell only at the vertex $y$ (this can happen only when $EquiDistList$ has at least three different elements, i.e., at least three distinct sites located on a circle around $y$, and hence this cannot happen when the sites are in general position). See Figure  \bref{fig:DegenerateCase} for an illustration. This is a problem, since we want to make sure that when we make operations with the endpoint, we use a line $L$ on which this endpoint is located and on which an edge of the cell is located. 

In order to overcome the problem, we  need to determine the sites $a\in EquiDistList$  that induce bisectors which contain an edge of the cell, and then to consider these sites and their associated bisector lines for later operations. This can be achieved by sorting, in an increasing order, all the angles $\angle pya$, $a\in EquiDistList$ (we choose, say, the counterclockwise direction and assume that the angles are between 0 and $2\pi$). The sites corresponding to the smallest and largest values are the ones which induce  a desired bisector, and we can associate with $y$ any one of these sites and their corresponding bisectors. 
\end{remark}

\begin{remark}\label{rem:correction}
Method  \bref{method:Endpoint} is correct, as shown below. First, the method terminates after finitely many steps because 
there are finitely many sites. The point $w$ in Step  \beqref{item:u} is well defined because $y$ is in the halfplane of $CloseNeighbor$ and hence the considered ray  intersects the boundary $L$ of this halfplane. The point $y$ is outputted in Step  \beqref{item:Output}, and by the description of this step $y$ is in the Voronoi cell of $p$. If $y$ is on the boundary of $X$, then, since it is in the cell, it is an endpoint (according to \beqref{eq:Tdef}). Otherwise $y$ is on a bisector between $p$ and another site, and therefore any point on the ray beyond $y$ is outside the Voronoi cell of $p$. Since $y$ is in the cell, this means that $y$ must be an endpoint (again, see \beqref{eq:Tdef}; here $A$ is the set of all sites $a\neq p$ and $T(p,\theta)=|y-p|$).  
\end{remark}

\subsection{The improvement of Method \bref{method:Endpoint}}\label{subsec_app:ImprpovedMethod} 
 This subsection presents the full details of an improvement of Method \bref{method:Endpoint}, an improvement which was mentioned briefly in Subsection \bref{subsec:UniformDistribution} above. Related issues are also discussed.
  
We use the following terminology. We denote by $B[y,r]$ the disk of radius 
$r>0$ and center $y=(y_1,y_2)\in\R^2$, and by $S[y,r]$ the square of radius 
$r>0$ and center $y\in \R^2$, i.e., $S[y,r]:=\{x\in \R^2: |y-x|_{\infty}\leq r\}$ where $|(w_1,w_2)|_{\infty}:=\max\{|w_1|,|w_2|\}$  is the $\ell_{\infty}$ norm of a point $w\in \R^2$. In other words,   $S[y,r]=[y_1-r,y_1+r]\times [y_2-r,y_2+r]$. The integer rectangle $S_I[y,r]$ is the rectangle defined as follows: 
\begin{equation}\label{eq:S_I}
S_I[y,r]:=\Big[s\lfloor (y_1-r)/s\rfloor,s(\lfloor (y_1+r)/s\rfloor+1)\Big]
\times \Big[s\lfloor (y_2-r)/s\rfloor,s(\lfloor (y_2+r)/s\rfloor+1)\Big].
\end{equation}
Here $\lfloor\cdot\rfloor$ is the floor function, i.e., $\lfloor t\rfloor$ is the largest integer  not exceeding $t\in \R$. See Figures \bref{fig:VoronoiParallelBoxImprove1}--\bref{fig:VoronoiParallelBoxImprove2} for an illustration. Usually $S_I[y,r]$ is the smallest rectangle which contains $S[y,r]$ and composed of buckets. The exceptions are when $S[y,r]$ itself is  composed of buckets and then $S_I[y,r]$ contains one more column of buckets to the right, and one more row of buckets to the up. It is convenient to work with the integer rectangle for the theoretical analysis, and this is what will be done from now on. 

\begin{method}$\,${(\bf Preprocessing)}\label{method:Preprocessing}
\begin{itemize}
\item {\bf  Input: } A rectangle $X$ (the world), the sites, a positive number $s$ (the side length of the buckets), two positive integers $m_1$ and $m_2$, a positive number $\omega$ (a width) which is an integer multiplied by $s$; we also assume that $X=[0,m_1s]\times [0,m_2s]$, namely the side lengths of $X$ are $m_1s$ and $m_2 s$. 
\item {\bf  Output: } a bucketing structure. 
\end{itemize}
\begin{enumerate}
\item Decompose $X$ into $m_1\cdot m_2$ buckets, where each bucket is a square of side length $s$. 
\item Associate each site to the bucket containing it and having the lowest 
indices, i.e., if the site is, say, $a:=(a_1,a_2)\in \R^2$, then the indices of the associated bucket are $i_j=\lfloor a_j/s\rfloor$, $j\in \{1,2\}$. 
\item Enlarge the grid of buckets around $X$ by the width $\omega$, namely by adding a shell of buckets (whose thickness is $\omega$) around $X$ 
where each added bucket is empty of sites. 
\end{enumerate}
\end{method}

The last step in Method \bref{method:Preprocessing} is performed in order 
to avoid complications when some rectangles go out of $X$. This last stage can be avoided by working with the intersection of integer rectangles and $X$, but this complicates a bit the programming and the analysis. 

We now describe Method \bref{method:EndpointImproved} which improves upon Method  \bref{method:Endpoint}. Its theoretical justification is described in Proposition \bref{prop:MethodEndpointImprovedIsCorrect} below.

\begin{method}{\bf (Improved endpoint computation)}$\,$\label{method:EndpointImproved}
\begin{itemize}
\item {\bf  Input: } The world $X$ (a rectangle); the sites in the bucketing data structure 
mentioned in Method \bref{method:Preprocessing}, with the same parameter $s>0$; a site $p\in X$; a unit vector $\theta$; an integer $\beta>0$.
\item {\bf  Output: } The endpoint $p+T(p,\theta)\theta$. 
\end{itemize}
\begin{enumerate}
\item\label{method:EndpointImproved:1.01} Let $\eta:=4\sqrt{2}(\beta+1.01)s$ and $y:=p+\eta\theta$. In addition, let $y_X$ be the intersection between the boundary of $X$ and the ray emanating from $p$ in the direction of $\theta$ (see Remark \bref{rem:tCompute} regarding the computation of $y_X$). 
\item\label{step:twice_y} Check whether $y$ is in $X$. 
\item\label{step:2eta>d(y_x,p)} If $y$ is not in $X$, equivalently, if $\eta>d(y_X,p)$, then let $y:=y_X$. In this case  we consider $y$ as the first temporary endpoint, and we go to Step \beqref{item:RectangularListOfsites}. 
\item\label{item:S_I[y,2 beta s]} If $y\in X$, then we construct  $S_I[y,2\beta s]$ and check whether 
$d(y,p)\leq d(a,y)$ for all sites $a\in S_I[y,2\beta s]$. If yes, then let $y:=y_X$. Otherwise, $y$ is outside the cell of $p$. In both cases we consider $y$ as the first temporary endpoint. 
\item\label{item:RectangularListOfsites} Create the ``rectangular  list of sites'': this is simply the list of buckets and sites contained in $S_I[\tilde{y},d(p,\tilde{y})]$, where $\tilde{y}$ is the first temporary endpoint (see Remark \bref{rem:MethodEndpointImproved}\beqref{item:SubsetRectangularList} below; note that at this stage $y=\tilde{y}$). 
\item\label{item:y_outside_cell} If $y$ is known to be outside the cell of $p$, then go to Step \beqref{item:CloseNeighborImproved}. Otherwise, we know from previous steps that $y=y_X$. Let  $L$ be the boundary line (an edge of $X$) on which $y$ is located.
\item\label{item:InCellImproved}	Check whether $y$ is in the cell of $p$ by comparing $d(y,p)$ to $d(y,a)$ for all sites $a$ in $S_I[y,d(p,y)]$ (as a subset of the rectangular list of sites: see Remark \bref{rem:MethodEndpointImproved}\beqref{item:SubsetRectangularList} below). Whenever a site $a$ has been considered, it is removed from the rectangular list of sites and is not considered anymore. If either $d(y,p)\leq d(y,a)$ for all sites $a$ in $S_I[y,d(p,y)]$ or if $S_I[y,d(p,y)]$ is empty of sites, then $y$ belongs to the cell of $p$ (see Remark \bref{rem:MethodEndpointImproved}\beqref{item:EmptyOfSites} below) and we go to Step \beqref{item:OutputImproved}. If $d(y,a)<d(y,p)$ for some site $a\in S_I[y,d(p,y)]$, then we go to Step \beqref{item:CloseNeighborImproved}. 
\item\label{item:OutputImproved}	If $y$ is in the cell of $p$, then $y$ is the endpoint and $L$ is its associated  line, namely $L$ is a bisector or an edge of the boundary of $X$ on which $y$ is located. The calculation along the ray is complete.
\item\label{item:CloseNeighborImproved}	Now we know that $y$ is not in the cell of $p$, namely $d(y,a)<d(y,p)$ for some site $a$. Let $CloseNeighbor:=a$ (if there are several such sites $a$, then pick one of them, no matter which).
\item\label{item:uImproved}	Find the point of intersection (call it $w$) between the given  ray and the bisector line $L$ between $p$ and $CloseNeighbor$.  This intersection is always nonempty.  The line $L$ can  easily be found because it is vertical to the vector $p-CloseNeighbor$ and passes via the point $(p+CloseNeighbor)/2$. 
\item\label{item:y=u}	let $y:=w$; go to Step \beqref{item:InCellImproved}.
\end{enumerate}
\end{method}
Illustrations related to Method \bref{method:EndpointImproved} can be found in Figures 
\bref{fig:VoronoiParallelBoxImprove1}--\bref{fig:VoronoiParallelBoxImprove2}.

\begin{figure}[t]
\begin{minipage}[t]{1\textwidth}
\begin{center}
{\includegraphics[clip, scale=0.8]{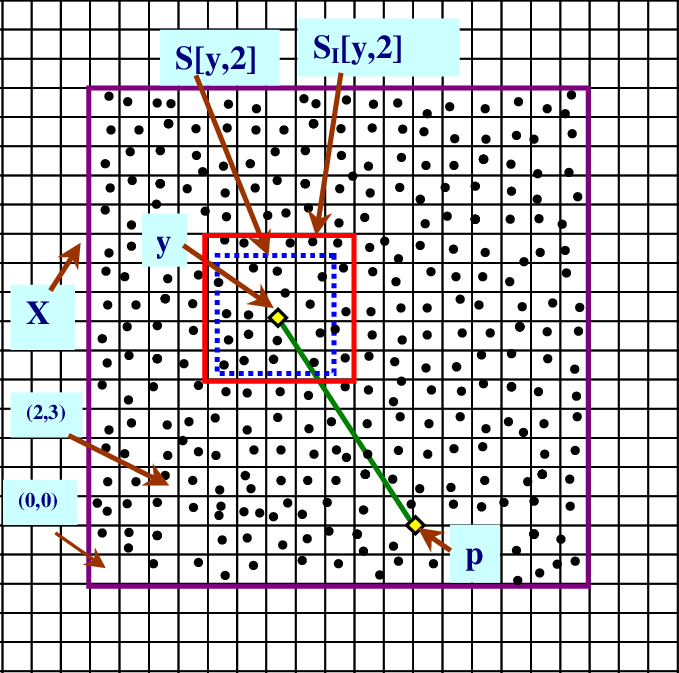}}
\end{center}
 \caption{An illustration of issues related to Method \bref{method:EndpointImproved}. Here $s=1$.}
\label{fig:VoronoiParallelBoxImprove1}
\end{minipage}
\end{figure}

\begin{figure}[t]
\begin{minipage}[t]{1\textwidth}
\begin{center}
{\includegraphics[clip, scale=0.8]{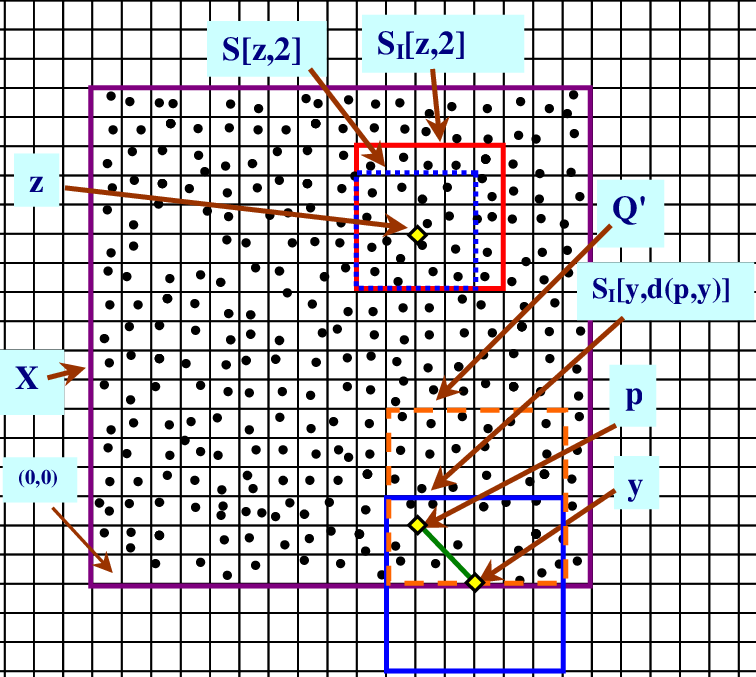}}
\end{center}
 \caption{Another illustration of issues related to Method \bref{method:EndpointImproved}. Here $s=1$ and $Q'$ is an integer rectangle contained in $X$ and contains $X\cap S_I[y,d(p,y)]$. It also  contains the same number of buckets as $S_I[y,d(p,y)]$.}
\label{fig:VoronoiParallelBoxImprove2}
\end{minipage}
\end{figure}

\begin{remark}\label{rem:MethodEndpointImproved}
Here are a few comments regarding Method \bref{method:EndpointImproved}. 
\begin{enumerate}[(i)]
\item The number $1.01$ in $\eta$ from Step \beqref{method:EndpointImproved:1.01} is somewhat arbitrary. It can be replaced by any number greater than 1, e.g., 1.0000001. See the proof of Lemma \bref{lem:SitesInBetaSquare}. 
\item The distance comparisons between $d(p,y)$ and $d(y,a)$ in Steps \beqref{item:S_I[y,2 beta s]} and \beqref{item:InCellImproved} can be  done, e.g., by  starting with shell 0, namely with the bucket in which $y$ is located, then comparing $d(p,y)$ to $d(y,a)$ for all sites $a$ in shell 0; then (if $d(p,y)\leq d(y,a)$ for all sites $a$ in shell 0), comparing $d(p,y)$ to $d(y,a)$ for all sites $a$ in shell 1 (namely the shell of eight buckets around shell 0), then (if $d(p,y)\leq d(y,a)$ for all sites $a$ in shell 1) doing the same with shell 2, i.e., with the shell of 16 buckets around shell 1, and so forth, until some site $a$ in one of the shells is found to satisfy $d(y,a)<d(p,y)$ or until we go over all sites $a$ in the corresponding integer rectangle (namely, the integer rectangle is $S_I[y,2\beta s]$ in Step \beqref{item:S_I[y,2 beta s]}  and $S_I[y,d(p,y)]$ in Step \beqref{item:InCellImproved}). 
\item\label{item:2beta_s} When the sites are independently and uniformly distributed, then there is a rather low probability that in Step \beqref{item:S_I[y,2 beta s]} the inequality $d(p,y)\leq d(y,a)$ holds for all the sites in $a\in S_I[y,2\beta s]$. The reason is as follows: elementary considerations show that at least quarter of $S_I[y,2\beta s]$ is contained in $X$, and this quarter is a square of side  length $(2\beta+1)s$, namely this square is composed of $2\beta+1$ buckets; an analysis similar to the one given in the proof of Lemma \bref{lem:EmptySquare} shows that this quarter contains, with probability which not smaller than $1-ne^{-(2\beta+1)^2}$, at least one site (so, if $n=10^{15}$ and $\beta=3$, then this probability is not smaller than $0.999999$); however, any site $a$ located in $S_I[y,2\beta s]$ satisfies $d(y,a)<d(p,y)$: this is just a consequence of Lemma \bref{lem:SitesInBetaSquare} below.

\item\label{item:DeleteSite} The reason for creating the rectangular list of sites (in Step \beqref{item:RectangularListOfsites}) is to avoid problems in later computations. Indeed, consider for example the case where we first compute the endpoint of some ray in the direction of $\theta$. Assume that instead of creating the rectangular list of sites and removing the sites from it, we remove, in Step \beqref{item:InCellImproved}, the sites from the global list of sites. If now we want to compute the endpoint of another ray, or the Voronoi cell of another site, we are left with much less sites in the global list of sites; in other words, the list of sites that we have in our hands for these calculations is erroneous (this list might even be empty, and so we will not be able to start the computations related to the Voronoi cell of the sites which have been removed from the list).

\item\label{item:SubsetRectangularList} The rectangular list of sites is just a bucketing structure of the type mentioned in Method \bref{method:Preprocessing}, but instead of containing all the buckets and sites in the entire world $X$, it contains copies of the buckets and sites contained in the integer rectangle $S_{I}[\tilde{y},d(p,\tilde{y})]$, where $\tilde{y}$ is the first temporary endpoint. After creating this rectangular list, each time when we check whether some sites are in $S_I[y,d(p,y)]$, where $y$ is a temporary endpoint, we consider $S_I[y,d(p,y)]$ as a subset of  the rectangular list, namely we look for  sites which are in the rectangular list and  their locations are in $S_I[y,d(p,y)]$. 

\item\label{item:EmptyOfSites} Lemma \bref{lem:SitesInLargeSquare} (parts \beqref{item:true_endpoint1}--\beqref{item:true_endpoint2}) below ensures that for the sake of distance comparisons, it is sufficient to consider only the sites located inside the integer rectangles $S_{I}[y,d(p,y)]$, where $y$ is a temporary endpoint. Once it is verified that $d(p,y)\leq d(a,y)$ for all $a\in S_{I}[y,d(p,y)]$, then $y$ is the true endpoint, or, if $S_{I}[y,d(p,y)]$ is empty of sites, then $y$ is also the true endpoint.

\item It is possible to formulate other versions of Method  \bref{method:EndpointImproved}. One possible version is a one in which in we do not create the rectangular list of sites in Step \beqref{item:RectangularListOfsites}, and in Step \beqref{item:InCellImproved} we do not delete the site $a$ from the list of sites after comparing  $d(p,y)$ to $d(a,y)$, but rather we do nothing with $a$. This version of the method might be a bit slower (since the site $a$ may be considered again in future distance comparisons), but it is still correct, for reasons which are quite similar to the ones given in the proof of Proposition \bref{prop:MethodEndpointImprovedIsCorrect}.
\end{enumerate}
\end{remark}

\section{Appendix 2: several theoretical and practical aspects related to the algorithm and to Theorem  \bref{thm:CorrectnessOfTheAlg} }\label{app:TheoryPractice}
This section discusses issues related to the time complexity of Algorithm 1, as well as other theoretical and practical aspects related to this algorithm. For the sake of convenience of the reading, the section is divided into a few subsections. 

\subsection{The model of computation: the serial case}\label{subsec:SerialModelComputation}  
The model of computation in the case of one processing unit is an ordinary computer with memory, namely a sequential model such as the von Neumann model. We assume that each of the $n$ sites is stored in a memory cell of its own, an assumption which usually holds in practice, e.g., when these sites are originated from real-world measurements. Note, however, that one can think of cases in which the above-mentioned assumption does not hold, as in  the  case when the sites can be obtained by some predetermined formula so that  their locations are known in advance, at least in principle; as an illustration to this situation, suppose that $X:=[0,1]^2$ and that the number of sites $n$ satisfies $n:=\nu^2$ for some natural number $\nu$ which is given as an input; suppose further that $p_k$, namely site number $k\in\{1,\ldots,n\}$, satisfies  $p_k:=(1/(\delta_k+2),1/(\sigma_k+1))$ where $\delta_k\in \{0,1\ldots,\nu-1\}$ and $\sigma_k\in \{1,2,\ldots,\nu\}$ are the unique nonnegative integers which satisfy $k=\delta_k \nu+\sigma_k$ (Euclidean division); in this case we do not need to store any of the sites, but can calculate them in real time, although this procedure is time consuming. 

A further assumption in the model is that arithmetic operations, such as  
comparisons between two numbers, array manipulations, reading or writing to the memory and so on, are $O(1)$ independently of $n$. This assumption is implicit in the analyses of the time complexity  of many geometric algorithms, including all of the Voronoi algorithms that we are aware of. 
In particular, this assumption implies that each number is represented by at most $MaxBit$ bits for some known and fixed number $MaxBit$ (say, 64 or 8196), an assumption which holds true in standard computers and in widely used data structures (different data types, e.g., ``int'' and ``double'', may be represented using a different number of bits, but $MaxBit$ is 
an upper bound on the sizes of all of these types). However, this assumption casts a (very large) strict upper bound on the size of the input $n$, namely $n\leq 2^{MaxBit}$.  If one wants to take into account the case where arithmetic operations depend on $n$ (including the number of digits after the floating point) and if we further assume that the model of computation allows an unlimited number of memory cells, then a multiplying factor expressing this dependence (e.g., $O(\log(n))$) should be added to all of the complexity results in Theorem  \bref{thm:CorrectnessOfTheAlg} below, as well as to all of the corresponding complexity results in the literature. 

\subsection{A few comments related to Theorem  \bref{thm:CorrectnessOfTheAlg}\beqref{item:TimeComplexityUniformDistribution} }
Here are several comments regarding Theorem  \bref{thm:CorrectnessOfTheAlg}\beqref{item:TimeComplexityUniformDistribution}:

\begin{remark}\label{rem:TimeComplexityUniformDistribution}
\begin{enumerate}[(i)]
\item First, the parameter OneTwo is needed only to ensure that $\alpha_1$ is an odd number so that $\beta$ will be an integer, and also that  $\sqrt{\alpha_1}>\lfloor\sqrt{\ln\left(n(1+2e^{-400})/\epsilon\right)}\rfloor$, because of a technical reason needed in the proof. 
\item Second, the number 1.01 which appears in \beqref{eq:alpha2} is somewhat arbitrary: any number greater than 1 is OK, say 1.00000001. 
\item\label{item:alpha_1< alpha_2<} Third, the conditions \beqref{eq:ln(n/epsilon)}--\beqref{eq:alpha2m} are always satisfied for large enough $n$ because the floor function $\lfloor \cdot \rfloor$ grows at the same rate as the identity function, and  for arbitrary $\tau_1,\tau_2,\tau_3>0$ one has $\lim_{n\to\infty}(\tau_1  n/\ln^{\tau_2}(\tau_3 n))=\infty$. 
\item Fourth, once $n\in\N$ is large enough, then the numbers $m_1:=m_2:=\sqrt{m}$ mentioned after \beqref{eq:m1m2} are fine in the sense that they do satisfy \beqref{eq:m1m2}, because from \beqref{eq:alpha2m} it follows that $\sqrt{m}>2\sqrt{\alpha_2}$. Of course, in this case $X$ should be a square. Conversely, once we start with a world $X$ which is a square with a side length $\ell_X>0$, and define, for each $n\in\N$ (large or not) $m_1:=m_2:=\lfloor\sqrt{n}\rfloor$, $s:=\ell_X/m_1$ and $m:=m_1^2$, then obviously $m_1=m_2=\sqrt{m}$ and $m=\lfloor\sqrt{n}\rfloor^2$, and now, if  we let $100<n\in\N$ to be sufficiently large, then, as explained in Item \beqref{item:alpha_1< alpha_2<} above, Conditions \beqref{eq:ln(n/epsilon)}--\beqref{eq:alpha2m} will be satisfied (and hence also the condition $\sqrt{m}>2\sqrt{\alpha_2}$ mentioned in \beqref{eq:m1m2}, since it follows immediately from \beqref{eq:alpha2m}). In other words, if we start with a world $X$ which is a square and fix some $\epsilon\in (0,1)$,  then, for all $n\in\N$ large enough, if $n$ distinct point sites are generated in $X$ independently according to the uniform distribution, and if we apply our computational methods (Algorithm 1 and Method \bref{method:EndpointImproved}) for computing the Voronoi cells of these sites, then we are guaranteed that with a probability which is at least $1-\epsilon$, the total number of distance comparisons for computing all the $n$ Voronoi cells is bounded by some universal constant multiplied by $n\log(n/\epsilon)$.  

\item Fifth, it is worth substituting some  values in the various parameters involved in the theorem; for instance, if $200000\leq n\leq 10^{100}$ and $\epsilon=0.000001$, then \beqref{eq:ln(n/epsilon)}--\beqref{eq:alpha2m} are satisfied, and, moreover, it is guaranteed that with probability which is at least 0.999999, not more  than $686664n$ distance comparisons are made in the computation of all of the Voronoi cells. 
[Note: the above-mentioned estimate on the number of distance comparisons is rather practical for the foreseeable future as explained in the following lines; according to contemporary estimates, the number of subatomic particles in the observable universe (namely, the part of the space located in a ball centered at the Earth and having a radius of about 46 billion light years) is estimated to be between $10^{80}$ \cite{Padilla2017misc} to $10^{82}$ \cite{Villanueva2009-2018misc}; now, in our model of computation we assume that each of the $n$ sites is stored in a separate memory unit (namely, in a separate memory cell); since each  such a memory unit is composed of at least one (actually much more than one) subatomic particle and all the memory units are different from each other, we conclude that $n$, namely, the number of sites, is bounded above by the number of these subatomic particles; therefore $n$ is bounded above by $10^{82}$, and so it definitely cannot exceed $10^{100}$.] 

\item Sixth, it is worth elaborating more on the meaning of the probabilistic estimates given in Part \beqref{item:TimeComplexityUniformDistribution} of Theorem \bref{thm:CorrectnessOfTheAlg}. In our case, since the $\epsilon$ parameter can be arbitrary small, the estimate $1-\epsilon$ can be arbitrary close to 1, and hence ``with high  probability'' Algorithm 1 will do at most $O(n\log(n/\epsilon))$ distance comparisons when it computes the entire Voronoi diagram using one processing unit. On the other hand, in the literature one can see frequently other ``probabilistic notions'' regarding the time complexity of some algorithms, for example  that the time complexity is, ``on average'', such and such (other common phrases:  ``the expected time  complexity'' or ``the time complexity in expectation'' or ``the average case time complexity''). 
These probabilistic notions are considerably different from the probabilistic notion that we use, namely different from ``with high  probability''. 

To see the difference, consider for instance the case where $n$ balls are put in $n$ boxes so that each ball has the same probability $1/n$ to be in each of the $n$ boxes, and the balls are distributed in the boxes in an independent manner (i.e., we can think of the locations of the balls as being i.i.d random variables distributed according to the uniform distribution).  Elementary considerations show that the expected number of balls in each box is 1. However,  the probability that in each box there is exactly one ball is $n!/n^n$. As follows from  Stirling's formula (see \beqref{eq:StirlingRobbins} below), this number tends exponentially fast to 0 as $n$ tends to  infinity. In other words, although on average  there is one ball in each box, the probability that a random configuration (according to the uniform distribution) of balls in the boxes leads to exactly one ball in each box is very low.  Hence, although there are ``linear expected time'' algorithms  for computing the Voronoi diagram of uniformly distributed sites    \cite{BentleyWeideYao1980, Dwyer1991}, it definitely does not mean, and - as far as we know - it has never been proved  mathematically, that with high probability these algorithms do $O(n)$ number of calculations when computing the Voronoi diagram of a random configuration of sites. 

\item\label{item:NonRectangularBucket} Seventh, one may wonder whether it is possible to generalize Theorem  \bref{thm:CorrectnessOfTheAlg}\beqref{item:TimeComplexityUniformDistribution} for the case where $X$ is not a square or any other rectangle for which the conditions of Theorem  \bref{thm:CorrectnessOfTheAlg}\beqref{item:TimeComplexityUniformDistribution} hold. My intuition is that it is possible, and in what follows I want to present two possible paths for showing this. 

The first path is still for special cases of $X$, such as when $X$ is a perfect (equilateral) triangle or a perfect hexagon. In this case, instead of buckets which are squares as the building blocks in the  partitioning of $X$, one will use buckets of other forms such as perfect triangles and perfect hexagons, respectively. It seems that the analysis given in Subsection \bref{subsec:Improvements} for proving Theorem  \bref{thm:CorrectnessOfTheAlg}\beqref{item:TimeComplexityUniformDistribution} can be extended  without much efforts also for these new cases.    

The second path for extending Theorem  \bref{thm:CorrectnessOfTheAlg}\beqref{item:TimeComplexityUniformDistribution} is for the case where $X$ is a general bounded and convex polygon with a nonempty interior, and even for the case where  $X$ is a general bounded and convex set with a nonempty interior. The idea is to partition $X$ into (usually) non-rectangular small  buckets which are induced by square buckets, with the help of a certain (usually nonlinear) one-to-one mapping $\mathscr{F}$ from the square $[-1,1]^2$ onto $X$. 

Such a mapping always exists: for instance, one can define $\mathscr{F}(x):=c_X+(\|x\|_{\infty}/\M_{X-c_X}(x))x$ for all $0\neq x\in [-1,1]^2$ and $\mathscr{F}(0):=c_X$, where $c_X$ is the center of mass of $X$, the $\|\cdot\|_{\infty}$ norm satisfies $\|(x_1,x_2)\|:=\max\{|x_1|,|x_2|\}$, and $\M_{X-c_x}$ is the Minkowski functional (which induces a so-called ``convex distance function'') of the translation of $X$ by $c_X$ so that the center of mass of the translated $X$ will be the origin. The mapping $\mathscr{F}$ is continuous since $\M_{X-c_X}$ is continuous (the only continuity issue of $\mathscr{F}$ is at the origin, but since the Minkowski functional is homogeneous, one has $\M_{X-c_X}(\mathscr{F}(x)-c_X)=(\|x\|_{\infty})/\M_{X-c_X}(x))\M_{X-c_x}(x)=\|x\|_{\infty}\to 0$ as $x\to 0$, and so $\mathscr{F}$ is continuous at the origin too). Moreover, it is invertible and its inverse $\mathscr{F}^{-1}$ satisfies $\mathscr{F}^{-1}(y)=(\M_{X-c_X}(y-c_X)/\|y-c_X\|_{\infty})(y-c_X)$ for all $c_X\neq y\in X$ and $\mathscr{F}^{-1}(c_X)=0$, and also $\mathscr{F}^{-1}$ is continuous (this is a consequence of, say, \cite[Inequality (2.2)]{ReemReich2018jour(Polarity)} with $C:=X-c_X$ and the fact that all norms on $\R^2$ - in particular, the Euclidean norm and the $\|\cdot\|_{\infty}$ norm - are equivalent). In other words, the above-mentioned $\mathscr{F}$ is a homeomorphism. 

Now, given any  homeomorphism $\mathscr{F}$ from $[-1,1]^2$ onto $X$, whatever its form is (not necessarily the one mentioned in the previous paragraph), we look at the set $\{\mathscr{F}^{-1}(p_k): k\in K\}$  located at $[-1,1]^2$, a set which is induced by the set of sites $\{p_k: k\in K\}$ located in $X$. We call any point in  $\{\mathscr{F}^{-1}(p_k): k\in K\}$  a pre-site. We now partition $[-1,1]^2$ into square buckets and put any pre-site in a corresponding bucket according to Method \bref{method:Preprocessing}. We then partition $X$ into $\lfloor \sqrt{n}\rfloor^2$ (usually) non-rectangular small buckets by letting $B_i:=\mathscr{F}(\wt{B_i})$ whenever $\wt{B}_i$ is a square bucket in $[-1,1]^2$, $i\in\{1,2,\ldots,\lfloor \sqrt{n}\rfloor^2\}$, and we put any site $p_k\in X$ in its corresponding non-rectangular small bucket, namely $p_k$ is in some non-rectangular bucket $B\subseteq X$ if and only if $\mathscr{F}^{-1}(p_k)$ is in a square bucket $\wt{B}\subseteq [-1,1]^2$. Our intuition is that the analysis done in Subsection \bref{subsec:Improvements}, possibly with some modifications, for the case of non-rectangular buckets, will be fine too, at least for certain mappings $\mathscr{F}$, possibly homeomorphisms which satisfy additional properties, but of course, a full proof is required.  (We also note that in the special case where $X$ is a parallelogram, then one can obviously let $\mathscr{F}$ to be any affine mapping which maps $[-1,1]^2$ onto $X$, such as the affine mapping which is induced by the linear mapping which maps the basis $\{(0,1),(1,0)\}$ onto the basis which induces $X$.)

\end{enumerate}
\end{remark}

\subsection{Possible improvements }  
It is not clear whether the bound on the time complexity presented in Theorem  \bref{thm:CorrectnessOfTheAlg} is tight, since maybe it can be improved, possibly by a better analysis and by other enhancements to the algorithm in addition to the one presented in Subsection  \bref{subsec_app:ImprpovedMethod}. It might be that such enhancements can  give better bounds or at least better constants (at least in the one processor case): for example, this may be achieved by computing the cells in a certain order (e.g., using plane sweep), or by improving the endpoint computation (Method \bref{method:Endpoint}), or by taking into account in a better way the distribution of the sites (for example, by a more general bucketing technique than the one presented in Subsection \bref{subsec_app:ImprpovedMethod}), and so on. 

\subsection{Serial time complexity: a comparison with other algorithms}\label{subsec:TimeComplexityComparison} 
The upper bound on the worst case serial time complexity (when one processing unit is involved) is at least as good as the 
bound $O(n^2)$ of the incremental method \cite{GreenSibson1977}, \cite{OhyaIriMurota1984}, \cite[pp. 242-251]{OBSC}. The bound is also better than the corresponding bound of the naive method \cite[pp. 230-233]{OBSC} which is $O(n^2\log(n))$. On the other hand, it is  worse than the $O(n\log(n))$ of some algorithms (plane-sweep, divide-and-conquer, the method based on convex hulls) \cite{Aurenhammer}. It should be emphasized however that it is still not known that the established upper bound is tight, and in addition, even if it is tight, our algorithm has various advantages, as described in Section \bref{sec:Introduction}. Furthermore, in common scenarios,  such as when the sites are distributed uniformly, the performance is better with  high probability (Theorem  \bref{thm:CorrectnessOfTheAlg}\beqref{item:TimeComplexityUniformDistribution}). 

In this connection, it is worth saying again that we assume in the formulation and the proof of Theorem  \bref{thm:CorrectnessOfTheAlg} that all the sites are distinct, namely $p_j\neq p_k$ whenever $j,k\in K$ satisfy $j\neq k$. In order to check whether this assumption does hold without increasing the time complexity estimates, one can first order the list of sites according to the lexicographic order (that is, $p_j\leq p_k$ if and only if either $p_{j,1}<p_{k,1}$ or ($p_{j,1}=p_{k,1}$ and  $p_{j,2}\leq p_{k,2}$), where $p_j=(p_{j,1},p_{j,2})$ and $p_k=(p_{k,1},p_{k,2})$), a sorting task which takes $O(n\log(n))$  machine operations; then one goes over the ordered list of sites and checks, starting from the second site in the list, whether a site is equal to the previous site, and  removes this site if indeed it is equal to the previous site. The second task takes $O(n)$ machine operations, and so in total one performs $O(n\log(n))$ machine operations for verifying  whether all the sites are distinct (instead of performing $O(n^2)$ machine operations by  going over all the $0.5n(n-1)$ pairs of sites and checking whether the two sites in each pair are equal). If some sites in the initial list are equal, then the resulting list of sites will be smaller than the initial list of sites, and this new list of sites (with, say, $n'<n$ sites) will be the input for the algorithm.

\subsection{Serial time complexity: worst case vs. common scenarios}\label{subsec:CommonScenarious} 
The difference between the potential worst case scenarios 
and the common one is similar in some sense to the Quicksort algorithm for sorting \cite{CLRS-book-2001,Hoare1962} whose average case time complexity is $O(n\log(n))$, but its worst case time complexity is $O(n^2)$. 
The simplex algorithm for linear programming \cite{Dantzig1951,Dantzig1963} provides another  example (efficient in practice, with polynomial average case complexity \cite{SpielmanTeng2004}, but exponential time complexity in the worst case \cite{KleeMinty1972}). Related phenomena   
occur in the Voronoi case, e.g., the average case complexity $O(n\log(n))$ of the incremental method \cite{GuibasKnuthSharir} versus the $O(n^2)$ worst case. 
	
\subsection{The parallel case: model of computation and time complexity}\label{subsec:Parallel}\label{subsec:Parallel} So far we have discussed mainly the serial case in which only one processing unit is involved. In this subsection we discuss the case where $Q\in\N$ processing units  are involved in the computation of the whole diagram. Our model of computation is a one in which every processing unit can work in principle independently of the other processing units. We wrote ``in principle'', since an implementation of this model in  practice may suffer from issues: for instance, if we assume a shared memory model in which the list of sites is stored once in the memory and then becomes available to all the processing units, then whenever two or more  different processing units need to consider at the same time the same site (for instance, for a distance comparison calculation), then they need to access at the same time the same memory places, and this leads to a conflict between them, unless the hardware supports concurrent read. 

One way to overcome this problem is to assume in advance the so-called PRAM  (or P-RAM) CREW model, that is, Parallel Random Access Memory Concurrent Read Exclusive Write model \cite{AumannRabin1994jour, BrodnikGrgurovic2018jour, FortuneWyllie1978incol, GibbonsSpirakis1993book, Wyllie1979PhD}, at least for the part of the memory in which the list of sites is stored. In other words, each of the memory cells in which the sites are stored can be read, at the same time, by every processor. The rest of the memory cells can be divided between the processors so that each processor will have its own memory cells and will not need to read or write into memory cells which other processors work with.

Another way to overcome this problem is to make sure that the processors are truly independent, in the sense that each one of them has its own memory and a copy of the list of sites is available to each of the processors, and they will not interact in any way with each other; of course, the task of coping the list of sites to the memory of each processing unit will add a $\Theta(Qn)$ term to the total time complexity. In this  connection we note that a linear (as a function of $n$) lower bound on the time complexity always exists no matter how large the number of processors $Q$, simply because storing the $n$ sites in the memory (or even creating the $n$ labels $1,\ldots,n$ for the sites) requires $\Omega(n)$ machine operations, although the coefficient before the $n$ is rather small (actually, even creating in advance $n$ memory cells in the machine adds an $\Omega(n)$ term, but one can argue that this term is not part of the calculation).

Similarly to Subsection  \bref{subsec:SerialModelComputation}, another assumption in the model is that arithmetic operations  such as comparisons between two numbers, array manipulations, reading or writing to the memory and so on, are $O(1)$ independently of $n$. We also assume a synchronous relationship between the processors in the sense that they start their activity at the same time and that each one computes its sets of Voronoi cells without any halt.   

In what follows we assume any of the above-mentioned models. We note that the methods described in this paper (Algorithm 1, Method  \bref{method:EndpointImproved} and so on) are suitable for other models as well, such as a shared memory model in which no concurrent read of several processors is possible (that is, EREW: Exclusive Read Exclusive Write), and a model in which the processors can interact by sharing between themselves  some data such as the edges and vertices that they find during the computation of each cell. However, the time complexity analysis in these  cases becomes more complicated and is not considered here.  

From now on we assume that processor $Q_i$ computes a set $A_i$ of cells determined by their sites, $i\in\{1,\ldots, Q\}$ (an illustration: assume that there are $n=10$ sites, $Q=3$ processors, and define $A_1:=\{1,3,5,7,9\}$, $A_2:=\{2,4,6\}$ and $A_3:=\{8,10\}$). Then the time complexity is $\max\{\sum_{k\in A_i}O(r_k e_k): i\in \{1,\ldots, Q\}\}$, as shown in Lemma \bref{lem:ParallelComplexity}. In addition, and similarly to the explanation given in Subsection  \bref{subsec:Improvements}, it can be shown that given $\epsilon\in (0,1)$, if the sites are (independently and) uniformly distributed, then with probability which is at least $1-\epsilon$, no more than $O(r_k e_k)=O(\log(n/\epsilon)e_k)$ distance comparisons are done, and hence the time complexity becomes $\max\{\sum_{k\in A_i}O(\log(n/\epsilon)e_k): i\in \{1,\ldots,Q\}\}$. This estimate is rather good, since  only in very rare  events one has $e_k>20$ (often $e_k=6$), and hence, in the common case where $e_k\leq 20$ for each $k\in K$ and where each processor computes $\approx n/Q$ Voronoi cells, the above-mentioned estimate becomes $20(n/Q)O(\log(n/\epsilon))$. 

\subsection{The parallel case: additional remarks}\label{subsec:ParallelAdditional}
In the analysis presented in Subsection \bref{subsec:Parallel}, and also in Section \bref{app:Proofs}, it is assumed that each cell is computed by a one processing unit without the help of other processing units. However, it should be noted that one of the advantages of our algorithm is that it also allows parallelizing of the computation of each cell, because each subwedge can be handled independently of other subwedges and so the work can be divided between several processing units. Hence one may  also find it helpful to analyze the case of several processing units which compute the same Voronoi cell. 

Anyway, returning back to the worst case scenario, one may want to 
avoid cases where a certain large cell (with many edges) slows down the whole computation. To avoid such a case, one can program in advance each processing unit to check during its computation of a given cell whether this  cell has too many edges (e.g., more than 20 edges); in such a case the processing unit can halt its work and go to other cells in its list of cells to be computed. The edges and vertices of the problematic cell will be found later, either from the computation of edges and vertices of neighbor cells with less edges, or by the help of other processing units which can come and help to compute the problematic cell once they finish their job of computing less problematic cells.

\section{Appendix 3: proofs}\label{app:Proofs}
In this section we prove the correctness of  Method \bref{method:EndpointImproved} (Proposition \bref{prop:MethodEndpointImprovedIsCorrect} below) and present the full proof of Theorem \bref{thm:CorrectnessOfTheAlg}. The proofs of these claims can be found in several subsections. The first subsection (Subsection \bref{subsec:MethodEndpointImprovedIsCorrect}) is devoted to the proof of Proposition \bref{prop:MethodEndpointImprovedIsCorrect}, the second subsection (Subsection \bref{subsec:SerialParallelTimeComplexity}) is devoted to the proof of most of Theorem \bref{thm:CorrectnessOfTheAlg} (with the exception of Part \beqref{item:TimeComplexityUniformDistribution}), and the last subsection (Subsection \bref{subsec:Improvements}) is devoted to the proof of Theorem \bref{thm:CorrectnessOfTheAlg}\beqref{item:TimeComplexityUniformDistribution}.

Before presenting the proofs, we want to give a few details regarding our 
notation and terminology for this section. In general, we use the notation mentioned in Section \bref{sec:Preliminaries}. In addition, in the sequel $p$ represents a given site, say $p=p_k$ for some $k\in K:=\{1,\ldots,n\}$. As already mentioned, we assume that all the sites are 
distinct, known in  advance, and no site is located on the boundary of the world $X$.
In particular $\min\{d(p_k,p_{j}): k\neq j\}>0$. 
Since the distance between $p$ and the boundary of the cell is positive, 
there is a small circle with center $p$ which is contained in the interior of the cell.  
On this circle we construct the projector (Figure \bref{fig:SubedgeSubwedge}), which, as a consequence, has a positive distance from the boundary  
of the cell and from $p$. Any other projector which may be used in the algorithm for producing the rays and  
detecting the vertices is simply a scalar multiplication of this small projector,  and it  
produces exactly the same rays (hence enables to detect the same vertices).

We say that a vertex $v$ of the cell corresponds to a subedge $F$ of the projector if 
$v$ is located inside the wedge corresponding to $F$. For instance, in Figure \bref{fig:projectorEuclidWedgePhase2} 
the vertex  $v$ corresponds to the subedge $F:=\{\theta_1,\theta_3\}$ and no other vertex  
corresponds to $F$. All the rays that we consider emanate from a given site $p=p_k$ and are 
sometimes identified with their direction vectors. Given a wedge generated by two rays, the 
rays on the boundary of the wedge are called boundary rays and any other ray in the wedge is called  
an intermediate ray. For instance, in Figure \bref{fig:projectorEuclidWedgePhase2},  if we look 
at the wedge generated by $\theta_1$ and $\theta_2$, then these rays are boundary rays and $\theta_3$ 
is an intermediate ray. Continuing with Figure \bref{fig:projectorEuclidWedgePhase2}, 
after additional steps, more rays will be generated between $\theta_1$ and $\theta_3$. These rays will be intermediate rays in the wedge generated by $\theta_1$ and $\theta_2$ 
(and also of the wedge generated by $\theta_2$ and $\theta_3$).  Between $\theta_1$ and $\theta_3$ no 
additional rays will be generated. We denote by $|A|$ the number of elements in the finite set $A$.

\begin{figure}[t]
\begin{minipage}[t]{1\textwidth}
\begin{center}
{\includegraphics[scale=0.76]{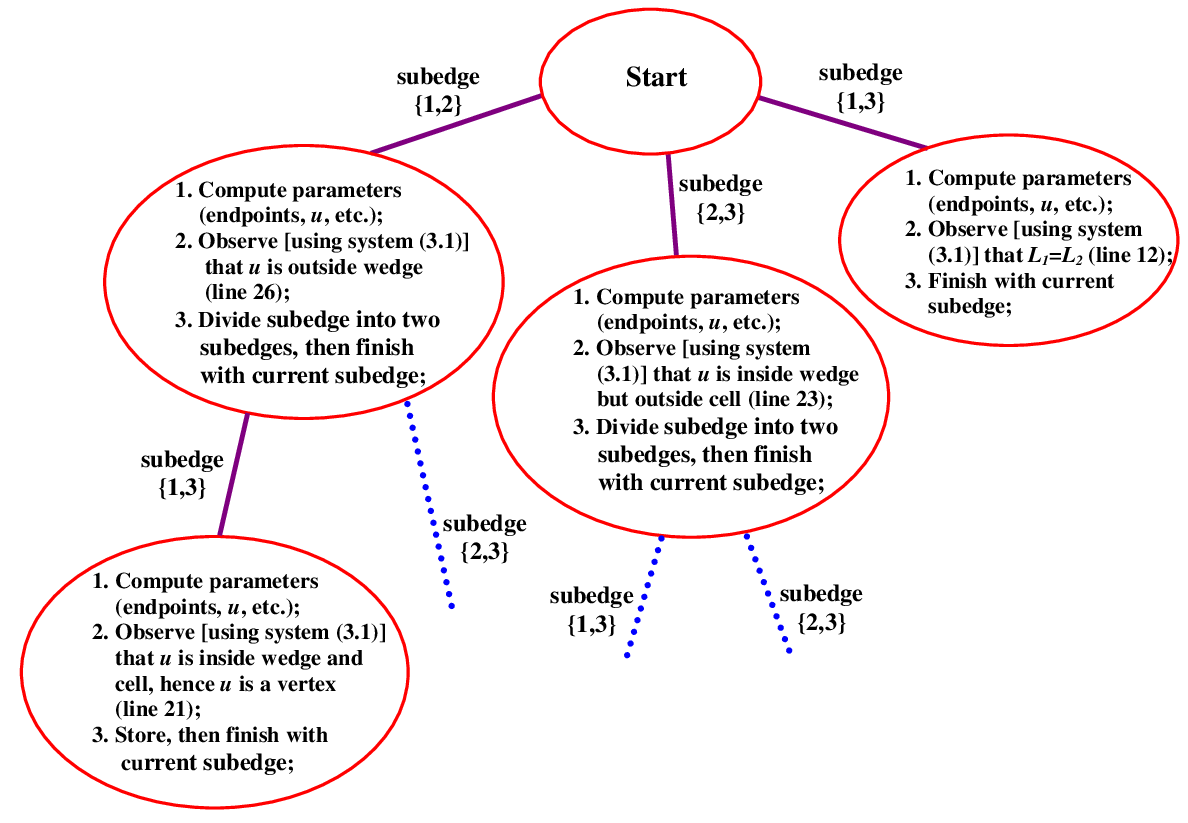}}
\end{center}
 \caption{Illustration of the algorithm tree for the cell of 
 Figure \bref{fig:projectorEuclidWedgePhase1}. }
\label{fig:AlgorithmTree}
\end{minipage}
\end{figure}

Part of the proof is to analyze the time complexity of the algorithm and, in particular,  
to prove that the algorithm terminates after finitely many steps. For doing this 
 it is convenient to consider the algorithm tree, 
namely the graph whose nodes are the main stages in the algorithm, and a node has subnodes (children nodes) 
whenever a subsedge is divided. More precisely, a node contains the various calculations done with respect to a given subedge until it is removed from $EdgeQueue$, among them computing the corresponding endpoints and doing arithmetic operations. Upon being removed, either the considered subedge is divided into two subedges (in lines \beqref{line:DivideParallelLines}, \beqref{line:DivideOutsideCell} or \beqref{line:DivideOutsideWedge}) with which the algorithm continues (these nodes are the children nodes of the current node), or nothing happens with it since either the algorithm detects no vertices in the corresponding wedge (line \beqref{line:SameLines}) or it detects a vertex (line \beqref{line:Vertex}) and then it stores it. Of course, once we are able to derive an upper bound on
 the number of nodes in the tree and on the number of operations done in each node, we have 
 an upper bound on the time complexity of the algorithm. Figure \bref{fig:AlgorithmTree} shows  an illustration of the algorithm tree related to the cell mentioned in Figure \bref{fig:projectorEuclidWedgePhase1}.

\subsection{Proof of the correction of the improvement of Method \bref{method:Endpoint}}\label{subsec:MethodEndpointImprovedIsCorrect} In this subsection we prove that Method \bref{method:EndpointImproved} is correct. More precisely, we prove the following proposition:

\begin{prop}\label{prop:MethodEndpointImprovedIsCorrect}
Method \bref{method:EndpointImproved} is correct, namely it outputs, after finitely many machine operations, the true endpoint. 
\end{prop}

The proof is based on the following lemmas. 
\begin{lem}\label{lem:Byy'Syy'}
Let $p,y\in X$, $p\neq y$. Let $B[y,d(p,y)]$ be the disk of radius $d(p,y)$ and 
center $y$. If $y'$ is in the segment $[p,y]$, then the disk $B[y',d(p,y')]$ is contained in 
$B[y,d(p,y)]$ and the integer rectangle $S_I[y',d(p,y')]$ is contained in 
the integer rectangle $S_I[y,d(p,y)]$.
\end{lem}
\begin{proof}
Let $x\in B[y',d(p,y')]$. Then $d(x,y')\leq d(p,y')$. Since $y'\in [p,y]$, we obtain from the triangle  inequality that 
\begin{equation*}
d(x,y)\leq d(x,y')+d(y',y)\leq d(p,y')+d(y',y)=d(p,y).  
\end{equation*}
Thus $x\in B[y,d(p,y)]$. Since $x$ was  arbitrary, the inclusion 
$B[y',d(p,y')]\subseteq B[y,d(p,y)]$ holds. 

In order to see the inclusion $S_I[y',d(p,y')]\subseteq S_I[y,d(p,y)]$, we observe that from \beqref{eq:S_I} one has $S_I[y,d(p,y)]=J_1\times J_2$ and $S_I[y',d(p,y')]=J'_1\times J'_2$, where
\begin{equation}\label{eq:J}
 J_j:=[s\lfloor (y_j-d(p,y))/s\rfloor,s(\lfloor (y_j+d(p,y))/s\rfloor+1)],  
\end{equation}
\begin{equation}\label{eq:J'}
 J'_j:=[s\lfloor (y'_j-d(p,y'))/s\rfloor,s(\lfloor (y'_j+d(p,y'))/s\rfloor+1)], 
\end{equation} 
 and where $j\in \{1,2\}$, $y=(y_1,y_2)$, and $y'=(y'_1,y'_2)$. 
The point $y'-d(p,y')(1,0)$ is in $B[y',d(p,y')]$, thus it is in $B[y,d(p,y)]$ from the previous paragraph. 
Hence $d(y'-d(p,y')(1,0),y)\leq d(p,y)$ and thus $|y_1-(y'_1-d(p,y'))|\leq d(p,y)$. Therefore  $y_1-d(p,y) \leq y'_1-d(p,y')$.  
Similarly  the points $y'+d(p,y')(1,0)$, $y'-d(p,y')(0,1)$, $y'+d(p,y')(0,1)$ are in  $B[y,d(p,y)]$  and the following inequalities hold 
\begin{equation*}
y'_1+d(p,y')\leq y_1+d(p,y),\quad y_2-d(p,y) \leq y'_2-d(p,y'), \quad y'_2+d(p,y')\leq  y_2+d(p,y). 
\end{equation*}
Since the floor function $\lfloor\cdot\rfloor$ is increasing on $\R$, the previous inequalities and \beqref{eq:J}-\beqref{eq:J'} imply the inclusion 
$S_I[y',d(p,y')]\subseteq S_I[y,d(p,y)]$. 
\end{proof}

\begin{lem}\label{lem:zCloserThan_y}
Consider Method  \bref{method:EndpointImproved}, and suppose that  $y$ is the $\ell$-th temporary endpoint for some $\ell\in\N$. More precisely, if $\ell=1$, then $y$ is the point obtained in either Step \beqref{step:2eta>d(y_x,p)} or Step \beqref{item:S_I[y,2 beta s]}. If $\ell>1$, then $y$ is the point obtained  after performing $\ell$ times  Step \beqref{item:InCellImproved}, assuming in Step \beqref{item:y_outside_cell} one goes to Step \beqref{item:InCellImproved}. If in Step \beqref{item:y_outside_cell} one goes to Step \beqref{item:CloseNeighborImproved}, then $y$ is the point obtained after performing $\ell-1$ times Step \beqref{item:y=u}.  Let $z$ be the $(\ell+1)$-th temporary endpoint (assuming the process has not been ended in the $\ell$-th temporary endpoint). Then $d(p,z)<d(p,y)$. In particular, if $\tilde{y}$ is the first temporary endpoint and $y$ is any temporary endpoint (first or not), then $y\in [p,\tilde{y}]$. 
\end{lem}
\begin{proof}
Since $\ell+1\geq 2$, it means that during the application of Method  \bref{method:EndpointImproved}, Step \beqref{item:y=u} was performed at least one time. But once we perform this step at least one time, the $\ell$-th  temporary endpoint will be the point called $w$ in Step \beqref{item:uImproved}, namely the point of intersection between our ray  and the bisector line $L$ between $p$ and some site $a$ (which we called ``$CloseNeighbor$'' in Step \beqref{item:CloseNeighborImproved}). However, $w$ is strictly inside $[p,y]$.  Indeed, since we arrived at Step \beqref{item:CloseNeighborImproved} before defining $z$, we know that $y$ is closer to $CloseNeighbor$ than to $p$. Thus $y$ is strictly inside  the half-plane of $CloseNeighbor$ (the half-plane whose boundary is $L$). Since $p$ is in the other half-plane and since our ray emanates from $p$ and arrives at $y$, it intersects the boundary of that half-plane, namely it intersects the bisector $L$ (at $w$)  before it arrives at $y$. Hence $w$ is in the half-open line segment $[p,y)$, and obviously $w\neq p$ because $w$ is on the bisector while $p$ is not. In other words, $w$ is indeed strictly inside $[p,y]$, as claimed. Because $z=w$, the assertion follows. Finally, if $\tilde{y}$ is the first temporary endpoint and $y$ is some temporary endpoint (first or not), then the previous lines and induction imply that $y$ belongs to  $[p,\tilde{y}]$.
\end{proof}

\begin{lem}\label{lem:SitesInLargeSquare}
Consider the ray emanating from the site $p$ in the direction of the unit vector $\theta$. 
\begin{enumerate}[(i)]
\item\label{item:d(a,y)>d(p,y)} Given an arbitrary point $y$ located on this ray, for each site $a\in X$ which is not in $S_I[y,d(p,y)]$ we have $d(y,p)<d(y,a)$.
\item\label{item:true_endpoint1} When applying Method  \bref{method:EndpointImproved} and performing distance comparisons between the current temporary endpoint $y$ and some sites, it is sufficient to consider only the sites in $S_I[y,d(p,y)]$ and not any other site in $X$. If the result of these distance comparisons is that $d(y,p)\leq d(y,a)$ for all $a\in S_I[y,d(p,y)]$, then $y$ is the true endpoint. 
\item\label{item:true_endpoint2} If $y$ is the current temporary endpoint and $S_I[y,d(p,y)]$ is empty of sites, then $y$ is the true endpoint. 
\end{enumerate}
\end{lem}

\begin{proof}
\begin{enumerate}[(i)]
\item Let $x\in B[y,d(p,y)]$. Then $|x-y|\leq d(p,y)$. Since $|z|_{\infty}\leq |z|$ for all $z\in \R^2$, where $|\cdot|$ is the Euclidean norm and $|\cdot|_{\infty}$ is $\ell_{\infty}$ norm, 
 the inequality $|x-y|_{\infty}\leq d(p,y)$ holds. Therefore $B[y,d(p,y)]\subseteq S[y,d(p,y)]$. The definition of the integer rectangle implies that   
 $S[y,d(p,y)]\subseteq S_I[y,d(p,y)]$. We conclude that 
 $B[y,d(p,y)]\subseteq S_I[y,d(p,y)]$, and hence, from the choice of $a$, we have  
  $a\notin B[y,d(p,y)]$. Hence $d(a,y)>d(p,y)$, as claimed. 
\item Suppose that $y$ is a temporary endpoint and we check, during Method  \bref{method:EndpointImproved},  whether $y$ is the true endpoint.  According to Part \beqref{item:d(a,y)>d(p,y)}, there is no need to perform any distance comparison between $d(p,y)$ and $d(y,a)$ whenever $a$ is a site located outside $S_I[y,d(p,y)]$, since the result of this comparison is known in advance, namely we know in advance that $d(p,y)<d(y,a)$ without doing the actual comparison. 

Suppose now that after performing the distance comparisons between $d(p,a)$ and $d(y,a)$ for every $a\in S_I[y,d(p,y)]$, we discover that $d(y,p)\leq d(y,a)$ for all $a\in S_I[y,d(p,y)]$. Since we already know from Part \beqref{item:d(a,y)>d(p,y)} that  $d(y,p)<d(y,a)$ whenever $a\notin S_I[y,d(p,y)]$, it follows that $y$ is in the Voronoi cell of $p$. In order to see that $y$ is the true endpoint, we need to show that the part of the ray beyond $y$ (in the direction of $\theta$) is outside the cell of $p$. According to Method  \bref{method:EndpointImproved}, either $y=y_X$, or $y$ is on the bisector line $L$ located between $p$ and some site $a$. In the first case  obviously the part of the ray beyond $y$ is outside the cell, since it is outside $X$ (and the Voronoi cell that we consider is restricted to $X$). In the second case the part of the ray beyond $y$ is strictly inside the half-plane of $a$ (namely, the half-plane determined by the bisector $L$), and hence any point $z$ on it satisfies $d(z,a)<d(z,p)$; thus $z$ cannot be in the Voronoi cell of $p$, as required. 
\item The explanation is similar to the one given in Part \beqref{item:true_endpoint1}. 
\end{enumerate}
\end{proof}

\begin{lem}\label{lem:SitesInBetaSquare}
Let $p\in \R^2$ and let $\theta$ be a unit vector. Let $\beta\in\N$, $s>0$, and  $y:=p+\eta\wt{\theta}$,  where $\wt{\eta}>\sqrt{2}(2\beta+1)s$. Then any $a\in S_I[y,2\beta s]$ satisfies $d(a,y)<d(p,y)$.
\end{lem}
\begin{proof}
We first recall that $|z|\leq \sqrt{2}|z|_{\infty}$ for all $z\in \R^2$; in addition, since $a\in S_I[y,2\beta s]$, we have $d(a,y)=|a-y|\leq \sqrt{2}|a-y|_{\infty}\leq \sqrt{2}(2\beta+1)s$; therefore, the  assumption of $\wt{\eta}$ and the fact that $y=p+\wt{\eta}\theta$ imply that 
$d(y,a)\leq \sqrt{2}|y-a|_{\infty}\leq \sqrt{2}\cdot(2\beta+1)s<\wt{\eta}=d(p,y)$, as required. 
\end{proof}
Now we can prove Proposition \bref{prop:MethodEndpointImprovedIsCorrect}. 
\begin{proof}[Proof of Proposition \bref{prop:MethodEndpointImprovedIsCorrect}]
First we need to show that Method \bref{method:EndpointImproved} terminates after finitely many machine operations. It can be observed that any of the steps in this method terminates after finitely many machine operations (e.g., in   Steps \beqref{item:S_I[y,2 beta s]} and \beqref{item:InCellImproved} this is immediate from the fact that there are finitely many sites). In particular, after finitely many machine operations we will arrive at Step \beqref{item:y_outside_cell}. Now we need to show that no infinite loop can occur, and once we show this, it will follow that Method \bref{method:EndpointImproved} does indeed terminate after finitely many machine operations. 

When we arrive at Step \beqref{item:y_outside_cell} at the first time, either the first temporary endpoint is some point $y$ located outside the cell of $p$ (we know that it is located outside the cell of $p$ since in previous steps we found some site $a\neq p$ which satisfies $d(y,a)<d(y,p)$), or we do not know whether $y$ is located outside the cell, but we know that $y=y_X$. In both cases we enter the loop which is composed of Steps \beqref{item:InCellImproved}--\beqref{item:y=u} (in the first case we enter this loop via Step \beqref{item:CloseNeighborImproved}) and, in particular, arrive at Step \beqref{item:InCellImproved}. 

In order to see that this loop terminates after finitely many machine operations, we observe that if, after performing Step \beqref{item:InCellImproved} at the first time, we find that the rectangular list of sites  is empty, then the loop terminates in Step \beqref{item:OutputImproved}. Otherwise, we make at least one distance comparison between $d(p,y)$ and $d(a,y)$ for some site $a\in S_I[y,d(p,y)]$ (probably many sites), and each time that we do such a distance comparison, we remove the considered site from the rectangular list of sites. Hence the number of sites in the rectangular list of sites strictly decreases. The same reasoning holds whenever we perform Step \beqref{item:InCellImproved}. Since the number of sites in the rectangular list of sites is finite (it is not larger than $n$, namely the total number of sites), we can perform Step \beqref{item:InCellImproved}  no more than this finite number before either the list becomes empty or we find that $d(p,y)\leq d(a,y)$ for every site $a\in S_I[y,d(p,y)]$. In both cases the loop terminates at Step \beqref{item:OutputImproved}, as required. 

Finally, it remains to see that the output of Method \bref{method:EndpointImproved} is correct, namely that it is the true endpoint (as defined after \beqref{eq:Tdef}). We already know from previous paragraphs that the method terminates after finitely many times in Step \beqref{item:OutputImproved}. Hence, we can use Lemma \bref{lem:SitesInLargeSquare} (either Part \beqref{item:true_endpoint1} or Part \beqref{item:true_endpoint2}, depending on whether, at the last time when Step \beqref{item:InCellImproved}  is performed, either $S_I[y,d(p,y)]$ is not empty of sites and $d(y,p)\leq d(y,a)$ for all $a\in S_I[y,d(p,y)]$, or whether  $S_I[y,d(p,y)]$ is empty of sites) to conclude that the last temporary endpoint $y$ is the true endpoint, as required. 

Apparently we finished the proof, but, as a matter of fact, there is one additional issue which is a little bit subtle (but simple) that should be addressed. The issue is why, whenever we perform Step \beqref{item:InCellImproved} and look for sites in $S_I[y,d(p,y)]$, as a subset of the rectangular list of sites, it is indeed true to regard $S_I[y,d(p,y)]$ as a subset of the rectangular list of sites. In other words, we need to show why the sites in the integer rectangle $S_I[y,d(p,y)]$, where $y$ is the temporary endpoint, do belong to the integer rectangle which induces the rectangular list of sites, namely to $S_I[\tilde{y},d(p,\tilde{y})]$, where $\tilde{y}$ is the first temporary endpoint. The reason is simple: Lemma \bref{lem:zCloserThan_y} ensures that $y$ belong to $[p,\tilde{y}]$, and hence, according to Lemma \bref{lem:Byy'Syy'}, we have $S_I[y,d(p,y)]\subseteq S_I[\tilde{y},d(p,\tilde{y})]$, as required. 
\end{proof}

\subsection{Proof of Theorem \bref{thm:CorrectnessOfTheAlg}: serial and parallel time complexity (without improvement)}\label{subsec:SerialParallelTimeComplexity}
In this section we discuss the proof of most of the parts of Theorem \bref{thm:CorrectnessOfTheAlg}, with the exception of Part \beqref{item:TimeComplexityUniformDistribution}. The proof is based on several lemmas. 
\begin{lem}\label{lem:EndpointEdge}
Any ray generated by the algorithm  hits an edge of the considered Voronoi cell. Moreover, the point hit by the ray is the true endpoint.
\end{lem}
\begin{proof}
Suppose that our ray, which starts at $p$, goes in the direction of some unit vector $\theta$. Since the  ray  starts at a point located inside the cell (namely, $p$), and goes outside the cell, possibly outside the region $X$  (because $X$ is bounded), the intermediate value theorem implies that this ray intersects the boundary of the cell. Since the boundary of the cell is composed of edges, the ray hits some edge, say $\wt{L}$. As a matter of fact, the point hit by the ray is $p+t\theta$, where $t=\sup\{s\in [0,\infty): p+s\theta\,\,\textnormal{in the cell}\}$. 

Denote by  $g$ the point hit by the ray. We claim that this point is the true endpoint. Indeed, 
by the definition of the endpoint (see \beqref{eq:Tdef} and the lines after it) the ray has an endpoint $p+T(p,\theta)\theta$. Since any point on the ray beyond $g$ is strictly outside the cell (either because it is outside the world $X$ or because it is on a halfspace of another site), it follows that $p+T(p,\theta)\theta\in [p,g]$, and so $[p,p+T(p,\theta)\theta]\subseteq [p,g]$. However, any point beyond $[p,p+T(p,\theta)\theta]$ is outside the cell, by the definition  of the endpoint. Since we know that $g$ is in the cell (it belongs to the edge $\wt{L}$),  we obtain that $g\in [p,p+T(p,\theta)\theta]$. Thus $[p,g]\subseteq [p,p+T(p,\theta)\theta]$ and hence $[p,g]=[p,p+T(p,\theta)\theta]$. Therefore $g=p+T(p,\theta)\theta$, i.e., $g$ is a true endpoint, as required. 
\end{proof}

\begin{lem}\label{lem:DifferentEdge}
Consider a projector subedge $F=\{\theta_1,\theta_2\}$ and the corresponding wedge 
generated from it. Let $L_1$ and $L_2$ be the lines on which the endpoints 
$p+T(p,\theta_i)\theta_i$, $i=1,2$ are located. 
Suppose that an intermediate ray in the direction of $\theta_3$ is 
generated when $F$ is considered. Then the ray of $\theta_3$ hits an edge 
different from  $\wt{L_i}$, $i=1,2$.
\end{lem}
\begin{proof} 
Lemma \bref{lem:EndpointEdge} ensures that  the ray of $\theta_3$ hits some edge, and the point hits by the ray is the true endpoint. 

Now we observe that an intermediate ray is created only when either $L_1$ and $L_2$ are parallel 
(line \bref{line:ParallelLines}),  or when they intersect in the wedge but outside the cell (line \bref{line:OutsideCell}), 
or when they  intersect outside the wedge (line \bref{line:OutsideWedge}). 

In the first case the ray of $\theta_3$ is located on a line which is parallel to $L_1$ 
 and $L_2$, and so the ray of $\theta_3$ cannot hit them either $L_1$ or $L_2$. Therefore $\wt{L_3}\neq \wt{L_i},i=1,2$. 

Now consider the second case and suppose to the contrary that, say,  $\wt{L_1}=\wt{L_3}$.  
Let $u:=L_1\cap L_2$.  By our assumption $u$ is outside the cell and  (see line \bref{line:OutsideCell})  it is hit by the ray of $\theta_3$. Thus $u\neq p+T(p,\theta_3)\theta_3$. 
But both $u$ and $p+T(p,\theta_3)\theta_3$ are assumed to belong to $\wt{L_1}$ and hence to $L_1$. 
Since a line is fully determined by two distinct points on it, and since $u$ and $p+T(p,\theta_3)\theta_3$ are two different points located on both $L_1$ and the ray of $\theta_3$, it follows that $L_1$ contains the whole ray of $\theta_3$. In particular, $L_1$ passes via $p$. This is impossible since $p$ 
is in the interior of the cell and hence its distance to any one of the boundary lines (including $L_1$) is positive. Thus $\wt{L_3}\neq \wt{L_1}$, and similarly, $\wt{L_3}\neq \wt{L_2}$. 

Now consider the third case and suppose to the contrary that, say,  $\wt{L_1}=\wt{L_3}$.  
Let $u:=L_1\cap L_2$.  By our assumption $u$ is outside the wedge generated by the rays of $\theta_1$ 
and $\theta_2$. Since $\theta_3$ is in the direction of $p-u$ (see line \bref{line:Theta3OutsideWedge}), its ray does not hit $u$ and hence $u\neq p+T(p,\theta_3)\theta_3$.  But both $u$ and $p+T(p,\theta_3)\theta_3$ are assumed to belong to $\wt{L_1}$ and hence to $L_1$. Therefore $L_1$ contains the segment $[u,p+T(p,\theta_3)\theta_3]$ and, in particular, it passes via $p$. This is impossible since $p$ 
is in the interior of the cell and hence its distance to any of the boundary lines (including $L_1$) is positive. Thus $\wt{L_3}\neq \wt{L_1}$, and similarly, $\wt{L_3}\neq \wt{L_2}$. 
\end{proof}

\begin{lem}\label{lem:IntermediateRay}
Consider two different rays generated by the algorithm, say in the direction of 
$\phi_1$ and $\phi_2$. Assume further that they are between, and possibly coincide with, two initial rays, i.e., between rays induced by two corners of the projector (thus, in particular, if the initial rays are in the directions of $\hat{\theta}_i$, $i\in\{1,2,3\}$, then the case where $\phi_1$ is between $\hat{\theta}_1$ and $\hat{\theta}_2$, and $\phi_2$ is between $\hat{\theta}_2$ and $\hat{\theta}_3$ is forbidden). Consider an intermediate ray between 
the rays of $\phi_i,\,i=1,2$ whose direction vector is $\theta_3$. 
Then the endpoint of the ray of $\theta_3$ and the endpoints of 
the rays of $\phi_1$ and $\phi_2$  must be located on different edges of the cell. 
\end{lem} 
\begin{proof}
Suppose to the contrary that the assertion is not true, say the endpoint of  
the ray of $\phi_1$ and the endpoint of 
the  ray of $\theta_3$ are located on the same edge $\wt{L}$. 
Because the ray of $\theta_3$ is an intermediate ray located between two initial rays, it is 
generated from some subedge $\{\theta_1,\theta_2\}$ of the projector. Here  
 $\theta_1$ is between $\phi_1$ and $\theta_3$ 
and possibly equals $\phi_1$, and $\theta_2$ is between $\theta_3$ and $\phi_2$ and possibly equals $\phi_2$. 
This generation corresponds to the routines of  lines \bref{line:ParallelLines}, \bref{line:OutsideWedge}, or \bref{line:OutsideCell} of the algorithm, and so in particular $\theta_1\neq \theta_3$ and $\theta_2\neq \theta_3$. 
Denote by $\wt{L_i}$, $i=1,2$ the edges on which the endpoints corresponding to $\theta_i$, $i=1,2$ are located.

Since the endpoint of the ray of $\phi_1$ and the endpoint of the ray of $\theta_3$ are located on the same edge $\wt{L}$, and since the ray  of $\theta_1$ is between these rays, the ray  of $\theta_1$  intersects $\wt{L}$. Denote by $g$ this  point of intersection. Then $g$ is the true endpoint of the ray  of $\theta_1$, as follows from Lemma \bref{lem:EndpointEdge}. See also Figure \bref{fig:IntermediateRay}. 
We conclude that the endpoints of the rays of $\theta_3$ and of $\theta_1$ are 
 located on the same edge. But this is impossible, since by our assumption $\theta_3$ 
 was created from the projector subedge $\{\theta_1,\theta_2\}$ and hence, according to Lemma \bref{lem:DifferentEdge}, the ray of $\theta_3$ hits an edge different from the ones 
  associated with $\theta_1$ and $\theta_2$.   We arrived at a contradiction which proves the assertion. 
\end{proof}

\begin{figure}
\begin{minipage}[t]{1\textwidth}
\begin{center}
{\includegraphics[scale=0.8]{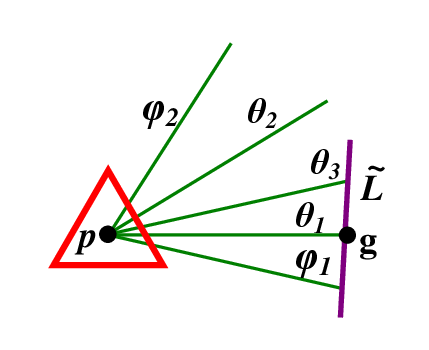}}
\end{center}
 \caption{An illustration of Lemma  \bref{lem:IntermediateRay}.}
\label{fig:IntermediateRay}
\end{minipage}
\end{figure}

\begin{lem}\label{lem:TerminateAlg}
Algorithm 1 terminates after a finite number of steps. In particular, the number 
of intermediate rays is finite. 
\end{lem}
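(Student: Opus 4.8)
The plan is to establish finiteness by bounding the total number of rays the algorithm ever shoots. Everything else follows: each ray carries a terminating endpoint computation (Method \ref{method:Endpoint}, finite by Remark \ref{rem:correction}) together with an $O(1)$ solve of the $2\times 2$ system \eqref{eq:B_lambda}, and processing a subface either closes it or replaces it by exactly two children --- one replacement per intermediate ray --- so a bound on the number of intermediate rays bounds the number of subfaces, hence the number of steps.

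First I would work inside a single one of the three initial cones $C$ generated by two corners of the simplex. These three cones partition the directions around $p$, each has angular width $2\pi/3<\pi$, so angular order inside $C$ is well defined; moreover every ray the algorithm generates lies in one of the three cones, since subfaces are only refined within the cone they came from and each intermediate ray is chosen strictly inside the current subface's cone (this is the strictness already used in the proof of Lemma \ref{lem:IntermediateRay}). List the rays generated inside $C$ in angular order, $\psi=r_0<r_1<\cdots<r_N=\psi'$, with $r_0,r_N$ the two corner rays bounding $C$ and every other $r_i$ an intermediate ray. For a ray $r$ in $C$ let $c(r)$ be the position, in the natural order along the polygonal arc $\partial R_k\cap C$, of the edge of the cell on which the endpoint $p+T(r,p)r$ lies; this endpoint is indeed on $\partial R_k$, since $T(r,p)$ is the parameter at which the ray leaves the cell.

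The count then rests on two facts. \emph{Monotonicity}: because $p$ is interior to the convex cell $R_k$, the radial exit map $r\mapsto p+T(r,p)r$ sweeps $\partial R_k\cap C$ monotonically as $r$ rotates through $C$, so $c(r)$ is non-decreasing in the angle of $r$. \emph{Separation}: applying Lemma \ref{lem:IntermediateRay} to each intermediate ray $r_i$ with $\phi_1=r_{i-1}$ and $\phi_2=r_{i+1}$ shows that the endpoint of $r_i$ lies on an edge different from those carrying the endpoints of $r_{i-1}$ and $r_{i+1}$; hence $c(r_i)\ne c(r_{i-1})$ and $c(r_i)\ne c(r_{i+1})$, so any two consecutive rays have distinct values of $c$. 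A non-decreasing integer sequence whose consecutive terms are distinct is strictly increasing, so $c(r_0)<c(r_1)<\cdots<c(r_N)$; since the cell has only $e_k$ edges this forces $N+1\le e_k$. Thus $C$ carries at most $e_k$ rays, at most $e_k-1$ of them intermediate, and summing over the three cones the algorithm generates $O(e_k)$ rays --- in particular finitely many --- whence termination and the finiteness of the set of intermediate rays.

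The step I expect to take the most care is \emph{Monotonicity} together with its interaction with the algorithm's edge bookkeeping at degenerate endpoints: when an endpoint happens to coincide with a vertex of $R_k$ it lies on two edges, and one must check that the edge the algorithm records for it (the bisector actually carrying an edge, per Remark \ref{rem:bisector}) is one of the two compatible with the monotone sweep, so that no value of $c$ is ``skipped and then returned to.'' This is not deep but it is the one ingredient not already delivered by the preceding lemmas; I would settle it by combining Lemma \ref{lem:DifferentFacet} with the selection rule of Remark \ref{rem:bisector}, noting also that for configurations in which no generated ray ends at a cell vertex the issue is vacuous. The remaining items --- that the radial sweep of a convex polygon is monotone, that all generated rays lie in the three initial cones, and that the per-ray and per-subface work is bounded --- are routine.
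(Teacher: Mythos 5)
Your argument is correct, but it is organized differently from the paper's proof. The paper proves Lemma \ref{lem:TerminateAlg} by contradiction: non-termination would force an infinite sequence of nested distinct subfaces, and each newly created subface detects (via Lemma \ref{lem:IntermediateRay}) an edge of the cell that could not have been detected before, contradicting the finiteness of the number of edges; the quantitative count of nodes ($2E+1$ per cone, at most $2e_k-1$ per cell) is postponed to Lemmas \ref{lem:Econe} and \ref{lem:e_k}, proved by induction. You instead give a direct count inside each of the three initial cones, combining the monotonicity of the radial exit map of the convex cell with Lemma \ref{lem:IntermediateRay} applied to consecutive triples of rays in angular order, obtaining at most $e_k$ rays per cone. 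This buys a quantitative bound of the same flavor as Lemma \ref{lem:Econe}/\ref{lem:e_k} already at the termination stage, and the monotonicity observation also makes the ``each new ray hits a new facet, never hit before'' mechanism more transparent (it is exactly what handles neighbors of a ray that are generated later than the ray itself, a case the paper's contradiction argument never needs because it only compares a new ray with earlier ones). Two points you defer should be nailed down, though neither is a gap in the approach: (i) the claim that every intermediate ray lies \emph{strictly} inside the cone of the subface it subdivides is genuinely needed and is not completely automatic in the branch of line \ref{line:OutsideCone} --- there one must rule out a mixed-sign solution $\lambda_1<0<\lambda_2$, which follows because both endpoints lie in the cell and hence each lies weakly on $p$'s side of the other line, forcing both $\lambda_i<0$ and so $p-u$ strictly inside the cone; and (ii) listing ``all rays generated'' in angular order presupposes finiteness, so the count should formally be applied to the rays generated up to an arbitrary finite stage, yielding a uniform bound and hence termination. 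With these two remarks inserted, your proof is a valid alternative to the paper's.
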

\begin{proof}
This follows from Lemma \bref{lem:IntermediateRay}. Indeed, suppose to the contrary that 
the algorithm does not terminate after finitely many steps. This means that the list 
of projector subedges $EdgeQueue$ is never empty. But in each stage of the algorithm (each node), 
either a subedge is divided (lines \bref{line:ParallelLines}, \bref{line:OutsideCell}, and \bref{line:OutsideWedge}) 
and then  replaced by its children subedges and deleted from $EdgeQueue$, 
or it has no children subedges  (lines \bref{line:SameLines} and \bref{line:Vertex}) 
 and hence it is deleted from $EdgeQueue$ shortly 
after its creation. After a subedge is deleted from $EdgeQueue$ it is never created again. 
As a result, the fact that the algorithm does not terminate after finitely many steps 
implies that we can find an infinite sequence $(F_i)_{i=1}^{\infty}$ of nested distinct subedges. Since these subedges are nested, their rays are located between two initial rays.  Now, when some subedge $F_{i+1}$ is created during the algorithm, it is created from $F_i$ by dividing $F_i$ (in lines \bref{line:ParallelLines}, \bref{line:OutsideCell}, and \bref{line:OutsideWedge}) into two subedges, where one of them is $F_{i+1}$ and the other is some subedge which we are not interested in. In any case, one of the rays of $F_{i+1}$ is equal to one of the rays of $F_i$, and the other ray is an intermediate ray between the rays of $F_i$, and it is different from both rays. 

We claim that this intermediate ray is located on an edge of the cell which is different from any other edge of the cell detected so far, namely any other edge on which the endpoints of all the previous subedges $F_j$, $j\in \{1,\ldots, i\}$ are located. Indeed, assume that the claim is not true and denote the intermediate ray by $\phi'$. Then the endpoint of the ray of $\phi'$ is located on some edge $\tilde{L}$ on which two additional endpoints are located, say the ones corresponding to the rays of $\phi''$ and $\phi'''$. But $\phi''$ and $\phi'''$ belong to previous subedges $F_{j_1}$ and $F_{j_2}$ for some $j_1,j_2\in \{1,\ldots, i\}$,  while $\phi'$ belong to $F_{i+1}$, and so the fact that the subedges are nested implies that the ray of $\phi'$ is located between the rays of $\phi''$ and $\phi'''$. This contradicts Lemma \bref{lem:IntermediateRay} and proves what we claimed. In other words, we conclude that when one of the nested subedges is created, a new edge of the cell is detected. Thus infinitely many distinct edges are detected, contradicting the fact that 
each cell has only finitely many edges (the argument mentioned above actually implies that no more than $n$ new edges can be detected). Hence the algorithm does indeed terminate after finitely many steps  
and, in particular, finitely many intermediate rays are created. 
\end{proof}

\begin{lem}\label{lem:Ewedge}
Given a wedge generated by two distinct unit vectors $\theta_1$ and $\theta_2$ 
(located between initial rays, possibly equal to the initial rays), 
let $e_{\textnormal{IntRays}}$ be the number of edges of the cell hit by intermediate rays, i.e., 
rays produced by the algorithm inside the wedge excluding the boundary rays of the wedge. 
Then the tree of the restriction of the algorithm to the given wedge contains exactly $2e_{\textnormal{IntRays}}+1$ nodes. 
\end{lem}
\begin{proof}
The proof is by induction on $e_{\textnormal{IntRays}}$. By Lemma \bref{lem:TerminateAlg} we know that $e_{\textnormal{IntRays}}$ is finite. 
Let $\wt{L_i}$ be the edge on which the endpoint  $p+T(p,\theta_i)\theta_i$ is located, 
$i=1,2$. If $e_{\textnormal{IntRays}}=0$, then there are two cases. In the first case the rays generated by  
$\theta_i$, $i=1,2$ hit the same edge and in this case (line \bref{line:SameLines}) the current projector subedge 
is deleted from $EdgeQueue$. Thus no subnode of the current node is created, i.e., the restriction 
of the algorithm tree to the wedge contains $1=2e_{\textnormal{IntRays}}+1$ nodes. In the second case the rays 
must hit different edges which intersect at a vertex (lines \beqref{line:Vertex}--\beqref{line:Store}), since otherwise either no intersection 
occurs or the intersection is a point outside the cell or outside the wedge, and the corresponding 
ray in the direction of $\theta_3$ (lines \bref{line:ParallelLines}, \bref{line:OutsideCell}, 
 \bref{line:OutsideWedge}) hits an edge contained in the wedge, contradicting the assumption that $e_{\textnormal{IntRays}}=0$. 
Therefore also in this case the current subedge 
is deleted from $EdgeQueue$ and no subnode of the current node is created and the restriction 
of the algorithm tree to the wedge contains $2e_{\textnormal{IntRays}}+1=1$ nodes. 

Now assume that $e_{\textnormal{IntRays}}\geq 1$ and the claim holds for any nonnegative integer not exceeding $e_{\textnormal{IntRays}}-1$. We show below that the claim holds for $e_{\textnormal{IntRays}}$ too. Indeed, if the current projector subedge $\{\theta_1,\theta_2\}$, which induces our wedge,  is not divided into two subedges $\{\theta_1,\theta_3\}$ and $\{\theta_2,\theta_3\}$, then 
the restriction of the algorithm tree to the wedge generated by $\{\theta_1,\theta_2\}$ contains only one node. Hence there cannot be any edge different from $\wt{L_1}$ and $\wt{L_2}$ which is hit by rays produced by the algorithm,  contradicting the assumption that $e_{\textnormal{IntRays}}\geq 1$. Therefore $\{\theta_1,\theta_2\}$ is divided, i.e., the root node has two children. Consider the number of edges $e_{1,3}$ and $e_{2,3}$ of the cell contained in  
the wedges generated by $\{\theta_1,\theta_3\}$ and $\{\theta_2,\theta_3\}$ respectively, excluding, in each wedge, 
 the edges (one or two) hit by the boundary rays.  
Since $e_{\textnormal{IntRays}}=e_{1,3}+e_{2,3}+1$ (where the 1 comes from the edge hit by the ray in the direction of $\theta_3$), 
it follows that $e_{1,3}<e_{\textnormal{IntRays}}$ and $e_{2,3}<e_{\textnormal{IntRays}}$. Thus the induction hypothesis implies that 
the trees of the restriction of the algorithm to the wedges generated by 
the subedges $\{\theta_1,\theta_3\}$ and $\{\theta_2,\theta_3\}$ contain exactly 
$2e_{1,3}+1$ and $2e_{2,3}+1$ nodes, respectively. Hence the tree generated by the restriction 
of the algorithm to the original wedge (the one corresponding to $\{\theta_1,\theta_2\}$) 
contains $(1+2e_{1,3})+(1+2e_{2,3})+1=2e_{\textnormal{IntRays}}+1$ nodes, as claimed. 
\end{proof}

\begin{lem}\label{lem:e_k}
Given a cell of some site $p=p_k$, the restriction of the algorithm tree to this cell 
contains at most $2e_k$ nodes, where $e_k$ is the number of edges of the cell.
\end{lem}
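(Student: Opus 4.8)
The plan is to bootstrap from Lemma \ref{lem:Econe} by covering the cell with the three cones induced by the simplex's initial faces. Recall that the initial $FaceQueue$ contains exactly the three faces $\{\psi_1,\psi_2\}$, $\{\psi_2,\psi_3\}$, $\{\psi_1,\psi_3\}$, whose associated cones partition the plane around $p$ (they share only boundary rays). By Lemma \ref{lem:Econe}, the restriction of the algorithm tree to the cone generated by initial face $j$ has exactly $2E_j+1$ nodes, where $E_j$ is the number of facets of the cell hit by intermediate rays strictly inside that cone. Summing over $j=1,2,3$, the total number of nodes in the algorithm tree for the cell is $\sum_{j=1}^3 (2E_j+1) = 2\left(\sum_{j=1}^3 E_j\right) + 3$.

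The crux is then to bound $\sum_{j=1}^3 E_j$ in terms of $e_k$. Each facet of the cell is hit by rays; a facet counted in $E_j$ is one hit by an \emph{intermediate} ray inside cone $j$ (i.e., not a boundary ray of that cone). The facets hit by the three initial rays $\psi_1,\psi_2,\psi_3$ — which are the boundary rays shared between adjacent initial cones — are exactly the ones excluded from every $E_j$. I would argue that each facet of the cell is hit by at least one ray the algorithm shoots (since the cones cover all directions and every facet meets the boundary of the cell, which is a union of the endpoints over all directions by Theorem \ref{thm:domInterval}), so the facets partition according to which cone(s) contain a ray hitting them. A facet lying strictly inside one cone contributes $1$ to exactly one $E_j$; a facet hit by an initial ray $\psi_i$ contributes $0$ to the $E_j$'s. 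Since at least one facet is hit by some initial ray (in fact, generically each $\psi_i$ hits a facet, giving at least one, but one always suffices for the bound — more carefully, because the three initial rays go in three very different directions and the cell is convex with $p$ in its interior, they hit facets, and at minimum one facet is "used up" as a boundary hit), we get $\sum_{j=1}^3 E_j \le e_k - 1$. Therefore the total node count is at most $2(e_k-1)+3 = 2e_k+1$.

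The main obstacle — and the reason the bound in the statement is $2e_k-1$ rather than $2e_k+1$ — is shaving off the extra $2$. The sharpening should come from looking more carefully at the three initial cones: when the restriction to an initial cone has $E_j = 0$, Lemma \ref{lem:Econe} still charges $1$ node, but in that degenerate situation the two boundary rays of that cone hit facets (possibly the same one) which are "double counted" across adjacent cones, or one of them coincides with a facet shared with a neighbor. More precisely, I would track the facets hit by the three boundary rays $\psi_1,\psi_2,\psi_3$: there are at most $3$ of them and at least $1$, and each such facet, being hit by an initial ray, is shared as a boundary between two of the three initial cones; a careful accounting of these shared boundary facets — noting that going around $p$ once, the sequence of facets hit returns to its start — should show $\sum_j E_j \le e_k - 3 + (\text{number of distinct facets hit by the }\psi_i)$, and since that last number is at most $3$, we recover $\sum_j E_j \le e_k$, which still gives $2e_k+1$; to get $2e_k-1$ one instead observes that the three $+1$ terms overcount because...

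Honestly, I expect the cleanest route to $2e_k-1$ is a direct global induction mirroring Lemma \ref{lem:Econe} but phrased for the whole cell: define $E$ now as the total number of facets of the cell (all of them get hit), and show by the same two-children recursion that the full tree, which starts not from one face but from the three initial faces arranged cyclically, has $2E - 1$ nodes because the cyclic arrangement means the "first" and "last" boundary facets coincide, removing one unit relative to three independent linear cones. So the key step I would carry out is: replace the count $\sum_{j=1}^3(2E_j+1)$ by noticing the three initial faces form a cycle rather than three disjoint intervals, and in a cycle of facets hit, the $+1$ contributions telescope to give $2e_k-1$. The hard part will be making this telescoping rigorous, i.e., correctly formalizing the claim that every facet is hit by some ray and that the facets hit by the initial rays are precisely the ones "between" the three cones, counted once rather than twice.
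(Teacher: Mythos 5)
Your proposal sets up the count exactly as the paper does --- sum $2E_j+1$ over the three initial cones via Lemma \ref{lem:Econe} to get $2\bigl(\sum_j E_j\bigr)+3$ --- but it then stalls at the bound $\sum_j E_j\le e_k-1$, which only yields $2e_k+1$, and you acknowledge you cannot close the remaining gap of $2$. The missing idea is a sharper accounting of the facets hit by the three \emph{initial} rays: these facets are disjoint from the facets hit by intermediate rays (by Lemma \ref{lem:IntermediateRay}), and there are at least \emph{two} distinct such facets, not just one. The reason is that a facet lies on a line $L$ at positive distance from $p$, so the directions whose rays from $p$ meet $L$ all lie in an open half-plane of directions; since the three simplex directions $\psi_1,\psi_2,\psi_3$ positively span the plane, they cannot all lie in such a half-plane, so the three initial rays cannot all hit the same facet. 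This gives $\sum_j E_j\le e_k-2$ and hence the claimed $2(e_k-2)+3=2e_k-1$. Your speculative ``cyclic/telescoping'' rewrite of Lemma \ref{lem:Econe} is not carried out and is not needed once this observation is in place.

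There is a second gap you pass over: you assert that each facet contributes to at most one $E_j$ (``a facet lying strictly inside one cone contributes $1$ to exactly one $E_j$''), but a facet need not lie inside a single initial cone --- it can straddle a boundary ray --- and ruling out that one facet is hit by intermediate rays belonging to two \emph{different} initial cones is a genuine geometric claim. The paper proves it by contradiction: if intermediate rays in two adjacent initial cones both hit a facet $\wt{L}$, then the initial ray between them lies in the cone they generate, a direct computation (equation \eqref{eq:RayFacet}) shows that this initial ray meets $\wt{L}$ at a point of the cell which must then be its endpoint, and this contradicts Lemma \ref{lem:IntermediateRay}, which forces the endpoint of that initial ray onto a facet different from $\wt{L}$. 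Without this non-double-counting argument your inequality $\sum_j E_j\le e_k-1$ (let alone $e_k-2$) is unjustified, so as it stands the proposal does not establish the stated bound.
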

\begin{proof}
The set of edges of the cell can be written as the union $I_{1,2}\cup I_{2,3}\cup I_{1,3}\cup I_0$, where $I_{1,2}$, $I_{2,3}$ and $I_{1,3}$ are the sets of edges hit by the intermediate rays belonging to the three  initial  wedges, and $I_0$ is the set of edges hit by the three initial rays, i.e., the rays shot in the direction of the corners of the projector. Lemma \bref{lem:IntermediateRay} ensures that $(I_{1,2}\cup I_{2,3}\cup I_{1,3})\cap I_0=\emptyset$. Moreover, below we prove that $I_{1,2}$, $I_{2,3}$ and $I_{1,3}$ are disjoint too. Once this is known, it follows that $e_k=|I_{1,2}|+|I_{2,3}|+|I_{1,3}|+|I_0|$. Since $|I_0|\geq 2$, we have 
\begin{equation}\label{eq:BoundRays}
|I_{1,2}|+|I_{2,3}|+|I_{1,3}|\leq e_k-2.
\end{equation}
However, the number of nodes in the algorithm tree restricted to the cell 
is the sum of nodes in each of the trees corresponding to the above-mentioned wedges, plus one additional root (Start) node, and by Lemma \bref{lem:Ewedge} we know that this number is $1+(2|I_{1,2}|+1)+(2|I_{2,3}|+1)+(2|I_{1,3}|+1)=2(|I_{1,2}|+|I_{2,3}|+|I_{1,3}|)+4$. Hence from \beqref{eq:BoundRays} it follows that the number of nodes in the algorithm tree restricted to the cell is at most $2e_k$, as required.

It remains to show that $I_{1,2}$, $I_{2,3}$ and $I_{1,3}$ are disjoint. Consider for instance $I_{1,2}$ and  $I_{2,3}$. We need to show that no edge $\wt{L}$ can be hit by an intermediate ray which belongs to $I_{1,2}$ and an intermediate ray which belongs to $I_{2,3}$. Suppose to the contrary that this happens. Let 
$\theta_{1,2}$ be the direction of the first intermediate ray which hits $\wt{L}$ and belonging to $I_{1,2}$, and let $\theta_{2,3}$ be the  direction of the second intermediate ray which hits $\wt{L}$ and belonging to $I_{2,3}$. Suppose that the first initial wedge is generated by $\widehat{\theta}_1$ and $\widehat{\theta}_2$ (these are the directions of the corners of the projector), and the second initial wedge is generated by $\widehat{\theta}_2$ and $\widehat{\theta}_3$. The ray of $\theta_{1,2}$ and $\theta_{2,3}$ cannot be opposite, otherwise $\wt{L}$ will be parallel  to itself (without coinciding with itself). Hence the linear wedge spanned by $\theta_{1,2}$ and $\theta_{2,3}$ (which is not a wedge generated by the algorithm: it is just a mathematical wedge needed for the proof) is not a half-plane. Since $\widehat{\theta}_2$ is strictly inside this linear wedge, we can write $\widehat{\theta}_2=\lambda_{1,2}\theta_{1,2}+\lambda_{2,3}\theta_{2,3}$ for 
some $\lambda_{1,2}>0$ and $\lambda_{2,3}>0$. The intersection of the edge $\wt{L}$ and this linear wedge is the segment $[p+T(p,\theta_{1,2})\theta_{1,2},p+T(p,\theta_{2,3})\theta_{2,3}]$. 

A direct calculation shows that  the ray of $\widehat{\theta}_2$ intersects $\wt{L}$ at a unique point $g$:  
this is done, for instance, by proving that there exist unique $t>0$ and $\alpha\in (0,1)$ satisfying 
\begin{equation}\label{eq:RayEdge}
p+t(\lambda_{1,2}\theta_{1,2}+\lambda_{2,3}\theta_{2,3})=
\alpha(p+T(p,\theta_{1,2})\theta_{1,2})+(1-\alpha)(p+T(p,\theta_{2,3})\theta_{2,3}),  
\end{equation}
e.g., by using the facts that $\theta_{1,2}$ and $\theta_{2,3}$ are linearly independent and $T(p,\theta_{1,2})>0$ and $T(p,\theta_{2,3})>0$, and then  equating the corresponding coefficients. Lemma \bref{lem:EndpointEdge} ensures that $g=p+T(p,\widehat{\theta}_2)\widehat{\theta}_2$. This contradicts Lemma  \bref{lem:IntermediateRay} which implies that the endpoint corresponding to $\widehat{\theta}_2$ is located on an edge different from the ones corresponding to $\theta_{1,2}$ and $\theta_{2,3}$, i.e., 
different from $\wt{L}$. This contradiction proves the assertion, namely that $\wt{L}$ cannot be hit by two intermediate rays belonging to two different initial wedges, as required. 
\end{proof}

\begin{lem}\label{lem:NodesO(n)}
Let $|X_e|$ be the number of edges of the polygon $X$. Consider the tree generated by the algorithm for the whole diagram. Then the number of nodes in this tree is bounded above by  $1+(12+2|X_e|)n=O(n)$. 
\end{lem}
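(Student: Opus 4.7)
The plan is to reduce the statement to two facts already at our disposal: the per-cell bound from Lemma~\ref{lem:e_k}, and the standard linear bound on the combinatorial complexity of a planar Voronoi diagram.

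First, I would invoke Lemma~\ref{lem:e_k} for each site $p_k$, $k=1,\ldots,n$, to get that the restriction of the algorithm tree to the cell $R_k$ contains at most $2e_k-1$ nodes, where $e_k$ denotes the number of facets (edges) of $R_k$. Since the tree for the whole diagram is the disjoint union of the trees for the individual cells (the algorithm is run independently on each site and no nodes are shared), the total number of nodes is at most
\begin{equation*}
\sum_{k=1}^{n}(2e_k-1)\;=\;2\sum_{k=1}^{n}e_k\;-\;n.
\end{equation*}

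The second step is to bound $\sum_{k=1}^{n}e_k$. Each edge of the Voronoi diagram (restricted to the polygonal world $X$) belongs to at most two cells: either it lies on the bisector between two sites and is then counted exactly twice in the sum $\sum_k e_k$, or it lies on the boundary of $X$ and is counted at most once per cell it touches. Hence $\sum_{k=1}^{n}e_k\leq 2E$, where $E$ is the total number of edges of the planar subdivision induced by the Voronoi diagram in $X$. The well-known linear complexity bound for planar Voronoi diagrams, cited in the paper (see \cite[p.~347]{Aurenhammer}, \cite[pp.~173-5]{ORourke1994}), gives $E=O(n)$, with the $O(1)$ contribution from the fixed polygonal boundary of $X$ absorbed into the constant. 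Combining the two inequalities yields a total node count bounded by $O(n)$, as required.

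I do not expect a major obstacle here: the argument is essentially additive, and the only subtlety is making sure that edges of cells meeting the boundary of $X$ are correctly accounted for, which is handled by the fact that $X$ itself is a polygon with a constant number of sides. Lemma~\ref{lem:TerminateAlg} guarantees that each per-cell tree is finite in the first place, so the sums above are all finite and the estimate is rigorous.
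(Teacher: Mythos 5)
Your proposal is correct and follows essentially the same route as the paper's own proof: invoke Lemma~\ref{lem:e_k} to bound the per-cell tree by $2e_k-1$ nodes, then bound $\sum_k e_k$ by $O(n)$ using the standard linear complexity of planar Voronoi diagrams. The only cosmetic difference is in how $\sum_k e_k$ is estimated: you argue that each diagram edge contributes to at most two cells, whereas the paper decomposes each cell's facet set into boundary-of-$X$ facets (summing to $O(1)$) and bisector facets (summing to $O(n)$); the two bookkeeping schemes are equivalent and both rest on the same cited fact.
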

\begin{proof}
The algorithm tree for the whole diagram is the union of the trees corresponding to 
each cell and an additional root node. By Lemma \bref{lem:e_k} the tree of the cell of $p_k$ has at most 
$2e_k$ nodes, where $e_k$ is the number of edges of the cell. The set of edges of each cell can be written as the disjoint union of two sets: the set of edges located on the boundary of the polygonal world $X$, and the set of edges located on a bisector between $p_k$ and another site $p_j$, $j\neq k$. 
Let $w_k$ be the size of the first set. Then $w_k$ is not greater than $|X_e|$, namely the number of edges of $X$. Thus  $\sum_{k=1}^n w_k \leq n|X_e|=O(n)$. Denote by $b_k$ the size of the second set. It is well known that $\sum_{k=1}^n b_k=O(n)$, in fact, $\sum_{k=1}^n b_k\leq 6n$. Indeed, from \cite[p. 347]{Aurenhammer}, \cite[pp. 173--175]{ORourke1994} we know that the number of edges in the Voronoi diagram is at most $3n$, and hence, since each edge is counted twice (it is shared by two cells), we have $\sum_{k=1}^n b_k\leq 6n$. (In \cite{Aurenhammer} it is assumed that the sites are in general position, namely no three sites are located on the same line, and no four sites are located on the same circle. However, if this is not true, then we can perturb the sites slightly so that their new configuration is a general position configuration; such a configuration actually enlarges the number of edges in the diagram, as explained in \cite{ORourke1994}. Hence $\sum_{k=1}^n b_k$ is bounded above by twice the number of edges in a general position configuration, which is bounded above by $6n$, as said before.)  Therefore 
\begin{equation}\label{eq:sum_e_k}
\sum_{k=1}^n e_k=\sum_{k=1}^n w_k+\sum_{k=1}^n b_k\leq (|X_e|+6)n=O(n), 
\end{equation}and hence the total number of nodes in the algorithm tree of the diagram is bounded above by  $1+\sum_{k=1}^n2e_k=1+(12+2|X_e|)n=O(n)$. 
\end{proof}

\begin{lem}\label{lem:BoundShotRays}
Let $\rho_k$ be the number of rays shot during the computation of the cell of the site $p_k$. 
 Then 
\begin{equation}\label{eq:NumberShotRays}
\sum_{k=1}^n\rho_k\leq (|X_e|+7)n.
\end{equation}
\end{lem}
\begin{proof}
We can write $\rho_k$ as the sum of the number of intermediate rays and the three initial rays. From the  proof of Lemma \bref{lem:e_k} we know that the first term is at most $e_k-2$ (see \beqref{eq:BoundRays}). 
Thus $\rho_k\leq e_k-2+3=e_k+1$. Hence 
$\sum_{k=1}^n\rho_k\leq n+\sum_{k=1}^n e_k$. The proof of Lemma \bref{lem:NodesO(n)}  (see \beqref{eq:sum_e_k}) implies that $\sum_{k=1}^n e_k\leq (|X_e|+6)n$. We conclude that 
$\sum_{k=1}^n\rho_k\leq (|X_e|+7)n$. 
\end{proof}

\begin{lem}\label{lem:EndpointO(n)}
The number of calculations needed to find the endpoint in some given direction, using Method \bref{method:Endpoint}, is bounded by a linear expression of $n$.  
\end{lem}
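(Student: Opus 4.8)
The plan is to reduce the whole estimate to one elementary monotonicity fact about a fixed ray, and then bound by a simple bookkeeping argument how often each site is inspected. First I would set up the geometry. Fix the unit vector $\theta$ and parametrize the ray by $y(s)=p+s\theta$, $s\ge 0$. For a competing site $a$ put $\Delta_a(s)=d(y(s),a)^2-d(y(s),p)^2$; a one-line computation gives $\Delta_a(s)=|a-p|^2-2s\langle\theta,a-p\rangle$, so $\Delta_a$ is affine in $s$ with $\Delta_a(0)=|a-p|^2>0$. Hence the set of parameters at which $a$ is strictly closer to $y(s)$ than $p$ is, i.e. $\{s\ge 0:\Delta_a(s)<0\}$, is either empty or a half-line $(s_a,\infty)$ with $s_a>0$. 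I will record this as a monotonicity statement: if $d(y(s_0),p)\le d(y(s_0),a)$, then $d(y(s),p)\le d(y(s),a)$ for every $s$ with $0\le s\le s_0$; and if moreover $y(s_0)$ lies on the bisector of $p$ and $a$, then $d(y(s),p)<d(y(s),a)$ for every $s<s_0$.

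Next I would read off the course of Method \ref{method:Endpoint}. It produces points $y_0,y_1,\dots$ on the ray, where $y_0$ may be taken to be the intersection of the ray with $\partial X$ (computable in $O(1)$, since $X$ has a fixed number of edges) and $y_{i+1}$ is the intersection of the ray with the bisector of $p$ and the current $CloseNeighbor$ $c_{i+1}$. Since $d(y_i,c_{i+1})<d(y_i,p)$ while $d(p,c_{i+1})>0=d(p,p)$, the points $y_i$ and $p$ lie strictly on opposite sides of that bisector, so $y_{i+1}$ lies strictly between $p$ and $y_i$; hence the parameters strictly decrease, $0<s(y_{i+1})<s(y_i)$. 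Because $y_{i+1}$ lies on the bisector of $p$ and $c_{i+1}$ we also get $s(y_{i+1})=s_{c_{i+1}}$ (this threshold is well defined and positive, as $\Delta_{c_{i+1}}$ is positive at $0$ and negative at $s(y_i)$), so the thresholds $s_{c_1}>s_{c_2}>\cdots$ are distinct and the sites $c_1,c_2,\dots$ are pairwise distinct. Thus the method performs at most $n$ iterations, and in each iteration the work done apart from distance comparisons (one intersection of the ray with a line, plus a bounded amount of arithmetic) is $O(1)$.

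It remains to bound the total number of distance comparisons, and here the monotonicity statement is used a second time, in conjunction with the enhancement to the membership test already allowed in the description of Method \ref{method:Endpoint}: keep the competitors in a list and retire a competitor $a$ the first time it is inspected, say at $y_i$. If $a$ is passed over there ($d(y_i,p)\le d(y_i,a)$), monotonicity gives $d(y_j,p)\le d(y_j,a)$ for all later $j$, so $a$ can never become a $CloseNeighbor$ and never obstructs a future test ``$y_j$ in the cell''. If instead $a$ is taken as $CloseNeighbor$ at $y_i$, then $s(y_{i+1})=s_a$ and monotonicity again gives $d(y_j,p)<d(y_j,a)$ for every $j>i$, while at $y_{i+1}$ the equality $d(y_{i+1},a)=d(y_{i+1},p)$ is known by construction rather than rediscovered by scanning. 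In either case $a$ is never needed again, so each of the $n-1$ competitors is inspected at most once over the whole call, and the in-cell test at $y_i$ may be carried out by scanning only the surviving list. Hence the total number of distance comparisons is at most $n-1$; together with the $O(1)$ work of each of the at most $n$ iterations this yields the claimed linear-in-$n$ bound.

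The step I expect to require the most care is precisely this last soundness claim — that restricting the in-cell test at $y_i$ to the surviving sites is legitimate, i.e. that every retired site $a$ (whether retired as passed-over or as a former $CloseNeighbor$) really satisfies $d(y_i,p)\le d(y_i,a)$ — which follows from the affine structure of $\Delta_a$, using that $s(y_i)$ never exceeds the parameter at which $a$ was retired and that $\Delta_a(0)>0$. Finally, the parenthetical ``but see also the final paragraph'' refers to an optional refinement: after an $O(n)$ bucketing preprocessing in the spirit of \cite{BentleyWeideYao1980}, for uniformly distributed sites only $O(1)$ competitors are inspected on average, so the per-endpoint cost becomes $O(1)$ in expectation.
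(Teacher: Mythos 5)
Your proposal is correct and follows essentially the same route as the paper's proof: both rest on the monotonicity of the comparison with a fixed site as $y$ moves along the ray toward $p$ (which you formalize via the affine function $\Delta_a$), so that each site needs to be inspected at most once over the whole call, giving at most $n$ distance comparisons plus $O(1)$ auxiliary work per step. Your explicit verification that retired sites (passed-over or former $CloseNeighbor$) never need to be re-examined, and the bound on the number of iterations via distinct thresholds, are just more careful renderings of the argument the paper states informally.
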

\begin{proof}
As in previous lemmas, one can build the algorithm tree for Method \bref{method:Endpoint} 
(see also Figure \bref{fig:projectorEuclideanRay}). Each node (with the exception of the initial node) is a place where it is checked whether the temporary endpoint $y$ is in the cell, using distance comparisons. If yes, then the algorithm terminates, and if not, then $y$ is further gets closer to the given site $p=p_k$. The obtained tree is linear, namely each node (with the exception of the last one) has exactly one child node. As explained in Remark \bref{rem:Array} above, each site $a$ is considered not more than one time for the distance comparison. It follows that the total number of distance comparisons is not greater than $n$.  Hence the number of nodes in the tree is $O(n)$.  Each calculation done in a given node is either 
an arithmetic operation, array manipulations, etc., i.e., it is $O(1)$, or it involves a distance comparison (each comparison is $O(1)$). Hence the total number of calculation is $O(n)$ and the assertion follows. 
\end{proof}

\begin{lem}\label{lem:ComplexityCell}
The time complexity for computing the cell of $p_k$ 
is bounded above  by $O(r_k e_k)$, where $r_k$ is the maximum 
number of distance comparisons done along each shot ray (compared between all shot rays), 
and $e_k$ is the number of edges of the cell. 
\end{lem}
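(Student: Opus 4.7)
The plan is to combine the bound on the size of the algorithm tree restricted to the cell of $p_k$ (Lemma \ref{lem:e_k}) with an estimate on the cost of the work done at each node of that tree. By Lemma \ref{lem:e_k}, the restriction of the algorithm tree to the cell has at most $2e_k - 1 = O(e_k)$ nodes, so it is enough to argue that the total work performed at a single node is bounded by $O(r_k)$; multiplying then gives the desired $O(r_k e_k)$ bound.

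First I would enumerate the operations carried out in a node of the tree. Inspecting Algorithm 1, at each node we: (i) read the current subface $\{\theta_1, \theta_2\}$ from $FaceQueue$; (ii) compute the two endpoints $p + T_i$ via Method \ref{method:Endpoint} (line \ref{line:Endpoint}); (iii) identify the associated lines $L_1, L_2$ (either bisectors or boundary lines); (iv) form and solve the $2 \times 2$ linear system \eqref{eq:B_lambda}; (v) branch on the sign of $\det(B)$ and on whether $\lambda$ is nonnegative, possibly checking if the candidate $u$ lies inside the cell, and finally either store a vertex or insert one or two child subfaces into $FaceQueue$. Steps (i), (iii), (iv), (v) each consist of a constant number of arithmetic, comparison, or array manipulation operations, which by the convention stated after Theorem \ref{thm:CorrectnessOfTheAlg} are $O(1)$. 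The only non-constant cost is the endpoint computation in step (ii), together with the check whether the candidate $u$ lies in the cell.

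Next I would bound step (ii). By the definition of $r_k$ as the maximum, over all rays shot while processing cell $k$, of the number of distance comparisons performed along a ray, Method \ref{method:Endpoint} uses at most $r_k$ distance comparisons to find each of the two endpoints, and each comparison together with the associated updates in the method is $O(1)$ (see the proof of Lemma \ref{lem:EndpointO(n)}). Hence computing both endpoints at a node costs $O(r_k)$. The verification that the candidate point $u$ lies in the cell (when reached) is essentially the same kind of scan of distance comparisons and is similarly bounded by $O(r_k)$; alternatively, this check can be folded into one extra ray-shoot in the direction of $u$ from $p$, again of cost $O(r_k)$. Therefore the total work done at every node is $O(r_k)$.

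The final step is simply to multiply: summing $O(r_k)$ over the $O(e_k)$ nodes of the algorithm tree for the cell of $p_k$ yields a total time of $O(r_k e_k)$, which is the claimed bound. The only mildly delicate point, and the main thing to be careful about, is consistent bookkeeping of which operations are to be charged $O(1)$ versus $O(r_k)$; once one agrees (following the paper's convention) that the ``heavy'' per-node work is exactly the endpoint computation (and the analogous in-cell check), the lemma follows immediately from Lemmas \ref{lem:e_k} and \ref{lem:EndpointO(n)} together with the definition of $r_k$.
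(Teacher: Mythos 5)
Your proof follows the same route as the paper: Lemma \ref{lem:e_k} bounds the algorithm tree restricted to the cell by $O(e_k)$ nodes, each node's work is dominated by the $O(r_k)$ cost of computing endpoints via Method \ref{method:Endpoint}, and multiplying gives $O(r_k e_k)$. Your explicit handling of the in-cell test at line \ref{line:Vertex} as an additional $O(r_k)$-bounded step (or as one extra ray-shoot) is a slightly more careful accounting than the paper's, which tacitly folds it into the ``calculations done when a ray is shot'' category, but the argument is identical in substance.
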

\begin{proof}
Lemma \bref{lem:e_k} implies that the algorithm tree restricted to the cell 
of $p_k$ contains at most $2e_k=O(e_k)$ nodes. 
 The calculations done in each node are either calculations done when a ray is shot 
 (for computing its endpoint)  or  some $O(1)$ calculations, 
 when there is no need to compute a new endpoint (since the 
considered endpoints are already known) and only arithmetic operations, array manipulations, etc., are done. When an endpoint is computed, some operations are done along the corresponding ray. These operations are either distance comparisons or some related $O(1)$ operations 
(arithmetic operations, etc.: see the proof of Lemma  \bref{lem:EndpointO(n)}). 
Hence the number of operations is linear in the number of distance comparisons and hence 
the maximum number of these operations, compared between all the shot rays, is $O(r_k)$. 
The upper bound $O(r_k e_k)$ follows.
\end{proof}

\begin{lem}\label{lem:ParallelComplexity}
The time complexity, for the whole diagram, assuming $Q$ processing units are involved (independently) 
and processor $Q_i$ computes a set $A_i$ of cells, 
is $\max\{\sum_{k\in A_i}O(r_k e_k): i\in \{1,\ldots, Q\}\}$.
\end{lem}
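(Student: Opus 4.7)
The plan is to assemble this lemma directly from Lemma \ref{lem:ComplexityCell} together with the key structural property of Algorithm 1, namely that each Voronoi cell is computed entirely independently of the others. Because no information needs to be exchanged between processors during the computation of distinct cells (each cell depends only on the site $p_k$ and the set of other sites via distance comparisons, with no shared combinatorial bookkeeping), the $Q$ processors run in parallel without synchronization overhead, and the wall-clock time is simply the maximum of the per-processor times.

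First I would fix a processor $Q_i$ and note that it processes the cells indexed by $A_i$ one after another. For each $k \in A_i$, Lemma \ref{lem:ComplexityCell} gives an upper bound of $O(r_k e_k)$ on the number of operations performed by Algorithm 1 when computing the cell of $p_k$. Summing over the cells assigned to $Q_i$, the total time spent by $Q_i$ is bounded above by $\sum_{k \in A_i} O(r_k e_k)$.

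Next I would argue that the overall running time for the entire diagram equals the maximum of these per-processor times. Since the processors operate independently, the computation of the diagram finishes only once the last processor has terminated; hence
\begin{equation*}
\text{Total time} \;=\; \max\Bigl\{\sum_{k\in A_i} O(r_k e_k) : i \in \{1,\ldots,Q\}\Bigr\},
\end{equation*}
which is precisely the stated bound.

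The main conceptual step, rather than an algebraic obstacle, is justifying the ``no communication'' claim: one must point out that every quantity used in Algorithm 1 for cell $k$ (the endpoints along rays, the bisecting lines, the nonnegativity tests in \eqref{eq:B_lambda}, and the distance comparisons in Method \ref{method:Endpoint}) is computed locally from $p_k$ and the static array of sites, so no processor ever waits for data produced by another. Given this observation, the lemma follows immediately; there is no nontrivial obstacle beyond invoking Lemma \ref{lem:ComplexityCell} on each cell in $A_i$ and taking the maximum over $i$.
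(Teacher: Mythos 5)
Your argument matches the paper's proof exactly: both invoke Lemma \ref{lem:ComplexityCell} per cell, sum the per-cell bounds over each processor's assigned set $A_i$, and take the maximum over processors, justified by the fact that the processors do not interact. The extra detail you give about why no communication is needed is a harmless elaboration, not a different route.
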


\begin{proof}
This is a simple consequence of Lemma \bref{lem:ComplexityCell} because the processing 
units do not interact with each other and the cumulative time for computing a set of 
cells is the sum of times for computing each cell separately. 
\end{proof}

\begin{lem}\label{lem:Complexity}
The time complexity of the algorithm for the whole diagram, when one processing unit is involved, is bounded above by a quadratic expression in $n$.
\end{lem}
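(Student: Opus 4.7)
The plan is to deduce this bound directly by combining the per-cell complexity bound established in Lemma \ref{lem:ComplexityCell} with the two global estimates already in hand, namely the bound on individual endpoint computations from Lemma \ref{lem:EndpointO(n)} and the bound $\sum_{k=1}^n e_k = O(n)$ implicit in the proof of Lemma \ref{lem:NodesO(n)} (which in turn rests on the classical fact that the total complexity of a planar Voronoi diagram is linear in $n$).

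First, I would specialize Lemma \ref{lem:ParallelComplexity} to the case $Q=1$, $A_1=\{1,\ldots,n\}$, which gives immediately that the serial time for the whole diagram is bounded above by $\sum_{k=1}^n O(r_k e_k)$. The same conclusion can also be obtained directly by summing the per-cell estimate of Lemma \ref{lem:ComplexityCell} over all sites, since with one processing unit the cells are simply handled one after another and no interaction between them takes place.

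Next, I would bound each $r_k$ uniformly. By Lemma \ref{lem:EndpointO(n)}, the number of operations (in particular, distance comparisons) performed along any single shot ray during the endpoint-finding routine is $O(n)$; since $r_k$ is by definition the maximum such count taken over all rays used in the computation of the cell of $p_k$, we have $r_k = O(n)$ for every $k$. Substituting this uniform bound yields
\begin{equation*}
\sum_{k=1}^n O(r_k e_k) \;\leq\; O(n)\sum_{k=1}^n e_k.
\end{equation*}
Finally, invoking the linearity of the total edge count, $\sum_{k=1}^n e_k = O(n)$, established in the proof of Lemma \ref{lem:NodesO(n)}, gives the desired $O(n^2)$ upper bound.

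I do not anticipate any genuine obstacle: the argument is a straightforward assembly of lemmas already proved. The only point that needs mild care is ensuring that the uniform bound $r_k = O(n)$ may legitimately be pulled out of the sum — this is fine because the implicit constant in Lemma \ref{lem:EndpointO(n)} is independent of $k$ and of the particular ray considered, so the same constant applies simultaneously to every cell. Any attempt to obtain a tighter bound (for instance, by exploiting that in typical configurations $r_k$ is much smaller than $n$, or by amortizing distance comparisons across rays within the same cell) is deferred to the discussion of possible improvements in Section \ref{subsec:Improvements} and is not required for the claimed $O(n^2)$ estimate.
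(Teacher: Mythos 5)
Your proof is correct and follows essentially the same route as the paper: sum the per-cell bound $O(r_k e_k)$ from Lemma~\ref{lem:ComplexityCell} over all cells, bound $r_k$ uniformly by $O(n)$ via Lemma~\ref{lem:EndpointO(n)}, and invoke $\sum_k e_k = O(n)$ from the proof of Lemma~\ref{lem:NodesO(n)}. The only cosmetic difference is that you route the summation through the $Q=1$ case of Lemma~\ref{lem:ParallelComplexity}, while the paper applies the per-cell bound directly (and also offers an equivalent count of tree nodes times per-node cost), but the substance is identical.
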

\begin{proof}
By Lemma \bref{lem:ComplexityCell} the time complexity of the algorithm for cell $k$ 
is bounded by $O(r_k e_k)$. Since $r_k\leq n$ (see Remark \bref{rem:Array}) and since $\sum_{k=1}^n e_k=O(n)$ 
(see \beqref{eq:sum_e_k}), the time complexity of the algorithm for the whole diagram is $\sum_{k=1}^n r_k e_k=n\sum_{k=1}^n e_k$, and the $O(n^2)$ upper bound follows. Alternatively, one can obtain the 
same bound by observing that the  number of nodes in the algorithm tree of the 
whole diagram is $O(n)$ (according to Lemma \bref{lem:NodesO(n)}), and the number of calculations done in each node is at most $O(n)$ (according to Lemma \bref{lem:EndpointO(n)}). 
\end{proof}

\subsection{The correctness of the output of the algorithm}
This subsection is devoted to the proof that the output of Algorithm 1 is correct, namely it comprises all  the vertices and edges of the considered Voronoi cell. 
\begin{lem}\label{lem:CorrectAlg}
Let $F=\{\theta_1,\theta_2\}$ be a subedge of the projector and let $p+T(p,\theta_i)\theta_i$, $i\in\{1,2\}$  be the corresponding endpoints of the rays shot in the directions of $\theta_1$ and $\theta_2$, respectively. 
\begin{enumerate}[(a)]
\item\label{lem:CorrectAlg:itemL} If both endpoints are on the same edge, then the only possible vertices corresponding to $F$ are the endpoints. 
\item\label{lem:CorrectAlg:itemLv} If one endpoint is on one edge and the other is on another one, and both edges intersect at some vertex $v$ of the cell, then the only possible vertices corresponding to $F$ are the endpoints and $v$. 
\end{enumerate}
\end{lem}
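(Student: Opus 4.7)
\medskip

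\noindent\textbf{Proof proposal.} The plan is to exploit the convexity of the Voronoi cell $R_k$ (of the site $p=p_k$) and of the cone $C$ generated by $F=\{\theta_1,\theta_2\}$. Since $p$ is in the interior of $R_k$ and is the apex of $C$, the intersection $R_k\cap C$ is a compact convex set whose boundary decomposes into three pieces: the segment on the ray of $\theta_1$ from $p$ up to $p+T(\theta_1,p)\theta_1$, the analogous segment on the ray of $\theta_2$, and a polygonal path $\Gamma$ on $\partial R_k$ joining the two endpoints inside (the closure of) $C$. Any vertex of $R_k$ lying inside $C$ must be either one of the two endpoints or an interior corner of $\Gamma$; so it suffices to analyze $\Gamma$ in each case.

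For part \eqref{lem:CorrectAlg:itemL}, both endpoints lie on a common line $L$, which is a supporting line of the convex set $R_k$ (since the corresponding facet $\wt{L}$ is an edge of $R_k$). By convexity, the whole closed segment $[\,p+T(\theta_1,p)\theta_1,\;p+T(\theta_2,p)\theta_2\,]$ lies in $L\cap R_k\subseteq \wt{L}$. This segment joins the two endpoints along $\partial R_k$ while staying in $\overline{C}$, so it must coincide with $\Gamma$; since a segment contained in a single facet has no interior vertex of $R_k$, the only vertices of $R_k$ that can correspond to $F$ are the endpoints themselves.

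For part \eqref{lem:CorrectAlg:itemLv}, the endpoints lie on distinct facets $\wt{L_1}$ and $\wt{L_2}$ that meet at the vertex $v$. Starting at $p+T(\theta_1,p)\theta_1\in \wt{L_1}$ and moving along $\partial R_k$ toward the other endpoint inside $\overline{C}$, one first traces $\wt{L_1}$; this facet can only be left at one of its two endpoint vertices, and since the path is forced to eventually reach $\wt{L_2}$, that exit must occur at the common vertex $v$. Then the path continues along $\wt{L_2}$ up to $p+T(\theta_2,p)\theta_2$. Thus $\Gamma$ is the union of the two segments $[\,p+T(\theta_1,p)\theta_1,v\,]\subset \wt{L_1}$ and $[v,\,p+T(\theta_2,p)\theta_2\,]\subset \wt{L_2}$, and its unique interior corner is $v$; hence the only vertices of $R_k$ that can correspond to $F$ are the two endpoints and $v$.

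The main obstacle is justifying rigorously that the cell-boundary path $\Gamma$ connecting the two endpoints stays inside $\overline{C}$ and contains exactly the facets under discussion, rather than taking some detour along $\partial R_k$ that re-enters $C$ elsewhere. This is handled by observing that $R_k\cap C$ is convex with $p$ in its interior, so its boundary is a simple closed curve and the portion of it lying off the two bounding rays is precisely one connected arc joining the two endpoints; any extra facet appearing on that arc would force an additional corner of $R_k\cap \overline{C}$ on the bounding rays, which does not exist. Once this topological fact is in place, cases \eqref{lem:CorrectAlg:itemL} and \eqref{lem:CorrectAlg:itemLv} follow directly as above.
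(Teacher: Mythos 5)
Your plan is genuinely different from the paper's. The paper proves part \eqref{lem:CorrectAlg:itemL} by showing (via a ray-intersection computation) that any vertex $v$ strictly inside the cone must lie on $[p+T_1,p+T_2]\subseteq\wt{L}$, then deriving a contradiction by taking a point $v'\neq v$ on the second facet $\wt M$ through $v$ and ruling out, case by case, each possible location of $v'$ relative to the line $L$. It proves part \eqref{lem:CorrectAlg:itemLv} by a clean reduction: set $\theta_3=(v-p)/|v-p|$, split $F$ into $F_1=\{\theta_1,\theta_3\}$ and $F_2=\{\theta_2,\theta_3\}$, observe that for each $F_i$ both endpoints lie on a single facet, and invoke part \eqref{lem:CorrectAlg:itemL}. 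Your proposal instead analyzes the boundary arc $\Gamma=\partial R_k\cap\overline C$ directly.

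For part \eqref{lem:CorrectAlg:itemL} your route works and is arguably tighter than the paper's: since $L$ supports the convex cell, $[p+T_1,p+T_2]\subseteq\wt L$, and a vertex of $R_k$ in the relative interior of this segment would be a vertex of $R_k$ in the relative interior of a facet, which is impossible; one then only needs the (true, but not free) fact that $\Gamma$ is the unique arc of $\partial R_k$ between the endpoints lying in $\overline C$, so $\Gamma$ coincides with that segment.

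For part \eqref{lem:CorrectAlg:itemLv}, however, there is a real gap. You assert that, tracing $\Gamma$ from $p+T_1$ along $\wt{L_1}$, ``the exit must occur at the common vertex $v$'' because ``the path is forced to eventually reach $\wt{L_2}$.'' That is exactly what needs to be proved, not a consequence of it: a priori the traversal could exit $\wt{L_1}$ at its \emph{other} vertex $w\neq v$, pass through one or more intermediate facets while staying in $\overline C$, and only later reach $\wt{L_2}$ (through $v$ or through $\wt{L_2}$'s other vertex). Ruling that out requires showing, for instance, that $v\in\overline C$, and then that $[p+T_1,v]\cup[v,p+T_2]$ is an arc of $\partial R_k$ lying in $\overline C$ and hence equals $\Gamma$ by the uniqueness of such an arc. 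Your closing sentence about an ``additional corner on the bounding rays'' does not supply this; no corner on the bounding rays is implicated by a detour through interior intermediate facets. The paper's sub-cone reduction sidesteps the traversal analysis entirely, which is why it is the smoother route for part \eqref{lem:CorrectAlg:itemLv}. If you want to keep the topological approach, you should add an explicit argument that $v\in\overline C$ (e.g., via the monotonicity of $x\mapsto\arg(x-p)$ along $\partial R_k$ and the positions of $p+T_1,p+T_2$ on $L_1,L_2$), after which the two-segment description of $\Gamma$ follows from convexity and the uniqueness of the in-cone arc.
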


\begin{figure}
\begin{minipage}[t]{0.47\textwidth}
\begin{center}
{\includegraphics[scale=0.8]{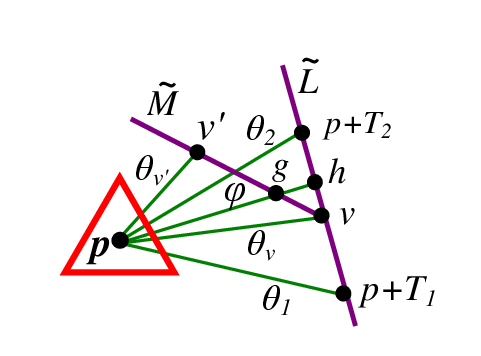}}
\end{center}
 \caption{An illustration of one of the cases described in Lemma \bref{lem:CorrectAlg}\beqref{lem:CorrectAlg:itemL}.}
\label{fig:Lem_v_prime_L_M}
\end{minipage}
\end{figure}

\begin{proof}
We first prove Part \beqref{lem:CorrectAlg:itemL}. This part seems quite obvious, but it turns out that a complete  proof  which takes into account all the details requires some work. For an illustration, see Figure \bref{fig:Lem_v_prime_L_M}. 
Let $T_i:=T(p,\theta_i)\theta_i,\,i=1,2$. By our assumption, both endpoints $p+T_i,\,\,i=1,2$ 
are on some edge $\wt{L}$ of the cell. 
The segment $[p+T_1,p+T_2]$ is contained in $\wt{L}$  since $\wt{L}$ is convex. Suppose to the contrary  that $v$ is a vertex of the cell that corresponds to $F$ and $v$ is not one of the endpoints $p+T_1$ or $p+T_2$. Then $v$ is located strictly inside the wedge generated by the endpoints, that is, $v=p+\lambda_1T_1+\lambda_2T_2$ for some $\lambda_1,\lambda_2\in (0,\infty)$. A simple calculation 
(similar to that of \beqref{eq:RayEdge}) shows that the ray  emanating from $p$  in direction $v-p$ intersects the segment $[p+T_1,p+T_2]$ at exactly one point $w$. Lemma \bref{lem:EndpointEdge} ensures that $w=v$. As a result, $v\in[p+T_1,p+T_2]\subseteq \wt{L}$. 

However, since $v$ is a vertex of the cell, it must belong to another edge $\wt{M}$. 
Let $v'\neq v$ be some point in  $\wt{M}$. Then $v'$ cannot be on the line $L$ on which $\wt{L}$ is located, since in this case $\wt{L}\cap\wt{M}$ will include a non-degenerate interval (the interval $[v,v']$), a contradiction to the fact that two different edges intersect at a point or do not intersect at all. We claim that $v'$ cannot be in the half-plane generated  by $L$ in which $p$ is located. Indeed, suppose to the contrary that this happens (see Figure \bref{fig:Lem_v_prime_L_M}). 

First note  that $v'$ is not on the ray which emanates from $v$ and passes via $p$; indeed, if, to the contrary, this happens, then either $p\in [v,v']$ or $v'\in [v,p]$, depending which point ($v'$ or $p$) is located on this ray at a greater distance  from $v$;  the first case is impossible because it means that $p\in [v,v']\subseteq \wt{M}$, a contradiction to the fact that $p$ is an interior point; the second case is impossible too since according  to a well known fact \cite[Theorem 6.1, p. 45]{Rockafellar1970} (this well-known fact says that if the underlying subset, namely the Voronoi cell in our case, is convex, and if we consider two points $g$ and $h$ in the subset, where $g$ is in the subset and $h$ is located in the interior  of the subset, then $(g,h]$ is contained in the interior of the subset), the half-open segment $(v,p]$ is contained in the interior of the cell because $p$  is in the interior of the cell and $v$ belongs to the cell;  thus $v'$, which is in $(v,p]$ (because $v\neq v'$ and we assume that $v'\in [v,p]$), is located in the interior of the cell, a contradiction to the fact that it is on $\wt{M}$ and hence  located on the boundary of the cell; therefore $v'$ is not located on the ray which emanates from $v$ and passes via $p$, as claimed. 

Consider now the wedge generated by the rays emanating from $p$ and passing via $v$ and $v'$. Denote $\theta_v:=(v-p)/|v-p|$ and $\theta_{v'}:=(v'-p)/|v'-p|$. We know from the previous paragraph that these two rays are different. A simple calculation (which is based on the fact that $v$ is in the  open segment $(p+T_1,p+T_2)$ and the assumption that $v'$ is located both in $\wt{M}$ and in the half-plane generated by $L$ in which $p$ is located) shows that any ray between $\theta_v$ and $\theta_{v'}$ which is close enough to $\theta_v$ (i.e., its generating unit vector $\phi$ is close enough to $\theta_v$) must intersect the  segment $[v',v)\subset\wt{M}$, and later it must intersect $\wt{L}$. See Figure \bref{fig:Lem_v_prime_L_M} for an illustration. However, once a ray which emanates from $p$ in the direction of $\phi$ intersects the edge $\wt{M}$, the point of intersection, which we denote by $g$, is the true endpoint of this ray, as follows from Lemma  \bref{lem:EndpointEdge}. Hence the point of intersection of the ray of $\phi$ with $\wt{L}$ (we  denote this point of intersection by $h$), which is located  beyond $g$, must be outside the cell. This is a contradiction to the fact that $h$ is on $\wt{L}$ and hence it is in the cell. 

The above-mentioned contradiction originated from our assumption that $v'$ is in the half-plane generated  by $L$ in which $p$ is located, and hence this case cannot happen. As a result, the only possibility for $v'$ is to be in the other half-plane generated by $L$, a contradiction to the fact that any point in this half-space is outside the cell. This contradiction shows that the initial assumption that the vertex $v$ is not  one of the endpoints $p+T_1$ or $p+T_2$ is false, as required. 

Now consider Part \beqref{lem:CorrectAlg:itemLv}. Let $\theta_3:=(v-p)/|v-p|$, and let $F_1:=\{\theta_1,\theta_3\}$ and $F_2:=\{\theta_2,\theta_3\}$. The possible vertices corresponding to $F$ are the ones corresponding to $F_1$ and $F_2$. According to out assumption, the endpoints $p+T_1$ and $p+T_3$ are located  on one edge, and the endpoints $p+T_2$ and $p+T_3$ are located on another edge.  By Part \beqref{lem:CorrectAlg:itemL} the only possible vertices  corresponding to $F_1$ are their endpoints $p+T_1$ and $p+T_3$, and the only possible vertices  corresponding to $F_2$ are their endpoints $p+T_2$ and $p+T_3$. Since actually $v=p+T_3$ (as follows from Lemma  \bref{lem:EndpointEdge}), we conclude that the only possible vertices  corresponding to $F$ are the vertex $v$ and the endpoints $p+T_1$ and $p+T_2$, as required. 
\end{proof}

\begin{lem}\label{lem:StoredVertices}
The point $u$ stored in line \bref{line:Store} of Algorithm 1 is a vertex of the cell.  
\end{lem}
\begin{proof}
This is evident, since $u$ is inside the cell and it is the intersection of two edges of the cell. 
\end{proof}

In the following lemmas we use the concept of a ``prime subedge'', namely  a subedge created by the algorithm 
at some stage but which has not been further divided after its creation. Because of Lemma \bref{lem:TerminateAlg} 
there are finitely many prime subedges. Any two such subedges either do not intersect or intersect at exactly 
one point (their corner), and their union  is the projector around $p$. See Figure \bref{fig:PrimeSubedges} 
for an illustration. 

\begin{figure}
\begin{minipage}[t]{1\textwidth}
\begin{center}
{\includegraphics[clip, scale=0.75]{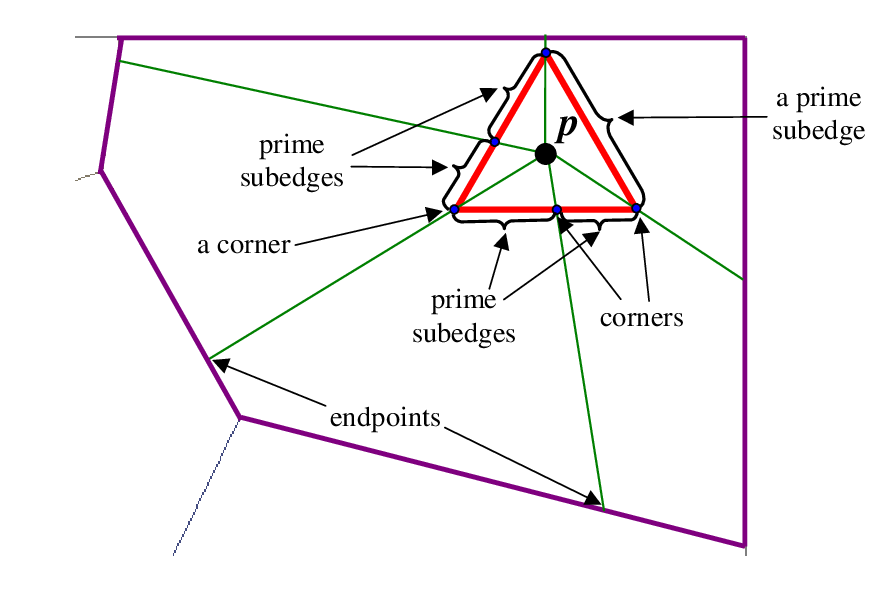}}
\end{center}
 \caption{Prime subedges and their associated rays.}
\label{fig:PrimeSubedges}
\end{minipage}
\end{figure}

\begin{lem}\label{lem:FoundVerticesEndpoint}
Let $v$ be a vertex of the cell and assume that it coincides with an endpoint corresponding to a corner of a prime subedge $F$. Then $v$, as a vertex, is found and stored by the algorithm. 
\end{lem}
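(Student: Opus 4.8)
The plan is to exploit that $v$, being a vertex of the convex cell that contains $p$ in its interior, lies on exactly two edges $\wt{L}$ and $\wt{M}$ of the cell, and that the ray toward $v$, in direction $\theta_v:=(v-p)/|v-p|$, is by hypothesis a corner of a prime subface, hence the shared corner of exactly two prime subfaces $F_1=\{\theta_v,\alpha_1\}$ and $F_2=\{\theta_v,\alpha_2\}$ lying on opposite sides of $\theta_v$. The first step is an elementary geometric observation: since $p$ is interior and the cell is convex, the endpoint map $\theta\mapsto p+T(\theta,p)\theta$ is a homeomorphism of the circle of directions onto the boundary of the cell, and each edge of the cell is hit by the directions in a single connected arc. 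Applying this at $v$ one of the two edges at $v$, say $\wt{L}$, is hit by directions immediately on the $\alpha_1$-side of $\theta_v$ and the other, $\wt{M}$, by directions immediately on the $\alpha_2$-side. Consequently $\wt{M}$ meets the cone of $F_1$ only at $v$, and $\wt{L}$ meets the cone of $F_2$ only at $v$ (a point of $\wt{M}$ inside the cone of $F_1$ would be a boundary point of the cell whose direction lies in the arc of $\wt{M}$ and in the cone of $F_1$, forcing it to be $v$).

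The second step uses primality. When the algorithm processes a prime subface it cannot divide it, so its processing terminates through line~\ref{line:SameLines} or line~\ref{line:Vertex}/\ref{line:Store} of Algorithm~1. Consider $F_1$. The endpoint of $\theta_v$ is $v$; by Remark~\ref{rem:bisector} the line $\ell$ attached by the algorithm to this (possibly degenerate) endpoint carries a genuine edge of the cell through $v$, hence $\ell\in\{L,M\}$, and — this is the point to stress — $\ell$ is determined by $p$, $\theta_v$ and the site configuration alone, not by the subface, so the same $\ell$ is produced when $F_2$ is processed. Now case on $\ell$. If $\ell=M$: the endpoint $q_1$ of $\alpha_1$ cannot lie on $\wt{M}$ (else $q_1=v$ by the geometric fact, impossible since $\alpha_1\neq\theta_v$), so the line $\ell_1'$ attached to $q_1$ differs from $M$; primality rules out line~\ref{line:SameLines} and the parallel-lines and ``outside cone''/``outside cell'' branches, so line~\ref{line:Vertex}/\ref{line:Store} must fire and the stored point is $u=M\cap\ell_1'$; as $u$ lies in the cell and on the line $M$ it lies on $\wt{M}$, and as $u$ lies in the cone of $F_1$ the geometric fact gives $u=v$. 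Symmetrically, if $\ell=L$ then $F_2$ stores $v$. Either way $v$ is found and stored, and since $F_1$ and $F_2$ are prime subfaces they are created and, by Lemma~\ref{lem:TerminateAlg}, processed by the algorithm after finitely many steps; this proves the claim.

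The part I expect to require the most care is the handling of the line associated with the endpoint $v$: one must argue from Remark~\ref{rem:bisector} that it is one of the two genuine incident edge-lines $L,M$, that it is selected consistently for the two flanking prime subfaces, and — in the branch where line~\ref{line:Vertex}/\ref{line:Store} fires — that the computed intersection point is exactly $v$ and not a neighbouring vertex of the cell. All three points lean on the geometric fact above that an incident edge meets the opposite flanking cone only at $v$, together with a line-by-line check that the dividing branches of Algorithm~1 are incompatible with primality; this is where the real bookkeeping is. The degenerate configuration in which several sites lie on a circle about $v$ should be absorbed into the same discussion, since the extra bisectors through $v$ are not genuine edge-lines and are therefore excluded from the choice of $\ell$.
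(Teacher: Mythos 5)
Your proof is correct and shares the paper's overall strategy---use the primality of the subfaces flanking the ray through $v$ to force the vertex-storing branch (line~\ref{line:Vertex}) to fire and then identify the stored intersection with $v$---but the supporting machinery is genuinely different. The paper argues by contradiction on the single adjacent prime subface $G$: it rules out $\wt{L_3}=\wt{L_1}$ by the observation that a vertex cannot lie in the relative interior of a facet, and then identifies the forced intersection point with $v$ via a halfplane separation through $p$ and $v$ combined with the fact that each facet meets exactly two other facets. You instead invoke the (standard, though here unproved) fact that $\theta\mapsto p+T(\theta,p)\theta$ is a homeomorphism of the direction circle onto the boundary, so each edge corresponds to a connected arc of directions; this localizes each of the two edges at $v$ to one side of $\theta_v$ and gives the clean identities $\wt{M}\cap\mathrm{cone}(F_1)=\{v\}$ and $\wt{L}\cap\mathrm{cone}(F_2)=\{v\}$, from which the stored point is forced to be $v$. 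A genuine plus of your version is the explicit case split on which of the two incident edge-lines the algorithm attaches to the degenerate endpoint $v$: the paper implicitly fixes this choice (it tacitly treats $\wt{L_1}$ as the facet on the $\theta_2$-side when passing from $F$ to $G$), whereas you show that whichever line is attached, the flanking subface on the opposite side must store $v$; your consistency assumption (the attached line depends only on $p$, $\theta_v$ and the sites, not on the subface) is exactly what the paper also uses. Two small points of bookkeeping: it is the distinctness $\ell_1'\neq M$ (which you had just established), not primality, that excludes the branch of line~\ref{line:SameLines}, primality excluding only the three dividing branches; and the arc argument implicitly uses that each facet's direction arc and each subcone span an angle less than $\pi$, so no wrap-around can occur---both are immediate, and your proof is otherwise complete at the paper's level of rigor.
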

\begin{proof}
The proof is not immediate as it may perhaps seem at first, since the algorithm is able to recognize 
a vertex  only after an intersection between two edges of the cell is detected, and so although the algorithm may find $v$ as an endpoint, it is still not known that $v$ is actually a vertex, and further calculations are needed in order to decide whether $v$ is a vertex or not. Such a problematic situation mainly occurs at the routine of line \bref{line:SameLines} (the endpoints 
of a given subedge are on the same edges), because even if one of the endpoints is a vertex, 
it is not stored. Hence this vertex must be detected (and stored) in another iteration of the algorithm, namely when it considers another subedge. 

Suppose to the contrary that $v$, as a vertex, is not found by the algorithm. 
Let $p+T_1$ and $p+T_2$ be the endpoints corresponding to the corners of $F$.  Let  $\wt{L_1}$, $\wt{L_2}$ be the corresponding edges of the cell on which these endpoints are located. The edges $\wt{L_i},\,i=1,2$ are located on corresponding lines $L_1,L_2$. By our assumption $v$ coincides with one of the endpoints, say with $p+T_1$. 
See Figure \bref{fig:VertexEndpoint}.

\begin{figure}
\begin{minipage}[t]{1\textwidth}
\begin{center}
{\includegraphics[scale=0.67]{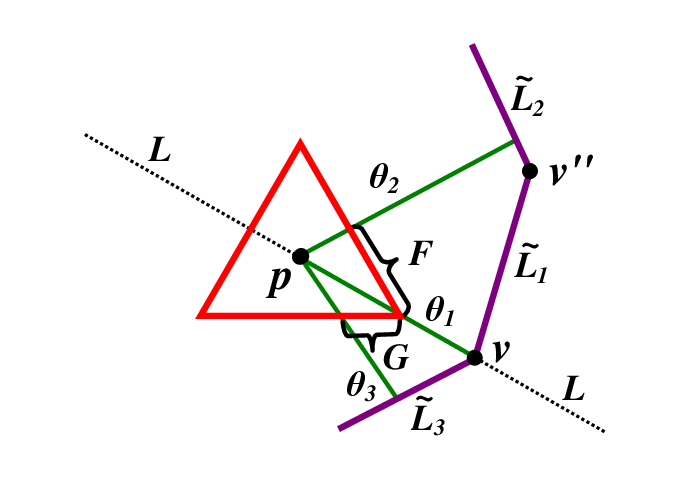}}
\end{center}
 \caption{Illustration of Lemma \bref{lem:FoundVerticesEndpoint}.}
\label{fig:VertexEndpoint}
\end{minipage}
\end{figure}

Since $p+T_1$ corresponds to a corner of $F$, this corner is located on another prime subedge $G$. Therefore $v$ is also in the wedge corresponding to $G$. Denote $T_3:=T(p,\theta_3)\theta_3$ and let $p+T_3$ be the other endpoint corresponding to $G$. Then  $p+T_3$ is located on some edge $\wt{L_3}$ of the  cell. 
Consider the lines $L_1$ and $L_3$ corresponding to $\wt{L_1}$ and $\wt{L_3}$, respectively. If they 
do not intersect,  or intersect inside the wedge corresponding to $G$ but outside the cell, or intersect at a point outside the wedge, then by the definition of Algorithm 1 (see lines \beqref{line:DivideParallelLines}, \beqref{line:DivideOutsideCell} and \beqref{line:DivideOutsideWedge}, respectively) it follows that $G$ must be divided, a contradiction to the assumption that $G$ is a prime subedge. 
Note also that the equality $\wt{L_3}=\wt{L_1}$ is impossible 
because it will imply that $v$ is in the interior of the edge $\wt{L_1}$ and this cannot happen to 
a vertex. Hence $L_1\neq L_3$ and, from previous lines, it  follows that $L_1$ and $L_3$ intersect at a point $v_{13}$ which is in the wedge and in the cell, i.e., $v_{13}$ is a vertex. Note that $v_{13}$ is found, as a vertex, during the running time of the algorithm at line \bref{line:Vertex}, when the algorithm  considers the subedge $G$ (it must consider it since it is a prime subedge). 

Our goal is to show that $v=v_{13}$. Once we show this, we arrive at a contradiction since we assumed that $v$, as a vertex, is not detected by the algorithm. A basic property of the polygonal boundary of our bounded cell is that any edge which belongs to it  intersects exactly two additional edges. In particular this is 
true for $\wt{L_1}$: one intersection occurs at the vertex $v=\wt{L}_1\cap \wt{L}_3$ and another one 
at another vertex $v''=\wt{L}_1\cap \wt{L}_2$. Consider the line $L$ passing via $p$ and $v$, namely the line which contains the ray in the direction of $\theta_1$: it separates the plane into two half-planes; one half-plane contains the ray $\theta_2$ and hence also $v''$ and $\wt{L}_1$; the other half-plane contains the ray $\theta_3$ and hence also the edge $\wt{L_3}$. Since we know that $\wt{L}_3$ intersects $\wt{L}_1$ (at $v_{13}$) and both edges are located at different half-planes, their intersection is located on the intersection of these half-planes, namely on $L$. But $L$ intersects $\wt{L}_1$ only at $v$, and so $v_{13}=\wt{L}_1\cap\wt{L}_3\subseteq L\cap \wt{L}_1=v$, that is, $v_{13}=v$, as required. 
\end{proof}

\begin{lem}\label{lem:FoundVertices}
The algorithm finds all the vertices and edges of the cell. 
\end{lem}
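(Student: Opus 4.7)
The plan is to partition the directions around $p$ into finitely many closed cones corresponding to prime subfaces, place every vertex of the cell into (at least) one such cone, and show that within each cone the algorithm identifies any vertex it contains. Edges then come for free from the per-vertex storage.

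First I would record the structural setup. By Lemma~\ref{lem:TerminateAlg} the algorithm produces only finitely many subfaces; the prime subfaces partition the simplex around $p$ (adjacent primes share a boundary ray, which is generated by some unit vector $\theta$ that was shot by the algorithm). Any vertex $v$ of the cell lies in the closed cone of at least one prime subface $F$, and it lies on a boundary ray of two adjacent primes precisely when $v$ coincides with the endpoint $p+T(\theta,p)\theta$ of a ray created during the run (since every such endpoint on the boundary of a prime cone belongs to the cell's boundary).

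The core of the argument is a case split on the position of $v$ relative to $F$. If $v$ is on a boundary ray of some prime subface, then $v$ is an endpoint corresponding to a corner of that subface, and Lemma~\ref{lem:FoundVerticesEndpoint} yields that $v$ is detected and stored as a vertex. If instead $v$ lies strictly inside the open cone of a prime subface $F=\{\theta_1,\theta_2\}$, I would argue as follows. Because $F$ was never divided, the algorithm must have exited on it via line~\ref{line:SameLines} or line~\ref{line:Vertex} (the branches \ref{line:ParallelLines}, \ref{line:OutsideCell}, \ref{line:OutsideCone} all insert children into $FaceQueue$). In the line~\ref{line:SameLines} branch both endpoints lie on a common facet, so by Lemma~\ref{lem:CorrectAlg}\eqref{lem:CorrectAlg:itemL} the only candidates for vertices corresponding to $F$ are the two endpoints themselves, contradicting the assumption that $v$ lies strictly inside the cone. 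Hence $F$ exits via line~\ref{line:Vertex}: a point $u$ lying on two distinct facets of the cell is stored, and by Lemma~\ref{lem:CorrectAlg}\eqref{lem:CorrectAlg:itemLv} the only vertex of the cell strictly inside the cone is $u$, forcing $v=u$. Thus $v$ is stored. Combining this with Lemma~\ref{lem:StoredVertices} shows that the set of stored points is exactly the vertex set of the cell.

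For the edges I would observe that each stored vertex is accompanied by the two lines $L_1,L_2$ on which its incident facets lie (see line~\ref{line:Store} and Section~\ref{sec:CombinatorialInformation}). Every edge of the cell has two vertex endpoints, both of which are stored by the previous paragraph, and both of which record the common supporting line of that edge; so every edge is recovered from the stored data. The main obstacle I anticipate is the boundary-ray case, where a vertex happens to coincide with an endpoint and is therefore not flagged when its ``home'' subface is processed via line~\ref{line:SameLines}; but this is precisely what Lemma~\ref{lem:FoundVerticesEndpoint} was set up to handle, which is why invoking it in the first case of the split is essential.
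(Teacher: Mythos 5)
Your argument is correct and follows essentially the same route as the paper: both reduce the claim to the two key lemmas, namely Lemma~\ref{lem:CorrectAlg} (which confines the candidate vertices in a prime cone to the two endpoints and possibly their facet intersection) and Lemma~\ref{lem:FoundVerticesEndpoint} (which handles the delicate case of a vertex sitting on a boundary ray). The only presentational difference is that you split on whether the vertex lies on a boundary ray versus strictly inside a prime cone, while the paper argues by contradiction and splits on $\det(B)=0$ versus $\det(B)\neq 0$; these two organizations are equivalent.
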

\begin{proof}
By Lemma \bref{lem:TerminateAlg} the algorithm terminates after a finite number of steps. Let $(F_j)_{j=1}^m$ be the finite list of all prime subedges, $m\in\N$. Suppose to the contrary that some vertex $u$ is not found.  Then $u$ corresponds to some point located on some prime subedge  $F$ of the projector, since the ray in the direction of $u-p$ intersects the projector at exactly one point, and each point on the projector belongs to some prime subedge.

Let $p+T_1$ and $p+T_2$ be the endpoints corresponding to the corners of $F$, and let $\wt{L_1}$, $\wt{L_2}$ be the corresponding edges of the cell on which these endpoints are located. The edges $\wt{L_i},\,i=1,2$ are located on corresponding lines $L_1,L_2$. Let $B$ the matrix from \beqref{eq:B_lambda}. 

Assume first that $\det(B)=0$. Then it must be that $L_1=L_2$, since otherwise $L_1$ and $L_1$ are parallel and hence $F$ is divided into two subedges (line \bref{line:ParallelLines}), a contradiction to the  assumption that $F$ is a prime subedge. However, if $L_1=L_2$, then  $\wt{L_1}=\wt{L_2}$, and hence, by Lemma 
\bref{lem:CorrectAlg}\beqref{lem:CorrectAlg:itemL}, the only possible vertices corresponding to $F$ are the endpoints $p+T_1,\,p+T_2$. In particular, the missing vertex $u$ coincides with one of these endpoints. But then, according to Lemma \bref{lem:FoundVerticesEndpoint}, the algorithm finds $u$ as a vertex, a contradiction to what we assumed regarding $u$ in the beginning of the proof.

As a result, the case $\det(B)=0$ cannot occur. Assume now the case $\det(B)\neq 0$. Then $L_1\neq L_2$. It must be that $\lambda$ from \beqref{eq:B_lambda} is nonnegative, since otherwise $F$ is  divided into two subedges (line \bref{line:OutsideWedge}), a contradiction to the  assumption that $F$ is a prime subedge. The point of intersection between $L_1$ and $L_2$ is $v=p+\lambda_1 T_1+\lambda_2 T_2$, and it must be in the wedge corresponding to $F$ and also in the cell of $p$, since otherwise $F$ is divided into two subedges (line \bref{line:OutsideCell}), a contradiction to the  assumption that $F$ is a prime subedge. Hence $v$ is a vertex corresponding to $F$ and it is found by the algorithm at the stage when  \beqref{eq:B_lambda} is considered. If $v=u$, then the algorithm finds $u$ when it considers $F$, a contradiction to what we assumed regarding $u$ in the beginning of the proof. Hence $v\neq u$, and by Lemma \bref{lem:CorrectAlg}\beqref{lem:CorrectAlg:itemLv} it must be that $u$ coincides with one of the endpoints $p+T_1$ or $p+T_2$. But then, according to Lemma \bref{lem:FoundVerticesEndpoint}, the algorithm finds $u$ as a vertex, a contradiction to what we assumed regarding $u$ in the beginning of the proof.

Therefore all the vertices of the cell are detected. As explained in Method   \bref{method:Endpoint} and 
Section  \bref{sec:CombinatorialInformation}, when a vertex is detected, also the edges which intersect 
at it are detected. Since all the possible vertices  are found by the algorithm, then so are all 
the possible edges. 
\end{proof}

\begin{proof}[Proof of Theorem \bref{thm:CorrectnessOfTheAlg} with the exception of Part \beqref{item:TimeComplexityUniformDistribution}]
This is a simple consequence of Lemmas  \bref{lem:TerminateAlg}, \bref{lem:ComplexityCell}, \bref{lem:ParallelComplexity}, \bref{lem:Complexity}, 
\bref{lem:StoredVertices}, and \bref{lem:FoundVertices}.
\end{proof}

\subsection{Proof of Theorem \bref{thm:CorrectnessOfTheAlg}\beqref{item:TimeComplexityUniformDistribution} }\label{subsec:Improvements} 
In this subsection a proof of Theorem \bref{thm:CorrectnessOfTheAlg} is presented. We use the terminology and notation mentioned in the formulation of the theorem. In particular, unless otherwise stated, $e$  means the base of the natural logarithm, namely $e\approx 2.71828\ldots$.  The proof is  based on several lemmas. Before formulating them, it is worth clarifying the phrase 
``the sites are distributed independently according to the  uniform distribution''. Here the probability space is $X$ (the sample space) together with  the standard normalized Lebesgue  measure which assigns to each region (namely, to each Lebesgue measurable subset of $X$) its area divided by the area of  $X$. The location of each site is a random vector (from $X$ to itself) and the probability for a site to  be in some region of area $\sigma$ is the ratio between $\sigma$ and the area of $X$. All of  these random vectors are assumed to be independent.

We start by formulating the following two known lemmas which will be used later. 
The proof of the second one is given for the sake of completeness. 

\begin{lem}\label{lem:StirlingRobbins} {\bf (An improvement of Stirling's formula by Robbins \cite{Robbins1955jour}, \cite[pp. 52--53]{Feller1957book})} 
Given a natural number $\ell$, the following equality holds:
\begin{equation}\label{eq:StirlingRobbins}
\ell!=\sqrt{2\pi \ell}\left(\frac{\ell}{e}\right)^{\ell}e^{r_{\ell}},
\end{equation}
where $r_{\ell}$ is a positive number satisfying $1/(12\ell+1)<r_{\ell}<1/(12\ell)$. 
In particular, $\ell!\geq \sqrt{2\pi \ell}(\ell/e)^{\ell}$. 
\end{lem}

\begin{lem}\label{lem:er}
Let $r\geq 1$. Then 
\begin{equation}\label{eq:ee-1}
\left(1-\frac{1}{r}\right)^r\leq e^{-1}\quad\textnormal{and}\quad  \left(1+\frac{1}{r}\right)^r\leq e.
\end{equation}
\end{lem}
\begin{proof}
The assertion is obvious for $r=1$. For all $r>1$ let 
\begin{equation*}
f(r):=\left(1-\frac{1}{r}\right)^r, \quad g(r):=\ln(f(r))=r\ln\left(1-\frac{1}{r}\right). 
\end{equation*}
Then 
\begin{equation*}
g'(r)=\ln\left(1-\frac{1}{r}\right)+\frac{r}{1-\frac{1}{r}}\cdot\frac{1}{r^2}=\ln\left(1-\frac{1}{r}\right)+\frac{1}{r-1},\quad, g''(r)=\frac{1}{r-1}\left( \frac{1}{r}-\frac{1}{r-1}\right).
\end{equation*}
Therefore $g''(r)<0$ for all $r>1$ and hence $g'$ is strictly decreasing on $(1,\infty)$. 
Since $\lim_{r\to\infty}g'(r)=0$, elementary calculus shows that $g'(r)>0$ for all $r>1$. It follows that $g$ is strictly increasing on $(1,\infty)$, and hence so is $f$. This implies the assertion because of the well known limit $\lim_{r\to\infty} f(r)=e^{-1}$. The second inequality in \beqref{eq:ee-1} can be proved in a similar way.
\end{proof}

The following four lemmas contain the core of the proof of Theorem   \bref{thm:CorrectnessOfTheAlg}\beqref{item:TimeComplexityUniformDistribution}.

\begin{lem}\label{lem:TooMuchInBox}
Let  $\alpha,\gamma\in \N$ be given. Assume that $\gamma>1$ and 
\begin{equation}\label{eq:alpha-kappa-m}
100<m,\quad 3\alpha<\lfloor 0.2m\rfloor+2, \quad \alpha\gamma<n. 
\end{equation}
Denote by $E_V$ the event in which a subset  
$V\subseteq X$, which is the union  of $\alpha$ buckets, contains at least $\gamma\alpha$ sites, and denote by $P(E_V)$  the probability of this event. Then 
\begin{equation}\label{eq:V}
P(E_V)\leq \frac{e^{4-0.8\alpha}}{\sqrt{\gamma\alpha}}
\left(\frac{e}{\gamma}\right)^{\gamma\alpha}
+\frac{e^{4+0.2\alpha}}{\sqrt{(\lfloor 0.2m\rfloor+1)}}\left(\frac{e\alpha}{\lfloor 0.2m\rfloor+1}\right)^{\lfloor 0.2m\rfloor+1}.
\end{equation}
If, in addition, $\sqrt{\alpha}\in \N$  and $\sqrt{\alpha}\leq\min\{m_1,m_2\}$, then consider the set of all squares included in $X$ and consisting of $\alpha$  buckets. Denote by $G_V$ the event in which  at least one of these  squares contains at least $\gamma\alpha$ sites. Then 
\begin{equation}\label{eq:Vn}
P(G_V)\leq n\left(\frac{e^{4-0.8\alpha}}{\sqrt{\gamma\alpha}}
\left(\frac{e}{\gamma}\right)^{\gamma\alpha}
+\frac{e^{4+0.2\alpha}}{\sqrt{(\lfloor 0.2m\rfloor+1)}}\left(\frac{e\alpha}{\lfloor 0.2m\rfloor+1}\right)^{\lfloor 0.2m\rfloor+1}\right).
\end{equation}
\end{lem}

\begin{proof}
Let $j\in [\gamma\alpha,n]$ be a given natural number. The probability that some site 
 $a$ will be in the region $V$ is $\alpha/m$ because the sites are distributed according to the 
 uniform distribution and the ratio between the area of $V$ to the area of $X$ is  $\alpha/m$ (since $V$ is composed of $\alpha$ buckets and $X$ is composed of $m$ buckets). Since the locations of the sites are independent random vectors, the probability of the event  that certain $j$ different sites  $a_{i_1},\ldots, a_{i_j}$  will be in $V$ 
 and the other $n-j$ sites will be in $X\backslash V$ is $(\alpha/m)^j(1-(\alpha/m))^{n-j}$.  There are  $\binom{n}{j}$ events of this kind and all of them are disjoint, so the 
 probability of the event that exactly $j$ sites are in $V$ is $\binom{n}{j}(\alpha/m)^j(1-(\alpha/m))^{n-j}$. 
 The event $E_V$ in which at least $\gamma\alpha$ sites are in $V$ is the disjoint union of the 
 events in which exactly $j$ sites are in $V$, where $j$ runs from $\gamma\alpha$ to $n$. We conclude that 
\begin{equation}\label{eq:E_V}
P(E_V)=\sum_{j=\gamma\alpha}^n\binom{n}{j}\left(\frac{\alpha}{m}\right)^j\left(1-\frac{\alpha}{m}\right)^{n-j}. 
\end{equation} 
The goal now is to estimate $P(E_V)$ from above.  First we observe that 
\begin{equation}\label{eq:nm}
n\leq m+2\sqrt{m}. 
\end{equation}
Indeed, this follows from the facts that $\sqrt{m}=\lfloor \sqrt{n}\rfloor$ is an integer and $n=m+2\sqrt{m}r+r^2<m+2\sqrt{m}+1$ for some $r\in [0,1)$. Now let $j\in [\gamma\alpha,n]\cap \N$ 
be given. Either $j\leq 2\sqrt{m}$ or $j>2\sqrt{m}$. In the first case we have 
\begin{multline}\label{eq:m/m}
\frac{(m+2\sqrt{m})}{m}\frac{(m+2\sqrt{m}-1)}{m}\ldots\frac{(m+2\sqrt{m}-(j-1))}{m}\\
\leq 
\frac{(m+2\sqrt{m})}{m}\frac{(m+2\sqrt{m}-1)}{m}\ldots\frac{(m+1)}{m}.
\end{multline}
In the second case \beqref{eq:m/m} holds too, because the following expression 
\begin{equation}\label{eq:factor}
\frac{m}{m}\cdot\frac{(m-1)}{m}\cdot\ldots \cdot\frac{(m+2\sqrt{m}-(j-1))}{m} 
\end{equation}
 appears as a factor in the expression 
\begin{equation}\label{eq:(m+2sqrt(m))/m}
\frac{(m+2\sqrt{m})}{m}\frac{(m+2\sqrt{m}-1)}{m}\ldots \frac{(m+2\sqrt{m}-(j-1))}{m},
\end{equation}
in addition to the expression  
\begin{equation*}
\frac{(m+2\sqrt{m})}{m}\frac{(m+2\sqrt{m}-1)}{m}\ldots \frac{(m+1)}{m}, 
\end{equation*}
which appears in \beqref{eq:(m+2sqrt(m))/m} as well, and the expression in  
\beqref{eq:factor} is a real number in $(0,1]$ (this follows from the inequality $j\leq n$ and from \beqref{eq:nm}). It follows from \beqref{eq:nm}, \beqref{eq:m/m}, the inequality $\alpha\leq m$, Lemma \bref{lem:er} and the inequality $m\leq n$,  that 
\begin{multline}\label{eq:BinomEstimate}
\binom{n}{j}\left(\frac{\alpha}{m}\right)^j\left(1-\frac{\alpha}{m}\right)^{n-j}
=\frac{n(n-1)\ldots(n-(j-1))}{m^j}\frac{\alpha^j}{j!}
\left(1-\frac{\alpha}{m}\right)^{(m/\alpha)\alpha(n-j)/m}\\
\leq \frac{(m+2\sqrt{m})}{m}\frac{(m+2\sqrt{m}-1)}{m}\ldots\frac{(m+1)}{m}
\frac{\alpha^j}{j!}
e^{\alpha(j-n)/m}\\
\leq \left(1+\frac{2}{\sqrt{m}}\right)^{2\sqrt{m}}\frac{\alpha^j}{j!}e^{-\alpha n}e^{\alpha j/m}\leq \left(\left(1+\frac{2}{\sqrt{m}}\right)^{\sqrt{m}/2}\right)^4\frac{\alpha^j}{j!}e^{-\alpha}e^{\alpha j/m}\\
\leq e^{4}\frac{\alpha^j}{j!}e^{-\alpha}e^{\alpha j/m}.
\end{multline} 
As long as $j\leq \lfloor 0.2m\rfloor$ we have $e^{-\alpha}e^{\alpha j/m}\leq e^{-0.8\alpha}$. 
For larger $j$ we use the trivial estimate $j\leq n$, \beqref{eq:nm},  and \beqref{eq:alpha-kappa-m} (the fact that $m>100$) to conclude that  
\begin{equation}\label{eq:0.2}
e^{-\alpha}e^{\alpha j/m}\leq e^{\alpha(-1+1+(2/\sqrt{m}))}\leq e^{0.2\alpha}. 
\end{equation}
From  \beqref{eq:alpha-kappa-m}, \beqref{eq:BinomEstimate}, \beqref{eq:0.2} and simple arithmetic,  it follows that  
\begin{multline*}
\sum_{j=\gamma\alpha}^n\binom{n}{j}\left(\frac{\alpha}{m}\right)^j\left(\frac{m-\alpha}{m}\right)^{n-j}
\leq \left(e^{4-0.8\alpha}\sum_{j=\gamma\alpha}^{\lfloor 0.2m\rfloor}\frac{\alpha^j}{j!}\right)+
\left(e^{4+0.2\alpha}\sum_{j=\lfloor 0.2m\rfloor+1}^n\frac{\alpha^j}{j!}\right)\\
\leq e^{4-0.8\alpha}\frac{\alpha^{\gamma\alpha}}{(\gamma\alpha)!}\left(1+
\sum_{i=1}^{\infty }\frac{\alpha^i}{(\gamma\alpha+1)\ldots (\gamma\alpha+i)}\right)+\\
e^{4+0.2\alpha}\frac{\alpha^{(\lfloor 0.2m\rfloor+1)}}{(\lfloor 0.2m\rfloor+1)!}\left(1+\sum_{i=1}^{\infty}\frac{\alpha^i}{(\lfloor 0.2m\rfloor+2)\ldots(\lfloor 0.2m\rfloor+1+i)}\right)\\
\leq e^{4-0.8\alpha}\frac{\alpha^{\gamma\alpha}}{(\gamma\alpha)!}
\left(1+\sum_{i=1}^{\infty }\left(\frac{\alpha}{\gamma\alpha+1}\right)^i\right)+
e^{4+0.2\alpha}\frac{\alpha^{(\lfloor 0.2m\rfloor+1)}}{(\lfloor 0.2m\rfloor+1)!}\left(1+\sum_{i=1}^{\infty}\left(\frac{\alpha}{\lfloor 0.2m\rfloor+2}\right)^i\right)\\
=\frac{e^{4-0.8\alpha}\alpha^{\gamma\alpha}}{(\gamma\alpha)!}\left(\frac{1}{1-\frac{\alpha}{\gamma\alpha+1}}\right)+\frac{e^{4+0.2\alpha}\alpha^{(\lfloor 0.2m\rfloor+1)}}{(\lfloor 0.2m\rfloor+1)!}\left(\frac{1}{1-\frac{\alpha}{\lfloor 0.2m\rfloor+2}}\right),
\end{multline*}
where the first term in the right-hand side of the first inequality is 0 if $\gamma\alpha>\lfloor 0.2m\rfloor$. This inequality, together with Lemma \bref{lem:StirlingRobbins}, \beqref{eq:alpha-kappa-m},  the assumption that $\gamma>1$ (an assumption which implies that $\alpha/(\gamma\alpha+1)<1/2$) and simple arithmetic, all imply that  
\begin{multline*}
\sum_{j=\gamma\alpha}^n\binom{n}{j}\left(\frac{\alpha}{m}\right)^j\left(1-\frac{\alpha}{m}\right)^{n-j}\\
\leq \frac{2e^{4-0.8\alpha}\alpha^{\gamma\alpha}}{\sqrt{2\pi\gamma\alpha}}
\left(\frac{e}{\gamma\alpha}\right)^{\gamma\alpha}
+
\frac{2e^{4+0.2\alpha}\alpha^{(\lfloor 0.2m\rfloor+1)}}{\sqrt{2\pi (\lfloor 0.2m\rfloor+1)}}\left(\frac{e}{\lfloor 0.2m\rfloor+1}\right)^{\lfloor 0.2m\rfloor+1}
\\
\leq \frac{e^{4-0.8\alpha}}{\sqrt{\gamma\alpha}}
\left(\frac{e}{\gamma}\right)^{\gamma\alpha}
+
\frac{e^{4+0.2\alpha}}{\sqrt{(\lfloor 0.2m\rfloor+1)}}\left(\frac{e\alpha}{\lfloor 0.2m\rfloor+1}\right)^{\lfloor 0.2m\rfloor+1}.
\end{multline*}
This inequality and \beqref{eq:E_V} prove \beqref{eq:V}. It remains to show \beqref{eq:Vn} under the assumptions that $\sqrt{\alpha}$ is a natural number which satisfies $\sqrt{\alpha}\leq\min\{m_1,m_2\}$. Since the side length of a square $V_i$ consisting of $\alpha$ buckets is $\sqrt{\alpha}$ and since $X$ is a rectangle consisting of $m_1\times m_2$ buckets, simple combinatorics implies that the number $I_{\textnormal{sqrs}}$ of these squares is $(1+m_1-\sqrt{\alpha})(1+m_2-\sqrt{\alpha})$ (this is a natural number from the assumption on $\alpha$), and hence it is bounded above by $m_1m_2=m\leq n$. The probability that at least 
one of these sqaures contains at least $\gamma\alpha$ sites is not greater than 
$\sum_{i=1}^{I_{\textnormal{sqrs}}} P(E_{V_i})$. Since \beqref{eq:V} implies that $\sum_{i=1}^{I_{\textnormal{sqrs}}}P(E_{V_i})$  is bounded above by $n$ times the expression on the right-hand side of \beqref{eq:V}, we conclude that \beqref{eq:Vn} holds. 
\end{proof}

\begin{lem}\label{lem:EmptySquare}
In the notation and assumptions of Lemma \bref{lem:TooMuchInBox}, let $E$ be the event in which  at least one square which is composed of $\alpha\in \N$ buckets is empty of sites, where $\sqrt{\alpha}$ is an integer satisfying $\sqrt{\alpha}\leq\min\{m_1,m_2\}$. Then  the probability of $E$ is not greater than $ne^{-\alpha}$.
\end{lem}
\begin{proof}
Let $V$ be a square which is composed of $\alpha$ buckets. The probability that some site $a$ is not in $V$ is $1-(\alpha/m)$,  because the sites are distributed according to the uniform distribution. Denote by $E_{V,\textnormal{empty}}$ the probability that $V$ is empty of sites. Since the sites are independent, we have $P(E_{V,\textnormal{empty}})=(1-(\alpha/m))^n$. From Lemma \bref{lem:er} 
and the inequality $0<\alpha<m\leq n$, we have 
\begin{equation}\label{eq:P(E_V,empty)}
P(E_{V,\textnormal{empty}})=\left(1-\frac{\alpha}{m}\right)^n=\left(1-\frac{\alpha}{m}\right)^{(m/\alpha)\alpha n/m}
\leq e^{-\alpha n/m}\leq e^{-\alpha}.
\end{equation} 
Consider the set of  squares $V_i\subseteq X$ which are composed of $\alpha$ buckets. As explained at the end of the proof of Lemma \bref{lem:TooMuchInBox}, the number of these squares is $I_{\textnormal{sqrs}}:=(1+m_1-\sqrt{\alpha})(1+m_2-\sqrt{\alpha})$ and it is bounded above by $m_1m_2=m\leq n$. Since $P(E)=P(\cup_{i=1}^{I_{\textnormal{sqrs}}}E_{V_i,\textnormal{empty}})\leq \sum_{i=1}^{I_{\textnormal{sqrs}}}P(E_{V_i,\textnormal{empty}})$, it follows from \beqref{eq:P(E_V,empty)} that $P(E)\leq ne^{-\alpha}$, as claimed. 
\end{proof}

\begin{lem}\label{lem:EventEndpoint}
In the setting of  Lemma \bref{lem:TooMuchInBox}, let $\beta\in \N$ be given and assume that $\beta\leq(\min\{m_1,m_2\}-1)/2$.  Let $E$ be the event in which each  square which is contained in $X$ and composed of $\alpha_2:=(1+\lfloor 8\sqrt{2}(\beta+1.01)\rfloor)^2$ buckets contains at most $\gamma \alpha_2-1$  
sites. Given $\gamma\in \N$, let $G$ be the event in which at most 
$2(\gamma\alpha_2-1)$ distance comparisons are made in the computation of each endpoint along each ray shot from each site $p_k$, $k\in K$, using Method \bref{method:EndpointImproved}. Then 
\begin{multline}\label{eq:P(F)P(E)}
P(G)\geq P(E)\geq \\
1-n\left(\frac{e^{4-0.8\alpha_2}}{\sqrt{\gamma\alpha_2}}
\left(\frac{e}{\gamma}\right)^{\gamma\alpha_2}
+\frac{e^{4+0.2\alpha_2}}{\sqrt{(\lfloor 0.2m\rfloor+1)}}\left(\frac{e\alpha_2}{\lfloor 0.2m\rfloor+1}\right)^{\lfloor 0.2m\rfloor+1}\right). 
\end{multline}
\end{lem}
\begin{proof}
 The right-most inequality in \beqref{eq:P(F)P(E)} follows from  Lemma \bref{lem:TooMuchInBox} and the equality $P(E)=1-P(E')$ (where $E'$ is the complement of $E$) which imply that 
\begin{multline}\label{eq:P(F)}
P(E)=1-P(E')\\
\geq 1-n\left(\frac{e^{4-0.8\alpha_2}}{\sqrt{\gamma\alpha_2}}
\left(\frac{e}{\gamma}\right)^{\gamma\alpha_2}
+\frac{e^{4+0.2\alpha_2}}{\sqrt{(\lfloor 0.2m\rfloor+1)}}\left(\frac{e\alpha_2}{\lfloor 0.2m\rfloor+1}\right)^{\lfloor 0.2m\rfloor+1}\right).
\end{multline}
For showing the inequality $P(G)\geq P(E)$, it is sufficient 
to show that when $E$ holds, then $G$ holds, i.e., that $E\subseteq G$. 
The rest of the proof shows this.

Indeed, consider an arbitrary site $p_k$, an arbitrary unit vector $\theta$. According to Step \beqref{method:EndpointImproved:1.01} of Method \bref{method:EndpointImproved}, if $y$ denotes the candidate to be the first temporary endpoint along the  ray which emanates from $p_k$ in the direction of  $\theta$, then we have $y:=p_k+4\sqrt{2}(\beta+1.01)s\theta$. Now we go to Step \beqref{step:twice_y} and there are a few possibilities. In the first one, $y$ is outside $X$, namely $d(y_X,p_k)<4\sqrt{2}(\beta+1.01)s$, and we move $y$ towards $p_k$ by re-defining $y:=y_X$, where $y_X$ is the intersection of the ray with the boundary of $X$. According to Method \bref{method:EndpointImproved}, this $y$ is the first temporary endpoint and we also denote it by $\tilde{y}$. Now  we go to Step \beqref{item:RectangularListOfsites} and create the rectangular list of sites. Method \bref{method:EndpointImproved} guarantees that from now on all the distance comparisons  that will be done will be restricted to the sites contained in the rectangular list of sites, namely to the sites located in $S_I[\tilde{y},d(\tilde{y},p_k)]$. 

The integer rectangle $S_I[\tilde{y},d(\tilde{y},p_k)]$ may not be contained entirely in $X$, but from \beqref{eq:alpha2}--\beqref{eq:m1m2} its intersection with $X$ is contained in some integer rectangle $Q'$ which is both contained in $X$ and contains $(1+\lfloor 8\sqrt{2}(\beta+1.01))\rfloor)^2$ buckets. Indeed, $S_I[\tilde{y},d(\tilde{y},p_k)]\cap X$ is a rectangle whose side lengths are composed of at most $2d(\tilde{y},p_k)+1$ buckets each, and hence at most $1+\lfloor 8\sqrt{2}(\beta+1.01))\rfloor$ buckets each;  however, each of the side lengths of $X$ is composed of, according to \beqref{eq:alpha2}--\beqref{eq:m1m2}, at least $2(1+\lfloor 8\sqrt{2}(\beta+1.01))\rfloor)+1$ buckets. (As an illustration, if $X=[0,300]\times [0,300]$ buckets, $s=1$, $\beta=3$, and $\tilde{y}=(0.4,50.2)$. Then $\tilde{y}$ is in a bucket whose indices  are $(0,50)$. In addition, $4\sqrt{2}(\beta+1.01)\approx 22.68$. Hence  $S_I[\tilde{y},4\sqrt{2}(\beta+1.01)]\cap X$ is contained in the integer rectangle $[0,24]\times [27,72]$, which is contained in the integer rectangle $Q':=[0,46]\times [26,72]$ which is indeed contained in $X$ and contains $(1+\lfloor 8\sqrt{2}(\beta+1.01))\rfloor)^2=46^2$ buckets. See also Figure \bref{fig:VoronoiParallelBoxImprove2}.)

In the second possibility in Step \beqref{step:twice_y} the result is that $y$ is inside $X$ and hence, according to Step \beqref{item:S_I[y,2 beta s]}, we consider the integer rectangle $S_I[y,2\beta s]$ and check whether  $d(y,p)\leq d(y,a)$ for every site $a$ in $S_I[y,2\beta s]$. The number of distance comparisons done in this process is bounded above by $\gamma\alpha-1$, as we prove in the last paragraph of the proof. Anyhow, if indeed $d(y,p)\leq d(y,a)$ for every site $a$ in $S_I[y,2\beta s]$, then we let $y:=y_X$ and this $y$ is the first temporary endpoint which we also denote by $\tilde{y}$. Otherwise, $y$ is outside the cell of $p_k$ and we consider $y$ to be the first temporary endpoint which we also denote by $\tilde{y}$. Now  we go to Step \beqref{item:RectangularListOfsites} and create the rectangular list of sites. Method \bref{method:EndpointImproved} guarantees that from now on all the distance comparisons  that will be done will be restricted to the sites contained in the rectangular list of sites, namely to the sites located in $S_I[\tilde{y},d(\tilde{y},p_k)]$. This integer rectangle may not be contained entirely in $X$, but as explained in the previous paragraph, from \beqref{eq:alpha2}--\beqref{eq:m1m2} its intersection with $X$ is contained in some integer rectangle $Q'$ which is both contained in $X$ and contains $(1+\lfloor 8\sqrt{2}(\beta+1.01))\rfloor)^2$ buckets.

Summarizing the above-mentioned discussion, starting from Step \beqref{item:RectangularListOfsites}, we need, for the distance comparisons along the considered ray, to consider only sites located in $S_I[\tilde{y},d(\tilde{y},p_k)]$. This rectangle is by itself contained in some integer rectangle $Q'$ which is composed of at most $\alpha:=(1+\lfloor 8\sqrt{2}(\beta+1.01))\rfloor)^2$ buckets. Since we assume that the event $E$  holds,  it follows that $Q'$ contains at most $\gamma \alpha-1$ sites, and since $S_I[\tilde{y},d(\tilde{y},p_k)]$ is contained in $Q'$, also $S_I[\tilde{y},d(\tilde{y},p_k)]$ (namely, the rectangular list of sites) contains at most $\gamma \alpha-1$ sites. Because of the way in which we handle the rectangular list of sites in Method \bref{method:EndpointImproved}, it follows that along the considered ray we perform at most $\gamma\alpha-1$ distance comparisons, starting from Step \beqref{item:RectangularListOfsites}. Since in Step \beqref{item:S_I[y,2 beta s]} we also perform at most $\gamma\alpha-1$ distance comparisons (as proved in the next paragraph), and since $p_k$ and $\theta$ where arbitrary, it follows that indeed the event $E$ implies the  event $G$, as claimed.  

Finally, it remains to show that the number of distance comparisons done in Step \beqref{item:S_I[y,2 beta s]} of Method \bref{method:EndpointImproved} is at most $\gamma \alpha-1$. Indeed, $S_I[y,2\beta s]\cap X$ is a rectangle whose side lengths are composed of at most $4\beta+1$ buckets each, and hence less than $1+\lfloor 8\sqrt{2}(\beta+1.01))\rfloor$ buckets each;  however, each of the side lengths of $X$ is composed of, according to \beqref{eq:alpha2}--\beqref{eq:m1m2}, at least $2(1+\lfloor 8\sqrt{2}(\beta+1.01))\rfloor)+1$ buckets; these considerations imply that even if $S_I[y,2\beta s]$ is not contained entirely in $X$,  its intersection with $X$ is contained in some integer rectangle $Q'$ which is both contained in $X$ and contains $(1+\lfloor 8\sqrt{2}(\beta+1.01))\rfloor)^2$ buckets. Since we assume that the event $E$  holds,  it follows that $Q'$ contains at most $\gamma \alpha-1$ sites, and since $S_I[y,2\beta s]$ is contained in $Q'$, also $S_I[y,2\beta s]$  contains at most $\gamma \alpha-1$ sites. Therefore we perform at most $\gamma\alpha-1$ distance comparisons in Step \beqref{item:S_I[y,2 beta s]}, as claimed.
\end{proof}

\begin{lem}\label{lem:beta_n}
Consider the setting of Lemma \bref{lem:EventEndpoint}. Then with probability  which is at least  the expression on the right-hand side of \beqref{eq:P(F)P(E)},  for all $k\in K$ and all rays shot during the computation of the cell of $p_k$ using Method \bref{method:EndpointImproved},  
one has that $r_k$, namely the maximum number of distance comparisons done 
along the rays, is not greater than $2((1+\lfloor 8\sqrt{2}(\beta+1.01)\rfloor)^2\gamma-1)$. Moreover, with probability  which is also  at least the expression on the right-hand side of  \beqref{eq:P(F)P(E)}, the total number of distance comparisons done in the computation of the whole Voronoi diagram (using Algorithm 1 and Method \bref{method:EndpointImproved}) is not greater than  $(\gamma(1+\lfloor 8\sqrt{2}(\beta+1.01))\rfloor)^2-1)\cdot 22n$. 
\end{lem}

\begin{proof}
Lemma \bref{lem:EventEndpoint} and its proof show that $r_k\leq 2(\gamma(1+\lfloor 8\sqrt{2}(\beta+1.01))\rfloor)^2-1)$ with probability which is not smaller than the expression on the right-hand side of \beqref{eq:P(F)P(E)}. Denote by $\rho_k$ the number of rays shot during the computation of the cell of $p_k$. Then the  total number of distance comparisons 
done in the computation of all the cells is bounded by $\sum_{k=1}^n r_k\rho_k$. 
Since $X$ is a rectangle, \beqref{eq:NumberShotRays} shows that $\sum_{k=1}^n\rho_k\leq 11n$ 
and hence, with probability which is not smaller than the expression on the right-hand side of \beqref{eq:P(F)P(E)}, we have $\sum_{k=1}^n r_k\rho_k\leq 2(\gamma(1+\lfloor 8\sqrt{2}(\beta+1.01))\rfloor)^2-1)\cdot 11n$,  as claimed. 
\end{proof}

It is now possible to prove Theorem   \bref{thm:CorrectnessOfTheAlg}\beqref{item:TimeComplexityUniformDistribution}.  
\begin{proof}[{\bf Proof of Theorem   \bref{thm:CorrectnessOfTheAlg}\beqref{item:TimeComplexityUniformDistribution}}]

It is sufficient to use Lemma \bref{lem:beta_n} with certain values of $\beta$ and $\gamma$ 
and to show that the second term in \beqref{eq:P(F)} is not greater than $\epsilon$.
Let $\gamma:=3$ and let $\beta$ be defined in \beqref{eq:beta}. As explained in Remark \bref{rem:TimeComplexityUniformDistribution}, both  \beqref{eq:ln(n/epsilon)} and \beqref{eq:alpha2m} do hold for all sufficiently large $n$. The relations  \beqref{eq:ln(n/epsilon)}-\beqref{eq:alpha2},  combined with straightforward calculations, show that  $(e/\gamma)^{\gamma}<3/4$, that 
\beqref{eq:alpha-kappa-m} holds with $\alpha:=\alpha_2$, and that 
\begin{equation}\label{eq:alpha1 alpha2}
\alpha_2/\alpha_1>4\sqrt{2}. 
\end{equation}
Therefore 
\begin{equation*}
\left(\frac{e}{\gamma}\right)^{\gamma\alpha_2}\leq \left(\frac{3}{4}\right)^{4\sqrt{2}\alpha_1}<0.2^{\alpha_1}<e^{-\alpha_1}. 
\end{equation*}
Since $n>100$, it follows from \beqref{eq:alpha1} and the inequality $\textnormal{OneTwo}\geq 1$ that $\alpha_1>9$. Hence  \beqref{eq:beta} implies that $\beta>1$. Thus we obtain from \beqref{eq:alpha2} that 
\begin{equation}\label{eq:alpha_2>529}
\alpha_2>(1+\lfloor 8\sqrt{2}\cdot 2\rfloor)^2=23^2=529. 
\end{equation}
We conclude from the previous inequalities that 
\begin{equation*}
\frac{e^{4-0.8\alpha_2}}{\sqrt{\gamma\alpha}}\left(\frac{e}{\gamma}\right)^{\gamma\alpha_2}<\frac{e^{-419}}{\sqrt{1587}}e^{-\alpha_1}<e^{-400}e^{-\alpha_1}.
\end{equation*}
Next we will prove that the second term in the parenthesis in \beqref{eq:P(F)} is smaller than $e^{-400}e^{-\alpha_1}$. Indeed, since \beqref{eq:alpha2m} implies that $(e\alpha_2)/(\lfloor 0.2m\rfloor+1)<e^{-1}$,  since $0.2m<\lfloor 0.2m\rfloor+1$, since \beqref{eq:alpha2} and the fact that $m=(\lfloor\sqrt{n}\rfloor)^2\geq 100$ imply that $15+\alpha_2<m$, and since \beqref{eq:alpha_2>529} and \beqref{eq:alpha1 alpha2} hold, it follows that 
\begin{multline*}
\frac{e^{4+0.2\alpha_2}}{\sqrt{(\lfloor 0.2m\rfloor+1)}}\left(\frac{e\alpha_2}{\lfloor 0.2m\rfloor+1}\right)^{\lfloor 0.2m\rfloor+1}<e^{4+0.2\alpha_2}e^{-0.2m}<e^{4+0.2\cdot (-15\alpha_2)}\\
<e^{4-2\alpha_2}e^{-\alpha_2}<e^{4-2\alpha_2}e^{-\alpha_1}<e^{-400}e^{-\alpha_1}.
\end{multline*}
Consequently, the above-mentioned discussion and \beqref{eq:alpha1} imply that 
\begin{multline*}
n\left(\frac{e^{4-0.8\alpha}}{\sqrt{\gamma\alpha}}
\left(\frac{e}{\gamma}\right)^{\gamma\alpha}
+\frac{e^{4+0.2\alpha}}{\sqrt{(\lfloor 0.2m\rfloor+1)}}\left(\frac{e\alpha}{\lfloor 0.2m\rfloor+1}\right)^{\lfloor 0.2m\rfloor+1}\right)\\
<2ne^{-400}e^{-\alpha_1}<n(1+2e^{-400})e^{-\alpha_1}<\epsilon.
\end{multline*}
Finally, Lemma \bref{lem:beta_n} and \beqref{eq:alpha2} ensure that with probability which is at least $1-\epsilon$, the number of distance comparisons is bounded above by $(3\alpha_2-1)\cdot 22n=O(n\log(n/\epsilon))$. 
\end{proof}

\section*{Acknowledgments}
I want to thank the referee for valuable comments which helped to improve the text. I also want to express my thanks to several people regarding this paper, especially to Omri Azencot to whom I am indebted for  implementing the algorithm so expertly and for helpful discussions, and to Renjie Chen for helpful discussion regarding \cite{ChenGotsman2013jour}. Parts of this work were done during various years, when I have been associated with the following places: The Technion, Haifa, Israel (2010, 2017--2018), the University of Haifa, Haifa, Israel (2010--2011, 2020--2023), the National Institute for Pure and Applied Mathematics  (IMPA), Rio de Janeiro, Brazil (2011--2013), and the Institute of Mathematics and Computational Sciences (ICMC), University of S\~ao Paulo, S\~ao Carlos, Brazil (2014--2015), and it is an opportunity for me to thank the Gurwin Foundation, BSF, and FAPESP. Special thanks are for a special postdoc fellowship (``P\'os-doutorado de Excel\^encia'') given to me when I was at IMPA.

\bibliographystyle{acm}
\bibliography{biblio}

\begin{thebibliography}{10}

\bibitem{CGALweb}
{CGAL} - {C}omputational {G}eometry {A}lgorithms {L}ibrary, various years,
  various authors (retrieved: December 29, 2022).
\newblock {\bf \url{https://www.cgal.org}}.

\bibitem{ACGOY1988}
{\sc Aggarwal, A., Chazelle, B., Guibas, L.~J., O'D\'unlaing, C., and Yap,
  C.~K.}
\newblock Parallel computational geometry.
\newblock {\em Algorithmica 3\/} (1988), 293--327.
\newblock preliminary version in FOCS 1985, pp. 468--477.

\bibitem{AggarwalGuibasSaxeShor}
{\sc Aggarwal, A., Guibas, L.~J., Saxe, J., and Shor, P.~W.}
\newblock A linear-time algorithm for computing the {V}oronoi diagram of a
  convex polygon.
\newblock {\em Discrete Comput. Geom. 4\/} (1989), 591--604.
\newblock A preliminary version in STOC 1987, pp. 39-45.

\bibitem{AAAHJPR2009}
{\sc Aichholzer, O., Aigner, W., Aurenhammer, F., Hackl, T., J\"{u}ttler, B.,
  Pilgerstorfer, E., and Rabl, M.}
\newblock Divide-and-conquer for {V}oronoi diagrams revisited.
\newblock In {\em Proceedings of the 25th annual ACM Symposium on Computational
  Geometry (SoCG 2009)\/} (2009), pp.~189--197.

\bibitem{ABY1987}
{\sc Alevizos, P.~D., Boissonnat, J., and Yvinec, M.}
\newblock An optimal ${O}(n\log n)$ algorithm for contour reconstruction from
  rays.
\newblock In {\em Proceedings of the Third Annual ACM Symposium on
  Computational Geometry (SoCG 1987)\/} (1987), pp.~162--170.

\bibitem{AmatoGoodrichRamos1994}
{\sc Amato, N.~M., Goodrich, M.~T., and Ramos, E.~A.}
\newblock Parallel algorithms for higher-dimensional convex hulls.
\newblock {\em Proceedings of the 35th IEEE Symposium on Foundations of
  Computer Science (FOCS 1994)\/} (1994), 683--694.

\bibitem{AMNSW1998}
{\sc Arya, S., Mount, D.~M., Netanyahu, N.~S., Silverman, R., and Wu, A.}
\newblock An optimal algorithm for approximate nearest neighbor searching.
\newblock {\em Journal of the ACM 45\/} (1998), 891--923.
\newblock Extended abstract in SODA 1994, pp. 573--582.

\bibitem{AtallahChen2000}
{\sc Atallah, M.~J., and Chen, D.~Z.}
\newblock Chapter 4 - deterministic parallel computational geometry.
\newblock In {\em Handbook of Computational Geometry}, J.-R. Sack and
  J.~Urrutia, Eds. North-Holland, Amsterdam, 2000, pp.~155--200.

\bibitem{AumannRabin1994jour}
{\sc Aumann, Y., and Rabin, M.~O.}
\newblock Clock construction in fully asynchronous parallel systems and {PRAM}
  simulation.
\newblock {\em Theoret. Comput. Sci. 128\/} (1994), 3--30.

\bibitem{Aurenhammer}
{\sc Aurenhammer, F.}
\newblock {V}oronoi diagrams - a survey of a fundamental geometric data
  structure.
\newblock {\em ACM Computing Surveys 3\/} (1991), 345--405.

\bibitem{Qhull}
{\sc Barber, C.~B., Dobkin, D., and Huhdanpaa, H.}
\newblock The {Q}uickhull algorithm for convex hulls.
\newblock {\em ACM Transactions on Mathematical Software 22\/} (1996),
  469--483.

\bibitem{QhullWeb}
{\sc Barber, C.~B., and The-Geometry-Center}.
\newblock {Q}hull (software).
\newblock Copyright: 1993-2011. Web: {\bf \url{http://www.qhull.org}}.

\bibitem{BentleyWeideYao1980}
{\sc Bentley, J.~L., Weide, B.~W., and Yao, A.~C.}
\newblock Optimal expected-time algorithms for closest point problems.
\newblock {\em ACM Trans. Math. Softw. 6\/} (1980), 563--580.

\bibitem{BermanLingas1997}
{\sc Berman, P., and Lingas, A.}
\newblock A nearly optimal parallel algorithm for the {V}oronoi diagram of a
  convex polygon.
\newblock {\em Theoretical Computer Science 174\/} (1997), 193--202.

\bibitem{BHMT1999}
{\sc Blelloch, G.~E., Hardwick, J.~C., Miller, G.~L., and Talmor, D.}
\newblock Design and implementation of a practical parallel {D}elaunay
  algorithm.
\newblock {\em Algorithmica 24\/} (1999), 243--269.

\bibitem{BrodnikGrgurovic2018jour}
{\sc Brodnik, A., and Grgurovi\v{c}, M.}
\newblock Parallelization of ant system for {GPU} under the {PRAM} model.
\newblock {\em Comput. Inform. 37\/} (2018), 229--243.

\bibitem{Brown1979}
{\sc Brown, K.~Q.}
\newblock Voronoi diagrams from convex hulls.
\newblock {\em Inf. Process. Lett. 9\/} (1979), 223--228.

\bibitem{Brown1980}
{\sc Brown, K.~Q.}
\newblock {\em Geometric transforms for fast geometric algorithms}.
\newblock PhD thesis, Carnegie-Mellon University, Pittsburgh, 1980.

\bibitem{ChanChen2010}
{\sc Chan, T.~M., and Chen, E.~Y.}
\newblock Optimal in-place and cache-oblivious algorithms for 3-d convex hulls
  and 2-d segment intersection.
\newblock {\em Computational Geometry 43\/} (2010), 636--646.
\newblock Special issue on SoCG 2009 (a preliminary version in SoCG 2009, pp.
  80-89).

\bibitem{ChazelleMatousek1995}
{\sc Chazelle, B., and Matou\v{s}ek, J.}
\newblock Derandomizing an output-sensitive convex hull algorithm in three
  dimensions.
\newblock {\em Computational Geometry 5\/} (1995), 27--32.

\bibitem{ChenGotsman2013jour}
{\sc Chen, R., and Gotsman, C.}
\newblock Localizing the {D}elaunay triangulation and its parallel
  implementation.
\newblock In {\em Transactions on Computational Science XX}, M.~L. Gavrilova,
  C.~Tan, and B.~Kalantari, Eds., vol.~8110 of {\em Lecture Notes in Computer
  Science}. Springer Berlin Heidelberg, 2013, pp.~39--55.
\newblock Extended abstract in ISVD 2012, pp. 24--31.

\bibitem{ChengDeyShewchuk2012}
{\sc Cheng, S.-W., Dey, T.~K., and Shewchuk, J.~R.}
\newblock {\em Delaunay Mesh Generation}.
\newblock CRC Press, Boca Raton, Florida, USA, 2012.

\bibitem{CSKM2013}
{\sc Chiu, S.~N., Stoyan, D., Kendall, W.~S., and Mecke, J.}
\newblock {\em Stochastic {G}eometry and its {A}pplications}, third~ed.
\newblock John Wiley \& Sons, Chichester, UK, 2013.

\bibitem{Chow1980}
{\sc Chow, A.}
\newblock {\em Parallel algorithms for geometric problems}.
\newblock PhD thesis, University of Illinois, Urbana, 1980.

\bibitem{Clarkson1999}
{\sc Clarkson, K.~L.}
\newblock Nearest neighbor queries in metric spaces.
\newblock {\em Discrete and Computational Geometry 22\/} (1999), 63--93.
\newblock Preliminary version in STOC 1997, pp. 609--617.

\bibitem{ClarksonShor1989}
{\sc Clarkson, K.~L., and Shor, P.~W.}
\newblock Applications of random sampling in computational geometry, {II}.
\newblock {\em Discrete Comput. Geom 4\/} (1989), 387--421.
\newblock Preliminaries versions in SoCG 1988, pp. 1--11, pp. 12--17.

\bibitem{ColeGoodrichODunlaing1996}
{\sc Cole, R., Goodrich, M.~T., and O'D\'unlaing, C.}
\newblock A nearly optimal deterministic parallel {V}oronoi diagram algorithm.
\newblock {\em Algorithmica 16\/} (1996), 569--617.
\newblock A preliminary version in ICALP 1990, pp. 432--445.

\bibitem{ColeYap1987}
{\sc Cole, R., and Yap, C.~K.}
\newblock Shape from probing.
\newblock {\em J. Algorithms 8\/} (1987), 19--38.

\bibitem{ConwaySloane}
{\sc Conway, J.~H., and Sloane, N. J.~A.}
\newblock {\em Sphere {P}ackings, {L}attices, and {G}roups}, third~ed.
\newblock Springer-Verlag, New York, 1999.

\bibitem{CLRS-book-2001}
{\sc Cormen, T.~H., Leiserson, C.~E., Rivest, R.~L., and Stein, C.}
\newblock {\em Introduction to Algorithms}, third~ed.
\newblock The MIT Press, Cambridge, Massachusetts, United States, 2009.

\bibitem{DadounKirkpatrick}
{\sc Dadoun, N., and Kirkpatrick, D.~G.}
\newblock Parallel construction of subdivision hierarchies.
\newblock {\em J. Comput. Syst. Sci. 39\/} (1989), 153--165.

\bibitem{Dantzig1951}
{\sc Dantzig, G.~B.}
\newblock Maximization of linear function of variables subject to linear
  inequalities.
\newblock In {\em Activity Analysis of Production and Allocation}, T.~C.
  Koopmans, Ed. Wiley \& Chapman-Hall, New York-London, 1951, pp.~339--347.

\bibitem{Dantzig1963}
{\sc Dantzig, G.~B.}
\newblock {\em Linear programming and extensions}.
\newblock Princeton University Press, Princeton, N.J., 1963.

\bibitem{DFR-C1996}
{\sc Dehne, F., Fabri, A., and Rau-Chaplin, A.}
\newblock Scalable parallel computational geometry for coarse grained
  multicomputers.
\newblock {\em Int. J. Comput. Geom. Appl. 06\/} (1996), 379--400.

\bibitem{DuEmelianenkoJu2006}
{\sc Du, Q., Emelianenko, M., and Ju, L.}
\newblock Convergence of the {L}loyd algorithm for computing centroidal
  {V}oronoi tessellations.
\newblock {\em SIAM J. Numer. Anal. 44\/} (2006), 102--119.

\bibitem{VoronoiCVD_Review}
{\sc Du, Q., Faber, V., and Gunzburger, M.}
\newblock Centroidal {V}oronoi tessellations: applications and algorithms.
\newblock {\em SIAM Rev. 41\/} (1999), 637--676.

\bibitem{Dwyer1991}
{\sc Dwyer, R.}
\newblock Higher-dimensional {V}orono\u\i\ diagrams in linear expected time.
\newblock {\em Discrete Comput. Geom. 6\/} (1991), 343--367.
\newblock A preliminary version in SoCG 1989, pp. 326--333.

\bibitem{Edelsbrunner-book-1987}
{\sc Edelsbrunner, H.}
\newblock {\em Algorithms in combinatorial geometry}.
\newblock Springer-Verlag, Berlin;: New York, 1987.

\bibitem{EvansStojmenovic1989}
{\sc Evans, D.~J., and Stojmenovi\'c, I.}
\newblock On parallel computation of {V}oronoi diagrams.
\newblock {\em Parallel Computing 12\/} (1989), 121--125.

\bibitem{Feller1957book}
{\sc Feller, W.}
\newblock {\em An introduction to probability theory and its applications.
  {V}ol. {I}}, second~ed.
\newblock John Wiley and Sons, New York; Chapman and Hall, London, 1957.

\bibitem{Fortune1987}
{\sc Fortune, S.}
\newblock A sweepline algorithm for {V}oronoi diagrams.
\newblock {\em Algorithmica 2\/} (1987), 153--174.
\newblock A preliminary version in SoCG 1986, pp. 313--322.

\bibitem{FortuneWyllie1978incol}
{\sc Fortune, S., and Wyllie, J.}
\newblock Parallelism in random access machines.
\newblock In {\em Conference {R}ecord of the {T}enth {A}nnual {ACM} {S}ymposium
  on {T}heory of {C}omputing (STOC 1978), {S}an {D}iego, {C}alif., 1978}. ACM,
  New York, 1978, pp.~114--118.

\bibitem{FragakisOnate2008}
{\sc Fragakis, Y., and O{\~n}ate, E.}
\newblock Parallel {D}elaunay triangulation for particle finite element
  methods.
\newblock {\em Commun. Numer. Meth. Engng. 24\/} (2008), 1009--1017.

\bibitem{GibbonsSpirakis1993book}
{\sc Gibbons, A., and Spirakis, P.}, Eds.
\newblock {\em Lectures on {P}arallel {C}omputation}, vol.~4 of {\em Cambridge
  International Series on Parallel Computation}.
\newblock Cambridge University Press, Cambridge, 1993.

\bibitem{VoronoiWeb}
{\sc Gold, C.}
\newblock The {V}oronoi {W}eb {S}ite, last updated: June 4, 2008.
\newblock
  {\scriptsize{\bf\url{http://www.voronoi.com/wiki/index.php?title=Main_Page}}}.

\bibitem{Gonzalez2016jour}
{\sc Gonz\'alez, R.}
\newblock {PARAVT}: Parallel {V}oronoi tessellation code.
\newblock {\em Astronomy and Computing 17\/} (2016), 80--85.

\bibitem{Goodrich1987jour}
{\sc Goodrich, M.~T.}
\newblock Finding the convex hull of a sorted point set in parallel.
\newblock {\em Inform. Process. Lett. 26\/} (1987), 173--179.

\bibitem{Goodrich2004}
{\sc Goodrich, M.~T.}
\newblock Parallel algorithms in geometry.
\newblock In {\em Handbook of {D}iscrete and {C}omputational {G}eometry}, J.~E.
  Goodman and J.~O'Rourke, Eds., 2nd~ed. Chapman and Hall/CRC Press, Boca
  Raton, FL, USA, 2004, pp.~953--967.

\bibitem{GoodrichODunlaingYap1993}
{\sc Goodrich, M.~T., O'D\'unlaing, C., and Yap, C.}
\newblock Computing the {V}oronoi diagram of a set of line segments in
  parallel.
\newblock {\em Algorithmica 9\/} (1993), 128--141.
\newblock Preliminary version in LNCS WADS 1989, pp. 12--23.

\bibitem{GreenSibson1977}
{\sc Green, P.~J., and Sibson, R.}
\newblock Computing {D}irichlet tessellations in the plane.
\newblock {\em Comput. J. 21\/} (1977), 168--173.

\bibitem{GruberLek}
{\sc Gruber, P.~M., and Lekkerkerker, C.~G.}
\newblock {\em Geometry of {N}umbers}, second~ed.
\newblock North Holland, 1987.

\bibitem{GuibasKnuthSharir}
{\sc Guibas, L., Knuth, D., and Sharir, M.}
\newblock Randomized incremental construction of {D}elaunay and {V}oronoi
  diagrams.
\newblock {\em Algorithmica 7\/} (1992), 381--413.
\newblock A preliminary version in ICALP 1990, pp. 414-431.

\bibitem{GuibasStolfi1985}
{\sc Guibas, L.~J., and Stolfi, J.}
\newblock Primitives for the manipulation of general subdivisions and the
  computation of {V}oronoi diagrams.
\newblock {\em ACM Trans. Graph. 4\/} (1985), 74--123.

\bibitem{HagerapKatajainen1993}
{\sc Hagerup, T., and Katajainen, J.}
\newblock Improved parallel bucketing algorithms for proximity problems.
\newblock In {\em Proceeding of the 26th Hawaii International Conference on
  System Sciences\/} (1993), vol.~2, pp.~318--327.

\bibitem{Hoare1962}
{\sc Hoare, C. A.~R.}
\newblock Quicksort.
\newblock {\em The Computer Journal 5\/} (1962), 10--16.

\bibitem{HALH2005}
{\sc Hwang, G.~J., Arul, J.~M., Lin, E., and Hung, C.-Y.}
\newblock Design and multithreading implementation of the wave-front algorithm
  for constructing {V}oronoi diagrams.
\newblock In {\em Distributed and Parallel Computing}, M.~Hobbs, A.~M.
  Goscinski, and W.~Zhou, Eds., vol.~3719 of {\em ICA3PP 2005, Lecture Notes in
  Computer Science}. 2005, pp.~257--266.

\bibitem{Klee}
{\sc Klee, V.}
\newblock On the complexity of {$d$}-dimensional {V}orono\u\i\ diagrams.
\newblock {\em Arch. Math. 34\/} (1980), 75--80.

\bibitem{KleeMinty1972}
{\sc Klee, V., and Minty, G.~J.}
\newblock How good is the simplex algorithm?
\newblock In {\em Inequalities III (Proceedings of the Third Symposium on
  Inequalities held at the University of California, Los Angeles, Calif.,
  September 1-9, 1969, dedicated to the memory of Theodore S. Motzkin)},
  O.~Shisha, Ed. Academic Press, New York-London, 1972, pp.~159--175.

\bibitem{LeeJou1995}
{\sc Lee, F., and Jou, R.}
\newblock Efficient parallel geometric algorithms on a mesh of trees.
\newblock In {\em Proceedings of the 33rd annual ACM Southeast Regional
  Conference\/} (1995), pp.~213--218.

\bibitem{LevcopoulosKatajainenLingas1988}
{\sc Levcopoulos, C., Katajainen, J., and Lingas, A.}
\newblock An optimal expected-time parallel algorithm for {V}oronoi diagrams.
\newblock In {\em SWAT 1988}, R.~Karlsson and A.~Lingas, Eds., vol.~318 of {\em
  Lecture Notes in Computer Science}. Springer Berlin Heidelberg, 1988,
  pp.~190--198.

\bibitem{LuLazarRycroft2022prep}
{\sc Lu, Z., Lazar, E., and Rycroft, C.}
\newblock An extension to {V}oro++ for multithreaded computation of {V}oronoi
  cell.
\newblock arXiv:2209.11606 ([v1], 23 Sep 2022).

\bibitem{MacKenzieStout1998}
{\sc MacKenzie, P.~D., and Stout, Q.~F.}
\newblock Ultrafast expected time parallel algorithms.
\newblock {\em Journal of Algorithms 26\/} (1998), 1--33.
\newblock preliminary version in SODA 1991, pp. 414--423.

\bibitem{Meyerhenke2005}
{\sc Meyerhenke, H.}
\newblock Constructing higher-order {V}oronoi diagrams in parallel.
\newblock In {\em EWCG 2005}, pp.~123--126.

\bibitem{OhyaIriMurota1984}
{\sc Ohya, T., Iri, M., and Murota, K.}
\newblock Improvements of the incremental methods for the {V}oronoi diagram
  with computational comparison of various algorithms.
\newblock {\em J. Operations Res. Soc. Japan 27\/} (1984), 306--337.

\bibitem{OBSC}
{\sc Okabe, A., Boots, B., Sugihara, K., and Chiu, S.~N.}
\newblock {\em Spatial {T}essellations: {C}oncepts and {A}pplications of
  {V}oronoi {D}iagrams}, second~ed.
\newblock Wiley Series in Probability and Statistics. John Wiley \& Sons Ltd.,
  Chichester, 2000.
\newblock with a foreword by D. G. Kendall.

\bibitem{ORourke1994}
{\sc O'Rourke, J.}
\newblock {\em Computational geometry in {C}}.
\newblock Cambridge University Press, New York, 1994.

\bibitem{Padilla2017misc}
{\sc Padilla, T.}
\newblock How many particles in the {U}niverse? - {N}umberphile.
\newblock {\em YouTube\/} (10 Jul 2017).
\newblock available at: {\bf
  \url{https://www.youtube.com/watch?v=lpj0E0a0mlU}}.

\bibitem{RajasekaranRamaswami2002}
{\sc Rajasekaran, S., and Ramaswami, S.}
\newblock Optimal parallel randomized algorithms for the {V}oronoi diagram of
  line segments in the plane.
\newblock {\em Algorithmica 33\/} (2002), 436--460.
\newblock A preliminary version in SoCG 1994, pp. 57--66.

\bibitem{Ramaswami1998}
{\sc Ramaswami, S.}
\newblock Parallel randomized techniques for some fundamental geometric
  problems: A survey.
\newblock In {\em Parallel and Distributed Processing}, J.~Rolim, Ed.,
  vol.~1388 of {\em Lecture Notes in Computer Science}. Springer Berlin
  Heidelberg, 1998, pp.~400--407.

\bibitem{Vdream2017web}
{\sc Reem, D.}
\newblock Vdream: a computer program for the approximate computation of
  {V}oronoi and (double) zone diagrams of general sites and norms.
\newblock (current version: [v3.2], 13 November 2017), available at:
  {\scriptsize{\bf\url{https://doi.org/10.5281/zenodo.1048204}}}.

\bibitem{ReemISVD2009proc}
{\sc Reem, D.}
\newblock An algorithm for computing {V}oronoi diagrams of general generators
  in general normed spaces.
\newblock In {\em Proceedings of the sixth annual {IEEE} International
  Symposium on {V}oronoi Diagrams in Science and Engineering ({ISVD} 2009),
  Copenhagen, Denmark\/} (2009), pp.~144--152.

\bibitem{Reem2018jour}
{\sc Reem, D.}
\newblock On the computation of zone and double zone diagrams.
\newblock {\em Discrete $\&$ Computational Geometry 59\/} (2018), 253--292.
\newblock arXiv:1208.3124 [cs.CG] (2012) (current version: [v6], 31 Dec 2017).

\bibitem{ReemReich2018jour(Polarity)}
{\sc Reem, D., and Reich, S.}
\newblock Fixed points of polarity type operators.
\newblock {\em J. Math. Anal. Appl. 467\/} (2018), 1208--1232.
\newblock arXiv:1708.09741 [math.FA] (current version: [v4], 8 Apr 2019).

\bibitem{ReifSen1992}
{\sc Reif, J.~H., and Sen, S.}
\newblock Optimal parallel randomized algorithms for three dimensional convex
  hulls and related problems.
\newblock {\em SIAM J. Comput. 21\/} (1992), 466--485.
\newblock Erratum: SIAM J. Comput. 23 (1994), 447--448.

\bibitem{Robbins1955jour}
{\sc Robbins, H.}
\newblock A remark on {S}tirling's formula.
\newblock {\em Amer. Math. Monthly 62\/} (1955), 26--29.

\bibitem{Rockafellar1970}
{\sc Rockafellar, R.~T.}
\newblock {\em Convex {A}nalysis}.
\newblock Princeton Mathematical Series, No. 28. Princeton University Press,
  Princeton, NJ, USA, 1970.

\bibitem{Roos1994conf}
{\sc Roos, T.}
\newblock Maintaining {V}oronoi diagrams in parallel.
\newblock In {\em {P}roceedings of the Twenty-Seventh {H}awaii {I}nternational
  {C}onference on System Sciences\/} (1994), vol.~2, pp.~179--186.

\bibitem{Rycroft2009jour}
{\sc Rycroft, C.}
\newblock {V}oro++: A three-dimensional {V}oronoi cell library in {C}++.
\newblock {\em Chaos: An Interdisciplinary Journal of Nonlinear Science 19\/}
  (2009), 041111.
\newblock (see {\bf \url{http://math.lbl.gov/voro++/}} for an implementation).

\bibitem{Schwarzkopf1989}
{\sc Schwarzkopf, O.}
\newblock Parallel computation of discrete {V}oronoi diagrams.
\newblock {\em LNCS 349\/} (1989), 193--204.
\newblock (Proc. of STACS 1989).

\bibitem{Sen1989}
{\sc Sen, S.}
\newblock {\em Random sampling techniques for efficient parallel algorithms in
  computational geometry}.
\newblock PhD thesis, Duke University, 1989.

\bibitem{SetterSharirHalperin2010}
{\sc Setter, O., Sharir, M., and Halperin, D.}
\newblock Constructing two-dimensional {V}oronoi diagrams via
  divide-and-conquer of envelopes in space.
\newblock {\em Transactions on Computational Science IX\/} (2010), 1--27.
\newblock a preliminary version in {ISVD} 2009, pp. 43--52.

\bibitem{ShamosHoey1975}
{\sc Shamos, M.~I., and Hoey, D.}
\newblock Closest-point problems.
\newblock In {\em Proceedings of the 16th Annual IEEE Symposium on Foundations
  of Computer Science ({FOCS} 1975)\/} (1975), pp.~151--162.

\bibitem{SharirAgarwal}
{\sc Sharir, M., and Agarwal, P.}
\newblock {\em Davenport-{S}chinzel sequences and their geometric
  applications}.
\newblock Cambridge University Press, Cambridge ;: New York, 1995.

\bibitem{Shewchuk1996}
{\sc Shewchuk, J.~R.}
\newblock {T}riangle: {E}ngineering a {2D} {Q}uality {M}esh {G}enerator and
  {D}elaunay {T}riangulator.
\newblock In {\em Applied Computational Geometry: Towards Geometric
  Engineering}, M.~C. Lin and D.~Manocha, Eds., vol.~1148 of {\em Lecture Notes
  in Computer Science}. Springer-Verlag, 1996, pp.~203--222.
\newblock From the First ACM Workshop on Applied Computational Geometry.

\bibitem{Shewchuk2002}
{\sc Shewchuk, J.~R.}
\newblock {D}elaunay refinement algorithms for triangular mesh generation.
\newblock {\em Comput. Geom. 22\/} (2002), 21--74.

\bibitem{SpielmanTeng2004}
{\sc Spielman, D.~A., and Teng, S.-H.}
\newblock Smoothed analysis: Why the simplex algorithm usually takes polynomial
  time.
\newblock {\em Journal of the ACM 51\/} (2004), 385--463.
\newblock A preliminary version in STOC 2001, pp. 296--305.

\bibitem{SpielmanTengUngor2007}
{\sc Spielman, D.~A., Teng, S.-H., and {\"U}ng\"or, A.}
\newblock Parallel {D}elaunay refinement: Algorithms and analyses.
\newblock {\em International Journal of Computational Geometry and Applications
  17\/} (2007), 1--30.

\bibitem{StarinshakOwenJohnson2014jour}
{\sc Starinshak, D.~P., Owen, J.~M., and Johnson, J.~N.}
\newblock A new parallel algorithm for constructing {V}oronoi tessellations
  from distributed input data.
\newblock {\em Computer Physics Communications 185\/} (2014), 3204--3214.

\bibitem{Boost-web}
{\sc Sydorchuk, A.}
\newblock Boost {C}++ {L}ibraries, {V}oronoi {D}iagram, 2010--2013.
\newblock
  {\bf\url{https://www.boost.org/doc/libs/1_76_0/libs/polygon/doc/voronoi_diagram.htm}},
  Retrieved on December 29, 2022.

\bibitem{TrefftzSzakas2003}
{\sc Trefftz, C., and Szakas, J.}
\newblock Parallel algorithms to find the {V}oronoi diagram and the order-k
  {V}oronoi diagram.
\newblock In {\em Proc. of the International Parallel and Distributed
  Processing Symposium ({IPDPS} 2003)}.

\bibitem{VemuriVaradarajanMayya1992}
{\sc Vemuri, B.~C., Varadarajan, R., and Mayya, N.}
\newblock An efficient expected time parallel algorithm for {V}oronoi
  construction.
\newblock In {\em Proceedings of {SPAA} 1992}, pp.~392--401.

\bibitem{Villanueva2009-2018misc}
{\sc Villanueva, J.~C.}
\newblock How many atoms are there in the universe?
\newblock {\em Universe Today: Space and astronomy news\/} (30 Jul 2009).
\newblock available at: {\bf
  \url{https://www.universetoday.com/36302/atoms-in-the-universe/ }}.

\bibitem{WangCuiRuiChengYingxiaWuYuan2014jour}
{\sc Wang, J., Cui, C., Rui, Y., Cheng, L., Yingxia, P., Wu, W., and Yuan, Z.}
\newblock A parallel algorithm for constructing {V}oronoi diagrams based on
  point-set adaptive grouping.
\newblock {\em Concurrency and Computation: Practice and Experience 26\/}
  (2014), 434--446.

\bibitem{WuRuiSuChengWang2014jour}
{\sc Wu, W., Rui, Y., Su, F., Cheng, L., and Wang, J.}
\newblock Novel parallel algorithm for constructing {D}elaunay triangulation
  based on a twofold-divide-and-conquer scheme.
\newblock {\em GIScience $\&$ Remote Sensing 51\/} (2014), 537--554.

\bibitem{Wyllie1979PhD}
{\sc Wyllie, J.}
\newblock {\em The complexity of parallel computations}.
\newblock PhD thesis, Cornell University, Itacha, N.Y., United States, 1979.

\bibitem{XinWangXiaMueller-WittigWangHe2013jour}
{\sc Xin, S.-Q., Wang, X., Xia, J., Mueller-Wittig, W., Wang, G.-J., and He,
  Y.}
\newblock Parallel computing 2{D} {V}oronoi diagrams using untransformed
  sweepcircles.
\newblock {\em Comput.-Aided Des. 45\/} (2013), 483--493.

\end{thebibliography}

\end{document}